\newcommand{\E}{E}	
\newcommand{\R}{\mathbb{R}} 
\DeclareMathOperator*{\tr}{\mathrm{tr}}
\DeclareMathOperator*{\diag}{\mathrm{diag}}
\DeclareMathOperator*{\argmin}{arg\,min}
\DeclareMathOperator*{\argmax}{arg\,max}
\newtheorem{thm}{Theorem}
\newtheorem{prop}{Proposition}
\newtheorem{lemma}{Lemma}
\newtheorem{cor}{Corollary}
\newtheorem{remark}{Remark}
\theoremstyle{plain}
\newtheorem{assumption}{Assumption}
\title{Introducing prior information in Weighted Likelihood Bootstrap with applications to model misspecification}
\date{}
\author{Emilia Pompe\thanks{email: emilia.pompe@stats.ox.ac.uk}}
\affil{\small{Department of Statistics, University of Oxford, United Kingdom}}
\begin{document}
	
	\maketitle
	
	\begin{abstract}
		We propose Posterior Bootstrap, a set of algorithms extending Weighted Likelihood Bootstrap, to properly incorporate prior information and address the problem of model misspecification in Bayesian inference. We consider two approaches to incorporating prior knowledge: the first is based on penalization of the Weighted Likelihood Bootstrap objective function, and the second uses pseudo-samples from the prior predictive distribution. We also propose methodology for hierarchical models, which was not previously known for methods based on Weighted Likelihood Bootstrap. Edgeworth expansions guide the development of our methodology and allow us to provide greater insight on properties of Weighted Likelihood Bootstrap than were previously known. Our experiments confirm the theoretical results and show a reduction in the impact of model misspecification against Bayesian inference in the misspecified setting. 
		
\end{abstract}

	\section{Introduction}\label{section:introduction}
	\subsection{Model misspecification and robust inference in Bayesian statistics}\label{section:model_misspecification}
	Bayesian inference relies on an assumption that the true data-generating mechanism is known. However, in practice the models only provide an approximate reflection of  the reality, so model misspecification is a frequent concern \citep{de2013bayesian,walker13,lyddon2019general}. Throughout the paper we write $x_1, \ldots, x_n \sim f(x|\theta)$ to denote that $x_1, \ldots, x_n$ are independent and identically distributed according to a distribution with density $f(x|\theta)$. We consider a generic parametric Bayesian model for parameter of interest $\theta\in \R^d$
	\begin{align}\label{eq:Bayesian_model}
	\begin{split}
	x_1, \ldots, x_n  \sim f(x|\theta), \quad \theta \sim  \pi(\theta), \quad \theta \in \Theta, \end{split}
	\end{align}
	with prior $\pi$ and likelihood function $f$.
	
	Misspecification of model \eqref{eq:Bayesian_model}
	occurs when data is generated from a distribution $P^*$ with density $p^*$, which is not equal to $f(x|\theta)$ for any $\theta \in \Theta$.  In this case under regularity conditions the Bayesian posterior  asymptotically concentrates around the pseudo-true parameter $\theta^*$ minimizing $D_{KL}(p^*||f)$, the Kullback--Leibler divergence between $f(\cdot|\theta),$ $\theta \in \Theta,$ and $p^*$ \citep{berk1966limiting, berk1970consistency, bunke1998asymptotic}. 
	Applying standard Bayesian inference to misspecified models leads to misleading uncertainty estimates \citep{kleijn2012bernstein}, so there has been increasing interest in robust inference schemes \citep{grunwald2017inconsistency, jewson2018principles, hong20}.
	
	One method that has gained a lot of popularity replaces the likelihood $f(x|\theta)$ by a power likelihood $f(x|\theta)^{\eta}$ for a parameter ${\eta}>0$ \citep{jiang2008gibbs, grunwald2012safe, bissiri2016general, grunwald2017inconsistency}. The interpretation of $\eta$ is that it represents the learning rate, or the analyst's trust in the model. For $\eta<1$ the data have less influence than in the standard Bayesian inference, while using $\eta>1$  places more trust in the likelihood.
	
Resampling schemes are another tool for addressing Bayesian model misspecification \citep{waddell2002very, douady2003comparison}. The BayesBag algorithm \citep{buhlmann2014discussion, huggins2019using, huggins2020robust} combines samples from posteriors on bootstrapped data sets. Asymptotic theory for BayesBag was proved by \cite{huggins2019using}. 

Another group of robust inference methods is based on Weighted Likelihood Bootstrap \citep{newton1991weighted, newton1994approximate,lyddon2019general}, and this is also a starting point for our approach. 

	In this paper we develop Posterior Bootstrap, a set of scalable algorithms for performing principled inference in Bayesian models under possible model misspecification.  We extend Weighted Likelihood Bootstrap to incorporate prior information. Several  approaches to this task have been recently proposed in the literature \citep{ lyddon2018nonparametric, fong2019scalable, newton2018weighted}, to our knowledge however, this is the first attempt to analyse theoretically the impact of various ways of incorporating the prior.  
	Edgeworth expansions show that prior beliefs about the parameter of interest are properly represented in the inference in a similar way to standard Bayesian inference, and guide our choice of hyperparameters of the algorithms. Moreover, we propose a robust inference scheme for hierarchical models, which has not been done so far for methods based on Weighted Likelihood Bootstrap or power posteriors.

	\subsection{Properties of Weighted Likelihood Bootstrap}\label{section:reminders_wlb}
	Weighted Likelihood Bootstrap was originally proposed as a method for sampling approximately from a posterior distribution. Samples are drawn by independently optimizing the weighted log likelihood
	\begin{equation}\label{eq:standard_wlb_formula}
	\argmax_{\theta} \left\{\sum_{i=1}^{n} w_i \log f(x_{i}| \theta)\right\}, \quad w_1, \ldots, w_n \sim\text{Exp}(1).
	\end{equation}
	Combined with an adjustment by the Sequential Importance Resampling algorithm \citep{gordon1993novel, doucet2000sequential}, which could correct for the fact that Weighted Likelihood Bootstrap is approximate, it was an alternative to then emerging Markov chain Monte Carlo methods. A key advantage of this algorithm is that it allows to perform computations fully in parallel, therefore a recent resurgence of interest in this method  \citep{newton2018weighted, lyddon2019general, ng2020random, nie2020bayesian} can partly be explained by developments in parallel computing. Even though Weighted Likelihood Bootstrap appeared in the Bayesian literature, it does not incorporate a prior. A similar idea, called perturbing the minimand, was developed independently in the frequentist literature \citep{jin2001simple}. The analogy between these two methods was noticed by \cite{parzen2007perturbing}. Some of the frequentist approaches consider adding a lasso-type regularization term \citep{minnier2011perturbation, das2019perturbation}, which from the Bayesian point of view could be interpreted as adding a prior.  
	
	Asymptotic results for the Weighted Likelihood Bootstrap were proved by \cite{newton1991weighted} under a correctly specified model and extended by \cite{lyddon2019general} to the misspecified case. Let the maximum likelihood estimator $\hat{\theta}$ be $\argmax_{\theta \in \Theta}\sum_{i=1}^n f(x_i|\theta)$, and let $\tilde{\theta}$ be a draw from the Weighted Likelihood Bootstrap algorithm based on observations $x_1, \ldots, x_n$. Theorem~1 of \cite{lyddon2019general} states that for any measurable set $A$ under standard regularity conditions 
	\begin{equation}\label{eq:bvm_wlb}
	\text{pr} \left\{ n^{1/2} \left( \tilde{\theta} - \hat{\theta} \right) \in A  \big| x_{1:n} \right\} \to \text{pr} \left(Z \in A\right), \quad   Z \sim N\left\{0, J(\theta^*)^{-1} I(\theta^*) J(\theta^*)^{-1} \right\},
	\end{equation}
	where
	\begin{equation}\label{eq:I_J_matrices}
	I(\theta) = E_{p^*} \left[\left\{\nabla_{\theta} \log f(X| \theta)\right\} \left\{\nabla_{\theta} \log f(X| \theta)\right\}^T\right], \quad J(\theta) = -  E_{p^*}\left[ D_{\theta}^2\left\{  \log f(X| \theta)\right\}\right],
	\end{equation}
	and $\theta^*$ is the unique pseudo-true parameter minimizing $D_{KL}(p^*||f)$. The asymptotic covariance matrix in \eqref{eq:bvm_wlb} is the so-called sandwich covariance matrix known from the frequentist literature on model misspecification \citep{huber1967behavior} as an asymptotic covariance matrix of the maximum likelihood estimator under a potentially misspecified model. 
	
	We let $\Pi(\cdot| x_{1:n})$ be the Bayesian posterior distribution given observations $x_1, \ldots, x_n$ under model~\eqref{eq:Bayesian_model}, and by $\pi(\cdot|x_{1:n})$ we denote its density. Recall that the Bernstein--von Mises theorem for misspecified Bayesian models \citep{kleijn2012bernstein} states that
	\begin{equation}\label{eq:bvm_misspec}
	\Pi \left\{n^{1/2} \left( \theta - \hat{\theta} \right) \in A \big| x_{1:n}\right\} \to \text{pr} \left(Z' \in A\right),\quad Z' \sim N\left\{0, J(\theta^*)^{-1} \right\}.
	\end{equation}
	Convergence in \eqref{eq:bvm_misspec} holds in total variation, while \cite{lyddon2019general} considered convergence  in distribution. We show however in Corollary \ref{prop:convergence_of_densities} that convergence in  \eqref{eq:bvm_wlb} also holds in total variation. Note that under both standard Bayesian inference and Weighted Likelihood Bootstrap the asymptotic distribution is concentrated around the same mean, but with potentially different covariance matrices.   Comparing \eqref{eq:bvm_wlb} and \eqref{eq:bvm_misspec}, we conclude that for misspecified models under Weighted Likelihood Bootstrap credible sets are valid confidence sets in the frequentist sense, which is not the case for standard Bayesian posteriors \citep{kleijn2012bernstein}.  For correctly specified models, the covariance matrices in \eqref{eq:bvm_wlb} and \eqref{eq:bvm_misspec} are equal.

	Results obtained by  \cite{fushiki2005bootstrap, fushiki2010bayesian} explain why  Weighted Likelihood Bootstrap corrects for model misspecification in comparison with standard Bayesian inference. In particular, these two articles analyse the risk function
	\begin{equation}\label{eq:risk_function}
	\mathbb{E}_{x_{1:n}} \left[D_{KL}\left\{p^*||\hat{p}(\cdot|x_{1:n})\right\} \right] = \int p^*(x_{1:n}) \left\{\int p^*(x_{n+1}) \log \frac{p^*(x_{n+1})}{\hat{p}(x_{n+1}|x_{1:n})} \mathrm{d}x_{n+1} \right\} \mathrm{d}x_{1:n},
	\end{equation}
	where $\hat{p}(\cdot|x_{1:n})$ is the predictive distribution given data $x_{1:n}$ under a given method of inference. Proposition \ref{proposition:fushiki} below is a direct corollary of Theorem~1 of \cite{fushiki2010bayesian} and Theorems 1 and 3 of \cite{fushiki2005bootstrap}, see also \cite{lyddon2018nonparametric}. 
	\begin{prop}\label{proposition:fushiki}
		Consider model \eqref{eq:Bayesian_model} and assume that  $\theta^* \in \Theta$ minimizing $D_{KL}(p^*||f)$ is unique. Suppose that $I(\theta^*)$  and $J(\theta^*)$ are positive definite, and that $I(\theta^*) \neq J(\theta^*)$. Then the predictive distribution based on Weighted Likelihood Bootstrap asymptotically provides better prediction than the Bayesian predictive distribution, that is, the value of the empirical risk \eqref{eq:risk_function} is asymptotically smaller.
	\end{prop}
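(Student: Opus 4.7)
The plan is to reduce the proposition to two asymptotic expansions of the expected KL risk~\eqref{eq:risk_function}, one for the Bayesian predictive distribution (Theorem~1 of \cite{fushiki2010bayesian}) and one for the predictive distribution based on Weighted Likelihood Bootstrap (Theorems 1 and 3 of \cite{fushiki2005bootstrap}). The first step is to verify that the standing assumptions---uniqueness of $\theta^*$, positive-definiteness of $I(\theta^*)$ and $J(\theta^*)$, and the regularity needed for \eqref{eq:bvm_wlb} and \eqref{eq:bvm_misspec}---imply the hypotheses of those theorems; the setup is essentially identical, so this should be routine.

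The heart of both expansions is a Taylor development of $\log \hat p(y|x_{1:n})$, where $\hat p$ is obtained by mixing $f(y|\theta)$ against the relevant posterior-type distribution of mean $\mu_n$ and covariance $\Sigma_n$. Using the identity $\nabla_\theta^2 f / f = \nabla_\theta^2 \log f + (\nabla_\theta \log f)(\nabla_\theta \log f)^T$ one obtains
\begin{equation*}
\log \hat p(y|x_{1:n}) \approx \log f(y|\hat\theta) + \mu_n^T \nabla_\theta \log f(y|\hat\theta) + \tfrac{1}{2}\tr\!\left\{\Sigma_n\bigl[\nabla_\theta^2 \log f(y|\hat\theta) + (\nabla_\theta \log f)(\nabla_\theta \log f)^T\bigr]\right\}.
\end{equation*}
Integrating against $p^*$ converts the two Hessian-like factors into $-J(\theta^*)$ and $I(\theta^*)$, and a further Taylor expansion of $\theta \mapsto D_{KL}(p^*||f(\cdot|\theta))$ around $\theta^*$ combined with the sandwich-type law of $\hat\theta - \theta^*$ produces the $O(1/n)$ correction from MLE fluctuations. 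Substituting $\Sigma_n = J(\theta^*)^{-1}/n$ for the Bayesian posterior (via \eqref{eq:bvm_misspec}) and $\Sigma_n = J(\theta^*)^{-1}I(\theta^*)J(\theta^*)^{-1}/n$ for WLB (via \eqref{eq:bvm_wlb}), routine algebra then yields
\begin{equation*}
\mathbb{E}_{x_{1:n}}\bigl[D_{KL}\bigl(p^*||\hat p_{\mathrm{Bayes}}\bigr)\bigr] - \mathbb{E}_{x_{1:n}}\bigl[D_{KL}\bigl(p^*||\hat p_{\mathrm{WLB}}\bigr)\bigr] = \tfrac{1}{2n}\tr\!\left\{\bigl(J(\theta^*)^{-1}I(\theta^*) - \mathrm{Id}_d\bigr)^2\right\} + o(n^{-1}),
\end{equation*}
where $\mathrm{Id}_d$ is the $d \times d$ identity matrix. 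Since $J(\theta^*)^{-1}I(\theta^*)$ is similar to the symmetric positive-definite matrix $J(\theta^*)^{-1/2}I(\theta^*)J(\theta^*)^{-1/2}$, its eigenvalues $\lambda_1,\ldots,\lambda_d$ are positive reals, so the trace equals $\sum_i (\lambda_i-1)^2$ and is strictly positive whenever some $\lambda_i \neq 1$, equivalently whenever $I(\theta^*) \neq J(\theta^*)$.

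The main obstacle I anticipate is the uniform integrability needed to justify the interchange of the pointwise Taylor expansion with $\mathbb{E}_{p^*}$ and $\mathbb{E}_{x_{1:n}}$: controlling the remainders typically requires moment conditions on the score and its higher derivatives under $p^*$ together with tail control on the posterior-type distribution. Fortunately, this technical work has already been carried out by \cite{fushiki2005bootstrap, fushiki2010bayesian} under conditions compatible with those that deliver \eqref{eq:bvm_wlb} and \eqref{eq:bvm_misspec}, so the proof reduces to assembling their two expansions and observing the trace identity above.
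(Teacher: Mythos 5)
Your proposal is correct and follows essentially the same route as the paper, which states this proposition as a direct corollary of Theorem~1 of \cite{fushiki2010bayesian} and Theorems~1 and~3 of \cite{fushiki2005bootstrap}: you assemble exactly those two risk expansions and conclude via the trace identity $\tr\{(J(\theta^*)^{-1}I(\theta^*)-\mathrm{Id})^2\}=\sum_i(\lambda_i-1)^2>0$ when $I(\theta^*)\neq J(\theta^*)$. This matches the computation the paper itself carries out in its proof of Theorem~\ref{thm:risk_wlb_power_posteriors} (your expression is the $\eta=1$ case of the difference $\frac{1}{2n}\sum_i(\lambda_i-1)(\lambda_i-\eta^{-1})$), so no further comment is needed.
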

	In view of the robust inference methods discussed in Section~\ref{section:model_misspecification}, a natural question to ask is whether the result stated in Proposition \ref{proposition:fushiki} can be extended from standard Bayesian inference to power posteriors.  Theorem~\ref{thm:risk_wlb_power_posteriors} below shows that the answer is negative since under certain conditions power posteriors yield an asymptotically  smaller risk than Weighted Likelihood Bootstrap.
	\begin{thm}\label{thm:risk_wlb_power_posteriors}
		Consider model \eqref{eq:Bayesian_model} and let $\lambda_1, \ldots, \lambda_d$  denote the eigenvalues of $J(\theta^*)^{-1/2}I(\theta^*)J(\theta^*)^{-1/2}$, where $\theta^* \in \Theta$ is the unique pseudo-true parameter. Assume that we perform inference on $\theta$ using a power posterior $\pi_{\eta}$ with some $\eta>0$. Suppose that the eigenvalues satisfy either 
		$1 < \lambda_{1},\ldots,\lambda_d < \eta^{-1},
		$
		or $\eta^{-1} <\lambda_{1},\ldots,\lambda_d <1.
		$
		Then the risk given by \eqref{eq:risk_function} associated with the predictive distribution under Weighted Likelihood Bootstrap is larger than the risk associated with the predictive distribution under power posterior $\pi_{\eta}$.
	\end{thm}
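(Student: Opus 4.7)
My plan is to extend the asymptotic risk expansion behind Proposition~\ref{proposition:fushiki} to the power posterior and then directly compare the three resulting risks.

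The first step is to derive a general expression for the leading-order behaviour of the risk \eqref{eq:risk_function} for an arbitrary predictive distribution of the form $\hat{p}(y|x_{1:n}) = \int f(y|\theta)\,q_n(\theta|x_{1:n})\,\mathrm{d}\theta$, where $q_n$ is the sampling distribution used (Bayesian posterior, WLB, or power posterior). A two-step Taylor expansion, exactly as in \cite{fushiki2005bootstrap,fushiki2010bayesian}, gives
\begin{equation*}
R - D_{KL}\bigl(p^*\|f(\cdot|\theta^*)\bigr) = \frac{1}{2n}\,\mathrm{tr}\!\left\{J(\theta^*)\,V_n\right\} + \frac{1}{2n}\,\mathrm{tr}\!\left\{\Sigma_n\,[J(\theta^*) - I(\theta^*)]\right\} + o(1/n),
\end{equation*}
where $V_n = n\,E_{x_{1:n}}[(\mu_n - \theta^*)(\mu_n - \theta^*)^T]$ with $\mu_n$ the mean of $q_n$, and $\Sigma_n = n\,E_{x_{1:n}}[\mathrm{Cov}_{q_n}(\theta)]$. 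The expansion uses only that $E_{p^*}[\nabla_\theta \log f(X|\theta^*)]=0$ and that $\mu_n \to \theta^*$ at rate $n^{-1/2}$; the first term comes from expanding $E_{p^*}\log f(Y|\mu_n)$ around $\theta^*$, the second from expanding $\log \hat{p}(y|x_{1:n})$ around $\mu_n$ using $\nabla^2 \log f + \nabla\log f\,\nabla\log f^T = f^{-1}\nabla^2 f$.

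The second step is to instantiate $V_n$ and $\Sigma_n$. For Weighted Likelihood Bootstrap, \eqref{eq:bvm_wlb} gives $\Sigma_n^W \to J^{-1}IJ^{-1}$ and $\mu_n^W$ concentrates at the MLE, so $V_n^W \to J^{-1}IJ^{-1}$. For the power posterior $\pi_\eta$, a Bernstein--von Mises argument applied to the likelihood $\prod f(x_i|\theta)^\eta$ (whose maximiser equals $\hat\theta$) yields $\Sigma_n^\eta \to (\eta J)^{-1}$ and $V_n^\eta \to J^{-1}IJ^{-1}$. Substituting and simplifying with $\mathrm{tr}(J^{-1}I) = \sum_i \lambda_i$ and $\mathrm{tr}\{(J^{-1}I)^2\} = \sum_i \lambda_i^2$ produces, after a short algebraic reduction,
\begin{equation*}
R_W - R_\eta \;=\; \frac{1}{2n}\sum_{i=1}^{d}(\lambda_i - 1)\bigl(\eta^{-1} - \lambda_i\bigr) + o(1/n).
\end{equation*}

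The final step is the sign analysis: each summand $(\lambda_i - 1)(\eta^{-1} - \lambda_i)$ is strictly positive precisely when $\lambda_i$ lies strictly between $1$ and $\eta^{-1}$, which holds for every $i$ under either of the two assumed orderings. Hence $R_W > R_\eta$ asymptotically, as claimed. The main technical obstacle is the rigorous justification of the Taylor expansion above, in particular the need to control remainders uniformly so that the plug-in of $\mu_n^\eta$ and $\Sigma_n^\eta$ does not destroy the $o(1/n)$ error; this requires standard regularity (uniform integrability of $\sqrt{n}(\mu_n - \theta^*)$ and bounded third derivatives of $\log f$), exactly the conditions already imposed in \cite{fushiki2010bayesian} and implicit in the hypotheses of \eqref{eq:bvm_wlb}--\eqref{eq:bvm_misspec}.
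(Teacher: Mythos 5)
Your proposal is correct in outline and arrives at exactly the paper's key identity, but by a genuinely different route. The paper does not derive the risk expansions itself: it takes the Efron-bootstrap risk formula from \cite{fushiki2005bootstrap}, obtains the power-posterior risk from Lemma~5 of \cite{shimodaira2000improving} (with $K_w^{[1\cdot 1]}=I(\theta^*)$, $K_w^{[2]}=J(\theta^*)$, $H_w=\eta J(\theta^*)$, $K_w^{[1]}=0$), and then transfers the bootstrap result to Weighted Likelihood Bootstrap via Theorem~1 of \cite{fushiki2010bayesian}, which states that the WLB and Efron-bootstrap predictive risks agree up to $o\left(n^{-3/2}\right)$; the difference of the two risks then reduces, exactly as in your computation, to $\frac{1}{2n}\sum_{i=1}^d(\lambda_i-1)(\lambda_i-\eta^{-1})+o(1/n)$, followed by the same sign analysis. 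You instead derive a single unified expansion of the risk in terms of the centring error $V_n$ and the spread $\Sigma_n$ of the sampling distribution $q_n$, and plug in the Bernstein--von Mises limits for WLB and for the power posterior directly. What your route buys is transparency: the formula $\frac{1}{2n}\tr\{J(\theta^*)V_n\}+\frac{1}{2n}\tr\{\Sigma_n(J(\theta^*)-I(\theta^*))\}$ makes clear that only the asymptotic centre and spread matter, and it treats all three methods on the same footing without invoking Shimodaira's lemma. What the paper's route buys is rigour at the one genuinely delicate point: the convergence in \eqref{eq:bvm_wlb} (and Corollary~\ref{prop:convergence_of_densities}) is conditional on the data and does not by itself give convergence of the data-averaged scaled moments $\Sigma_n^W=nE_{x_{1:n}}\{\mathrm{Cov}_{q_n}(\theta)\}$ and $V_n^W$ at the $o(1/n)$ level needed in the risk; this is precisely the content of Fushiki's theorems, and it is the step you defer to "uniform integrability and bounded third derivatives". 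So your argument is sound as a derivation strategy, but to make it a complete proof you would either have to reprove that moment-level control or, as the paper does, cite it.
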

	For prediction problems, Theorem~\ref{thm:risk_wlb_power_posteriors} shows that in certain scenarios power posteriors with an appropriate choice of power $\eta$ outperform Weighted Likelihood Bootstrap. We present the proof in  Supplementary Material \ref{supplementary:risk_proof}.
	
	\subsection{Edgeworth expansions}\label{section:edgeworth_expansions}
	In order to develop appropriate methodology for incorporating prior information into Weighted Likelihood Bootstrap, we first need to specify what tools can be used for measuring the impact of the prior on the draws $\tilde{\theta}$ from the resulting algorithm.  To this end, we examine the way the prior influences the posterior in standard Bayesian inference. Recall that by the Bernstein--von Mises theorem  the prior asymptotically does not influence the first order approximation, which necessitates considering higher order approximations to the posterior.
	
	In classical asymptotic theory of parametric inference the Edgeworth expansion of a density function is a well-established tool of analysing higher order approximations. For details we refer the reader to \cite{bhattacharya1978validity, ghosh1994higher, hall2013bootstrap}. Note that the Gaussian approximation appearing in the central limit theorem  is associated with an absolute error approximation of the order $O\left(n^{-1/2}\right)$, and the Edgeworth expansion should be seen as a natural extension of this approximation, up to the order $O\left(n^{-k/2}\right)$ for an arbitrary natural number $k$. The coefficients of this expansion are expressed in terms of  cumulants of the density of interest.
	
	Letting $\ell(x, \theta) = \log f(x|\theta)$,  we use
	\begin{equation*}
	I_n=  \frac{1}{n}\sum_{j=1}^n  \nabla_{\theta}\ell(x_j,\hat{\theta})\nabla_{\theta}\ell(x_j,\hat{\theta})^T, \quad  J_{n}  =  -\frac{1}{n}\sum_{j=1}^n  D_{\theta}^2 \ell(x_j,\hat{\theta})
	\end{equation*}
	as empirical estimates of $I(\theta^*)$ and $J(\theta^*)$ respectively. Let  $ \phi $ denote the density of a $ d $-variate standard normal distribution. Recall that in case of  well-specified models   the  Laplace approximation of the posterior density of  $I_n^{1/2}n^{1/2} \left(\theta - \hat{\theta} \right)$  is
	\begin{equation}\label{eq:edgeworth_expansion_Bayesian_posterior}
	\phi(y) \left[1 + I(\theta^{*})^{-1/2}\frac{ \left\{ \nabla_{\theta} \log \pi(\theta^*) \right\}^T y}{n^{1/2}} +  I (\theta^*)^{-3/2} \frac{\sum_{p,q,r=1}^d  c_{pqr}y_p y_q y_r}{6n^{1/2}}  \right]  + o\left(n^{-1/2}\right)
	\end{equation} 
	for $y = (y_1, \ldots, y_d) \in \R^d$, and $c_{pqr}$ is $E \left\{\partial^3 \ell(\theta,x)/\partial \theta_p \partial \theta_q \partial \theta_r\right\}$ evaluated at $\theta^*$. 
	Since in Bayesian posteriors the impact of the prior appears in the second order Edgeworth expansion, we will keep this as our primary requirement. That is, we demand that the prior information appears in the second order Edgeworth expansion obtained for Posterior Bootstrap. Ideally, for well-specified models  our method would incorporate the prior knowledge in the same way it is done in Bayesian inference, that is, through the same prior-dependent expression in the second order expansion. The reason for this is that in case of well-specified models the Bayesian posterior offers optimal information about the parameter given the data and the prior distribution \citep{bernardo2009bayesian}. Alternatively, as explained in \cite{bissiri2016general}, the posterior $\Pi(\theta|x_{1:n})$ is the only probability measure  minimizing the expected  loss under the negative log likelihood loss function, which is an appropriate loss function to be used in the well-specified case, and hence provides a valid and coherent update of beliefs about~$\theta$. 
	
	On the other hand, in case of misspecified models the first order expansions of Posterior Bootstrap and of the Bayesian posterior are different, as seen in  \eqref{eq:bvm_wlb} and \eqref{eq:bvm_misspec}, which is what allows Posterior Bootstrap to correct for model misspecification.

	\section{Posterior Bootstrap via penalization on the weighted log-likelihood}\label{section:pb_penalisation}
	\subsection{Methodology for non-hierarchical models}
	We now present Algorithm~\ref{alg:pb_general}, our Posterior Bootstrap method via prior penalization for model \eqref{eq:Bayesian_model}. The algorithm has only one parameter $w_0$ representing the strength of the 
	penalization imposed on the weighted log likelihood with the prior distribution. For $w_0= 0$, Algorithm~\ref{alg:pb_general} recovers  Weighted Likelihood Bootstrap.
	\begin{algorithm}[!ht]
		\caption{Posterior Bootstrap via prior penalization, returns samples from $P^{B,p}(\cdot|x_{1:n})$}\label{alg:pb_general}
		\begin{algorithmic}[1]	
			\For {j = 1, \ldots, N} \Comment{\textbf(in parallel)}
			\State{Draw $w_1^{(j)}, \ldots, w_{n}^{(j)} \sim\text{Exp}(1)$.}
			\State{Set $\tilde{\theta}^{(j)} = \argmax_{\theta} \left\{\sum_{i=1}^{n} w_i \log f(x_{i}| \theta) + w_0^T \log \pi(\theta)\right\}$.}
			\EndFor
			\State \Return {Samples $\big\{\tilde{\theta}^{(j)}\big\}_{j=1}^N$. }
		\end{algorithmic}
	\end{algorithm}
	
	In the basic version of the algorithm we assume that $w_0$ is a non-negative real number. In this case Algorithm~\ref{alg:pb_general} resembles the Weighted Bayesian Bootstrap \citep{newton2018weighted}. Instead of a fixed parameter $w_0$, \cite{newton2018weighted}  use a random variable $\lambda \text{Exp}(1)$ for a regularization constant $\lambda$, typically selected via cross-validation. If we additionally assume that the prior $\pi(\theta)$ factorizes with respect to all coordinates of $\theta = (\theta_1, \ldots, \theta_d)$ as 
	\begin{equation}\label{eq:prior_assumption}
	\pi(\theta) = \prod_{k=1}^d \pi_k(\theta_k),
	\end{equation} 
	we then treat $w_0$ as a $d$-dimensional vector $(w_{0,1}, \ldots, w_{0,d})$ with non-negative entries, rather than a real value. For simplicity of notation the penalization term $w_0^T \log \pi(\theta)$ in of Algorithm~\ref{alg:pb_general} denotes 
	$
	w_0^T \log \pi(\theta) = \sum_{k=1}^d w_{0,k} \log \pi_k(\theta_k).$ 
	We defer a discussion about appropriate choice of $w_0$ to Section~\ref{section:choice_w_0}. We let $P^{B,p}$ be the distribution of $\tilde{\theta}$ drawn from Algorithm~\ref{alg:pb_general} and by $p^{B,p}$ we denote its density. 
	\begin{remark}
		Algorithm~\ref{alg:pb_general} can be extended to the setting of \citep{bissiri2016general, lyddon2019general}, where the negative log likelihood $-\log f(x|\theta)$ is replaced with an arbitrary loss function. The asymptotic results will then be analogous to those stated in Section~\ref{section:theoretical_pb_penalization} under the same regularity conditions. 
	\end{remark}
	
	\FloatBarrier
	\subsection{Asymptotic results for Algorithm~\ref{alg:pb_general}}\label{section:theoretical_pb_penalization}
	In this section we provide the Edgeworth expansion of the output of Algorithm~\ref{alg:pb_general}, which generalizes the results obtained by \cite{lyddon2019general} to the penalized version, and strengthens them to the second order expansion.
	
	In what follows, if there is ambiguity, we write $E_z$ to denote expectation with respect to $z$. We consider the following set of assumptions.
	\begin{assumption}\label{assump:log_lik} \normalfont {(Log likelihood)}. The log likelihood function $\ell(x, \theta) = \log f(x|\theta) $ is  measurable and bounded from above, with $E_{p^*} \big|\ell(X, \theta) \big| < \infty$ for all $\theta \in \Theta$.
	\end{assumption}
	
	\begin{assumption}\label{assump:identifiability}\normalfont (Identifiability). There  exists a unique maximizing parameter value $$
		\theta^* = \argmax_{\theta \in \Theta} \E_{p^*} \left\{\ell( X, \theta)\right\},
		$$ and further for all $\delta > 0$ there exists an $\epsilon >0$ such that 
		\begin{equation*}
		\lim_{n \to \infty} P^* \left[ \sup_{\|\theta - \theta^*\| \geq \delta} \frac{1}{n} \sum_{i=1}^n \left\{ \ell ( X_i,\theta) - \ell ( X_i,\theta^*)\right\} \leq - \epsilon \right] =1,
		\end{equation*} 
		where $P^*$ is the data-generating distribution.
	\end{assumption}
	\begin{assumption}
		\normalfont\label{assump:log_lik_smoothness} (Smoothness of the log likelihood function). There is an open ball $B$ containing $\theta^*$ such that  $\ell(\theta, x)$ is three times continuously differentiable with respect to $\theta \in B$ almost surely on $x$ with respect to $P^*$. We additionally impose certain moment conditions on the partial derivatives. For details see Supplementary Material \ref{supplementary:summary_assumption}. 
	\end{assumption}
	\begin{assumption}\label{assump:positive_definite}
		\normalfont (Positive definiteness of information matrices and linear independence of partial derivatives). For $B$ like in Assumption \ref{assump:log_lik_smoothness}, the matrices $I(\theta)$ and $J(\theta)$ defined in \eqref{eq:I_J_matrices} are positive definite for $\theta \in B$ with all elements finite. Besides, the first and second partial derivatives
		$$
		\left\{\frac{\partial \ell(x, \theta)}{\partial \theta_p}, \frac{\partial^2 \ell(x,  \theta)}{\partial \theta_r \partial \theta_s} \right\}, \quad 1 \leq p \leq d, \, 1 \leq r \leq s \leq d
		$$
		are linearly independent as functions of $x$ at $\theta^*$.
	\end{assumption}
	
	\begin{assumption}\label{assump:weights}
		\normalfont (Distribution of weights).  The weights $w_{1:n} = (w_1, \ldots, w_n) $ are independent and identically distributed according to a continuous distribution $W$ having all moments and such that $\E(W) =1$ and $\text{var}(W) =1$.
	\end{assumption}

	Assumptions \ref{assump:log_lik} -- \ref{assump:log_lik_smoothness}, as well as positive definiteness of information matrices in Assumption \ref{assump:positive_definite} are direct analogues of regularity conditions considered in \cite{lyddon2019general}. Linear independence of partial derivatives expressed in Assumption \ref{assump:positive_definite} is a common assumption for obtaining Edgeworth expansions of the maximum likelihood estimator. Assumptions \ref{assump:log_lik} and \ref{assump:identifiability} ensure that as $n \to \infty$, with probability going to~1 distribution $P^{B,p}(\cdot| x_{1:n})$ concentrates around $\theta^*$. Assumptions \ref{assump:log_lik_smoothness} and \ref{assump:positive_definite} are used to show the validity of the Edgeworth expansion. Since the algorithms we present in the paper rely on the Weighted Likelihood Bootstrap formulation \eqref{eq:standard_wlb_formula}, by default the  weights follow the exponential distribution $\text{Exp}(1)$, thus satisfying Assumption \ref{assump:weights}, we allow however for other distributions of weights. We will use the above conditions throughout the paper to prove our theoretical results for various versions of Posterior Bootstrap.
	
	In the case of Algorithm~\ref{alg:pb_general}, we require the following smoothness assumption on $\pi(\theta)$ in the neighbourhood of $\theta^*$. 
	
	\begin{assumption}\label{assump:prior_smoothness}\normalfont {(Smoothness of log prior density).} The function $\log \pi(\theta)$ is  measurable, upper-bounded on $\Theta$, and three times continuously differentiable at $\theta^*$.
	\end{assumption}
	
	We denote by $w_{0} \cdot \nabla_{\theta} \log \pi(\theta^*)$ a vector with $ p $th entry $w_{0,p}{\pi'(\theta^*_p)}/{\pi(\theta^*_p)}$ if $w_0\in \R^d$ and $w_{0} \pi'(\theta^*_p)/\pi(\theta^*_p)$ when $w_0\in R$.
	When $\theta \in \R$, we adopt the following notation:
	\begin{align}\label{eq:notation_univariate}
	\mu_{3} &=  \E_{p^*} \left\{  D_{\theta}^3\ell(X,{\theta}^*)\right\},  
	&
	A_3  &=  \E_{p^*} \left\{\nabla_{\theta}\ell(X,{\theta}^{*})\right\}^3,
	&
	L_{1,2} &=  \E_{p^*}  \left\{D_{\theta}^2 \ell(X,{\theta}^{*}) \nabla_{\theta}\ell(X,{\theta}^{*})\right\}.
	\end{align}
	We denote by $\gamma= \E \left\{(W -1)^3\right\}$ the third central moment of $ W $, so $\gamma =2$ when $ W \sim \text{Exp}(1)$.
	
	\begin{thm}\label{thm:edgeworth_expansion_penalisation}
		Consider model \eqref{eq:Bayesian_model} with a parameter of interest $\theta \in \R^d$ and suppose Assumptions \ref{assump:log_lik} -- \ref{assump:prior_smoothness} hold. As $n \to \infty$, with probability going to 1 the Edgeworth expansion of the density of $I_n^{-1/2} J_{n} n^{1/2}(\tilde{\theta} - \hat{\theta})$  for $\tilde{\theta}$ drawn from Algorithm~\ref{alg:pb_general} is 
		\begin{equation}\label{eq:edgeworth_expansion_general_pb_misspec_multivariate}
		\phi(y) \left[1 + I(\theta^{*})^{-1/2}\frac{ \left\{ w_{0} \cdot \nabla_{\theta}\log \pi(\theta^*)\right\}^T y}{n^{1/2}} + \frac{P_{3,n}(y)}{n^{1/2}} \right] + o \left(n^{-1/2}\right),
		\end{equation}
		where $P_{3,n}(y)$ is an order three polynomial independent from the prior.
		When $d=1$, formula \eqref{eq:edgeworth_expansion_general_pb_misspec_multivariate} has the following more explicit form 
		\begin{align}\label{eq:edgeworth_expansion_general_pb_misspec}
		\phi(y) \left\{1 + \kappa_{1,n}y + \kappa_{3,n} \frac{y^3 - 3y}{6} \right\}+ o \left(n^{-1/2}\right),
		\end{align}
		where
		\begin{align*}
		\kappa_{1,n} & = \frac{ w_0\nabla_{\theta}\log \pi(\theta^*)}{n^{1/2} I(\theta^{*})^{1/2}} + \frac{\mu_{3}}{2 n^{1/2}}\frac{I(\theta^{*})^{1/2}}{J(\theta^{*})^2} + \frac{ L_{1,2}}{n^{1/2}I(\theta^{*})^{1/2}J(\theta^{*})}, \\
		\kappa_{3,n} & = \frac{\gamma A_{3}}{n^{1/2}I(\theta^{*})^{3/2}} + \frac{3\mu_{3}}{ n^{1/2}}\frac{I(\theta^{*})^{1/2}}{J(\theta^{*})^2} + \frac{ 6L_{1,2}}{n^{1/2}I(\theta^{*})^{1/2}J(\theta^{*})}. 
		\end{align*}
		Additionally, if the model is well-specified and $\gamma =2$, formula  \eqref{eq:edgeworth_expansion_general_pb_misspec} becomes
		\begin{align}\label{eq:edgeworth_expansion_general_pb_well_spec}
		\phi(y) \left[1 +  \frac{ w_0  \nabla_{\theta}\log \pi(\theta^*) y}{n^{1/2} I(\theta^{*})^{1/2}} -   \frac{\left(A_3+ \mu_3\right) y}{3n^{1/2}I(\theta^{*})^{3/2}}   
		+  \frac{\mu_{3}y^3  }{6n^{1/2} I(\theta^{*})^{3/2}}  \right]+ o \left(n^{-1/2}\right).
		\end{align}
	\end{thm}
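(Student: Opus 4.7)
The plan is to proceed in three steps: a stochastic expansion of $\tilde{\theta}-\hat{\theta}$ coming from the first-order optimality condition, a calculation of the conditional cumulants of $Y_n = I_n^{-1/2} J_n n^{1/2}(\tilde{\theta}-\hat{\theta})$ given $x_{1:n}$ up to order $n^{-1/2}$, and an application of a validity theorem for Edgeworth density expansions to turn cumulants into the displayed formula.

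First I would write the first-order condition
$$
\sum_{i=1}^n w_i \nabla_{\theta}\ell(x_i,\tilde{\theta}) + w_0 \cdot \nabla_{\theta}\log\pi(\tilde{\theta}) = 0
$$
and Taylor-expand each score around $\hat{\theta}$ up to third order. The MLE identity $\sum_i \nabla_{\theta}\ell(x_i,\hat{\theta})=0$ turns the constant term into $S_n := n^{-1/2}\sum_i(w_i-1)\nabla_{\theta}\ell(x_i,\hat{\theta})$, which is $O_P(1)$ conditionally on the data by a weighted central limit theorem. Setting $\Delta=n^{1/2}(\tilde{\theta}-\hat{\theta})$ and collecting orders in $n^{-1/2}$ yields a fixed-point equation
$$
J_n\Delta = S_n + n^{-1/2}\bigl\{w_0\cdot\nabla_{\theta}\log\pi(\hat{\theta})\bigr\} + n^{-1/2} Q_n(\Delta) + o_P(n^{-1/2}),
$$
where $Q_n$ is a quadratic form whose coefficients involve the empirical averages $n^{-1}\sum_i (w_i-1) D_{\theta}^2\ell(x_i,\hat{\theta})$ and $n^{-1}\sum_i w_i D_{\theta}^3\ell(x_i,\hat{\theta})$. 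Iterating once yields an explicit stochastic expansion of $\Delta$ in terms of $S_n$, the prior gradient and empirical derivative tensors, accurate to $o_P(n^{-1/2})$. Assumptions \ref{assump:log_lik_smoothness}--\ref{assump:positive_definite} control the remainder uniformly on a shrinking neighbourhood of $\hat{\theta}$, where $\tilde{\theta}$ localises (with probability going to $1$) by Assumptions \ref{assump:log_lik}--\ref{assump:identifiability} applied conditionally on the weights.

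Next I would read off the first three conditional cumulants of $Y_n$ from this expansion, replacing $I_n, J_n$ by $I(\theta^*), J(\theta^*)$ at cost $o(n^{-1/2})$. The first cumulant has three $n^{-1/2}$ contributions: the deterministic prior term $n^{-1/2} I(\theta^*)^{-1/2}\{w_0\cdot\nabla_{\theta}\log\pi(\theta^*)\}$; the expectation of the quadratic correction, which using $\mathrm{Cov}_w(S_n)=I_n$ reduces to traces of $J_n^{-1}$ against the empirical versions of $L_{1,2}$ and $\mu_3$, producing those two coefficients in $\kappa_{1,n}$; and a vanishing cubic-in-weights term. The second cumulant is the identity up to $O(n^{-1/2})$ by the choice of standardisation $I_n^{-1/2}$. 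The third cumulant collects $\gamma A_3$ from $E\{(W-1)^3\}$ acting on cubes of the score, together with the mixed quadratic-form contributions producing the $3\mu_3$ and $6L_{1,2}$ coefficients. In dimension one these cumulants produce the Edgeworth polynomial $\kappa_{1,n}y + \kappa_{3,n}(y^3-3y)/6$, matching \eqref{eq:edgeworth_expansion_general_pb_misspec}; in arbitrary dimension they produce the multivariate polynomial $P_{3,n}$ whose only prior-dependent piece is the linear term already exhibited in \eqref{eq:edgeworth_expansion_general_pb_misspec_multivariate}.

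Finally I would invoke a conditional Edgeworth validity theorem in the spirit of \cite{bhattacharya1978validity} to pass from cumulants to the density expansion. The Cram\'er-type smoothness required for density (rather than only distribution-function) validity is supplied jointly by the linear independence of first and second partial derivatives of $\ell$ at $\theta^*$ in Assumption \ref{assump:positive_definite} and by the continuity of $W$ in Assumption \ref{assump:weights}. For the well-specified case the information identity $I(\theta^*)=J(\theta^*)$ simplifies the ratios in $\kappa_{1,n}, \kappa_{3,n}$, and with $\gamma=2$ the classical score-cumulant identity $A_3 + 3L_{1,2} + \mu_3 = 0$, obtained by differentiating $\int f(x|\theta)\,\mathrm{d}x=1$ three times at $\theta^*$, collapses the non-prior terms into the closed form \eqref{eq:edgeworth_expansion_general_pb_well_spec}. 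The main obstacle is the validity of the \emph{conditional} expansion: one must verify the Cram\'er condition for the law of $\Delta$ given $x_{1:n}$ with $P^*$-probability going to one, and control the tail of the remainder uniformly in the weights so that it contributes only to the $o(n^{-1/2})$ error; Assumptions \ref{assump:log_lik_smoothness}--\ref{assump:weights} are tailored to this, but the bookkeeping and the transfer from joint randomness in $(w_{1:n}, x_{1:n})$ to conditional randomness in $w_{1:n}$ alone is the chief technical labour.
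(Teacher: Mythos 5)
Your proposal is correct and follows essentially the same route as the paper: Taylor expansion of the penalized estimating equation around $\hat{\theta}$, computation of the conditional cumulants of $I_n^{-1/2}J_n n^{1/2}(\tilde{\theta}-\hat{\theta})$ with the prior entering only the first cumulant, inversion into an Edgeworth density via the theory in Chapter~2 of \cite{ghosh1994higher}, and the identity $A_3+3L_{1,2}+\mu_3=0$ together with $I(\theta^*)=J(\theta^*)$ for the well-specified simplification. The one step you outsource is precisely where the paper works by hand: conditionally on the data the summands $(w_j-1)a_j$ are independent but \emph{not} identically distributed, so an i.i.d.\ validity theorem in the spirit of \cite{bhattacharya1978validity} does not apply off the shelf, and the paper instead proves a dedicated result (Lemma~\ref{lemma:auxiliary_lemma_Edgeworth}) that expands the characteristic function of the standardized weighted sum directly under empirical moment conditions on the derivative vectors $a_j$, rather than verifying a Cram\'er-type condition for the conditional law as you propose.
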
 
	The sketch of the proof of  Theorem~\ref{thm:edgeworth_expansion_penalisation} is as follows. We first expand $\sqrt{n}\left(\tilde{\theta} - \hat{\theta}\right)$ as a function of $\sum_{i=1}^d b(x_i, \hat{\theta})$, where $b(x_i, \hat{\theta})$ is a vector containing derivatives of $\ell(x_i, \theta)$ with respect to $\theta$. We then obtain the Edgeworth expansion of $\sum_{i=1}^d b(x_i, \hat{\theta})$ as a random variable in $w_{1:n}$. 	Results from Chapter 2 of \cite{ghosh1994higher} are used to justify that \eqref{eq:edgeworth_expansion_general_pb_misspec} is indeed a valid Edgeworth expansion, that is, that the incurred error is of order $o\left(n^{-1/2} \right)$. For the full proof with all calculations we refer the reader to the Supplementary Material \ref{supplementary:edgeworth_proof}. We also provide an Edgeworth expansion for the related method by \cite{newton2018weighted} in Supplementary Material \ref{supplementary:other_methods}.
	
	To derive formula \eqref{eq:edgeworth_expansion_general_pb_well_spec} from \eqref{eq:edgeworth_expansion_general_pb_misspec}, recall that for well-specified models we have 
	$I(\theta^*) = J(\theta^*)$ and by one of the Bartlett identities \citep{bartlett1953approximate, bartlett1953approximate2}
	\begin{equation}\label{eq:bartlett_identity}
	-L_{1,2} = \frac{\mu_3 + A_3}{3}.
	\end{equation}
	If additionally $\gamma =2$, we can simplify the formulas for $\kappa_{3,n}$ and $\kappa_{1,n}$ to
	\begin{align*}
	\kappa_{3,n} = & \frac{2 A_{3}}{n^{1/2}I(\theta^{*})^{3/2}} + \frac{6\mu_{3}}{2 n^{1/2} I(\theta^{*})^{3/2}}- \frac{2 \mu_3 + 2 A_3} {n^{1/2}I(\theta^{*})^{3/2}} = \frac{\mu_{3}}{n^{1/2} I(\theta^{*})^{3/2}},\\
	\kappa_{1,n} =& \frac{ w_0\nabla_{\theta}\log \pi(\theta^*)}{n^{1/2} I(\theta^{*})^{1/2}} + \frac{\mu_{3}}{6 n^{1/2} I(\theta^{*})^{3/2}} - \frac{A_3}{3n^{1/2}I(\theta^{*})^{3/2}}.
	\end{align*}

	 For well-specified models, it is insightful to compare the Edgeworth expansions obtained for Posterior Bootstrap \eqref{eq:edgeworth_expansion_general_pb_misspec_multivariate} with the Laplace approximations for Bayesian posteriors \eqref{eq:edgeworth_expansion_Bayesian_posterior}. Note that for $w_0 = 1$ the term corresponding to the prior, that is $n^{-1/2} I(\theta^{*})^{-1/2}\nabla_{\theta}\log \pi(\theta^*)$ appears in the same way in both formulas. Firstly, this suggests using $w_0=1$ if the analyst believes the model is well-specified, which should in fact be intuitive given that  for $w_0 =1$ 
	\begin{align*}
	\E_{w_{1:n}} \left\{\sum_{i=1}^{n} w_i \log f(x_{i}| \theta) + w_0 \log \pi(\theta)\right\} = \log \pi(\theta| x_{1:n}) + \text{c},
	\end{align*} 
	where the term $c$ does not depend on $\theta$. This also has implications when the analyst wants to choose between two prior distributions $\pi_1(\theta)$ and $\pi_2(\theta)$. For $w_0 = 0$ we then have $$
	\pi_1(\theta|x_{1:n}) - \pi_2(\theta|x_{1:n}) = p^{B,p}_{\pi_1}(\theta|x_{1:n}) - p^{B,p}_{\pi_2}(\theta|x_{1:n}) + o \left(n^{-1/2}\right),
	$$ 
	where $p^{B,p}_{\pi}$  denotes a density $p^{B,p}$  under prior $\pi$. This facilitates understanding the impact of the prior and suggests making decisions about the choice of the prior following the guidelines developed for Bayesian inference. 
	
	At the same time it should be noted that in general the Posterior Bootstrap method does not yield second order correctness to Bayesian posteriors for well specified models. This is due to the additional term $-n^{-1/2}I(\theta^{*})^{-3/2}\left(A_3+ \mu_3\right)/3$ in formula  \eqref{eq:edgeworth_expansion_general_pb_well_spec}, which is only equal to 0 in special cases. Further details about the consequences of this result are presented in Proposition \ref{prop:jeffreys_prior}.
	
	\begin{prop}\label{prop:jeffreys_prior}
		Consider model \eqref{eq:Bayesian_model} with a one-dimensional parameter $\theta$. Suppose $\tilde{\theta}$ is drawn from Weighted Likelihood Bootstrap, that is, from Algorithm~\ref{alg:pb_general} with $w_0 =0$. If the model is well-specified, then the Edgeworth expansion of  $I(\hat{\theta})^{1/2} n^{1/2}(\tilde{\theta} - \hat{\theta})$ is second order correct to the corresponding expansion for Bayesian posterior utilizing Jeffreys prior on $\theta$.
	\end{prop}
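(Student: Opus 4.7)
The plan is to specialize the well-specified Weighted Likelihood Bootstrap Edgeworth expansion from Theorem~\ref{thm:edgeworth_expansion_penalisation} (with $w_0=0$, $d=1$, $\gamma=2$) and the Bayesian posterior expansion \eqref{eq:edgeworth_expansion_Bayesian_posterior} (with Jeffreys prior $\pi_J(\theta) \propto I(\theta)^{1/2}$), and then to show that the two formulas coincide term by term up to order $n^{-1/2}$. The cubic-in-$y$ coefficient will match directly, so the entire content of the proof is showing that the linear-in-$y$ coefficient coming from the Jeffreys prior gradient equals the linear cumulant correction appearing in the Weighted Likelihood Bootstrap expansion; this match will be delivered by the Bartlett identity \eqref{eq:bartlett_identity}.

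First I would set $w_0=0$ in \eqref{eq:edgeworth_expansion_general_pb_well_spec}, producing
\begin{equation*}
\phi(y)\left[1 - \frac{(A_3 + \mu_3)\, y}{3 n^{1/2} I(\theta^*)^{3/2}} + \frac{\mu_3\, y^3}{6 n^{1/2} I(\theta^*)^{3/2}}\right] + o(n^{-1/2}).
\end{equation*}
Next I would compute $\nabla_\theta \log \pi_J(\theta^*)$. Since $\log \pi_J(\theta) = \tfrac{1}{2}\log I(\theta) + \text{const}$, we have $\nabla_\theta \log \pi_J(\theta) = I'(\theta)/(2 I(\theta))$. Differentiating $I(\theta) = E_{p^*}[\{\nabla_\theta \ell(X,\theta)\}^2]$ under the expectation, justified by the moment conditions of Assumption~\ref{assump:log_lik_smoothness}, gives $I'(\theta^*) = 2 E_{p^*}[\nabla_\theta \ell(X,\theta^*)\, D_\theta^2 \ell(X,\theta^*)] = 2 L_{1,2}$, so $\nabla_\theta \log \pi_J(\theta^*) = L_{1,2}/I(\theta^*)$. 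The Bartlett identity \eqref{eq:bartlett_identity} then yields $\nabla_\theta \log \pi_J(\theta^*) = -(A_3+\mu_3)/(3 I(\theta^*))$. Substituting into \eqref{eq:edgeworth_expansion_Bayesian_posterior} with $d=1$ and $c_{111}=\mu_3$ produces exactly the same expansion as above.

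Finally I would reconcile the various rescalings involved: $I_n^{-1/2}J_n n^{1/2}$ from Theorem~\ref{thm:edgeworth_expansion_penalisation}, $I_n^{1/2} n^{1/2}$ from \eqref{eq:edgeworth_expansion_Bayesian_posterior}, and $I(\hat\theta)^{1/2} n^{1/2}$ appearing in the proposition. In the well-specified case $I(\theta^*) = J(\theta^*)$, all three rescaling factors converge to the deterministic quantity $I(\theta^*)^{1/2}$, and the Edgeworth expansion formulas are written in terms of population quantities only. A routine Slutsky-type argument, combined with the smoothness of $\phi$ and its derivatives in a neighbourhood of the support where the expansion is valid, verifies that the $o(n^{-1/2})$ remainder is unaffected when any of these consistent estimators of $I(\theta^*)^{1/2}$ is used in the rescaling; this yields the asserted second-order correctness. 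The main obstacle is essentially the scaling bookkeeping in this last step: the substantive algebraic content of the proposition is entirely contained in the Bartlett identity, which collapses the Jeffreys-prior contribution onto the cumulant correction of the Weighted Likelihood Bootstrap expansion.
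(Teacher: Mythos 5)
Your proposal is correct and follows essentially the same route as the paper: specialize \eqref{eq:edgeworth_expansion_general_pb_well_spec} to $w_0=0$, compute $\nabla_\theta \log \pi_J(\theta^*) = L_{1,2}/I(\theta^*)$ for the Jeffreys prior, and invoke the Bartlett identity \eqref{eq:bartlett_identity} to match the prior-dependent linear term of \eqref{eq:edgeworth_expansion_Bayesian_posterior}, the cubic terms matching trivially. Your additional remarks on reconciling the rescalings $I_n^{-1/2}J_n$, $I_n^{1/2}$ and $I(\hat\theta)^{1/2}$ in the well-specified case are a sensible (if routine) supplement that the paper leaves implicit.
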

	\begin{proof}
		Suppose that $\pi(\theta)$ is a Jeffreys prior, that is $\pi(\theta) \propto \left\{I(\theta)\right\}^{1/2}$.
		Then 
		\begin{align*}
		\nabla_{\theta} \log \pi(\theta^*)  = \frac{I(\theta^*)^{-1/2}I(\theta^*)'}{2 I(\theta^*)^{1/2}} = \frac{I(\theta^*)'}{2 I(\theta^*)} = \frac{2 L_{1,2}(\theta^*)}{2 I(\theta^*)} = -\frac{A_3 + \mu_3}{3I(\theta^*)^{3/2}},
		\end{align*}
		where the last equality follows from the Bartlett identity \eqref{eq:bartlett_identity}. Comparing formula \eqref{eq:edgeworth_expansion_general_pb_well_spec} for $w_0 =0$ and \eqref{eq:edgeworth_expansion_Bayesian_posterior} completes the proof.
	\end{proof}
	The above result has an interesting and at the same time intuitive interpretation from the point of view of properties of Jeffreys priors.  Recall that Jeffreys priors are designed to be non-informative and represent ignorance about the parameter of interest. Weighted Likelihood Bootstrap can be characterized by an analogous property, as it does not incorporate any prior information about the parameter. In fact the question of second order equivalence between Weighted Likelihood Bootstrap and Bayesian posteriors with Jeffreys priors was initially raised in Section~4 of \cite{newton1994approximate}. Due to an omission in the Taylor expansion, the authors obtained an incomplete Edgeworth expansion, which lead to incorrect conclusions about second order correctness.
		
	We present below Corollary \ref{prop:convergence_of_densities} which can be directly inferred from the existence of Edgeworth expansions.
	\begin{cor}\label{prop:convergence_of_densities}
		Consider model \eqref{eq:Bayesian_model} and suppose $\tilde{\theta}$ is drawn from Algorithm~\ref{alg:pb_general}. Then, under conditions of Theorem~\ref{thm:edgeworth_expansion_penalisation}, $n^{1/2}(\tilde{\theta} - \hat{\theta})$ converges to the normal distribution $N\left\{0, J(\theta^*)^{-1} I(\theta^*) J(\theta^*)^{-1} \right\}$ in total variation for any choice of $w_0$.
	\end{cor}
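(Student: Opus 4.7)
The plan is to leverage Theorem~\ref{thm:edgeworth_expansion_penalisation} directly: the Edgeworth expansion furnishes pointwise convergence of the density of $I_n^{-1/2}J_n n^{1/2}(\tilde{\theta}-\hat{\theta})$ to the standard Gaussian density $\phi$, and Scheff\'e's lemma then upgrades this to total variation convergence. The remaining work is to undo the (data-measurable) linear rescaling and to compare the resulting data-dependent Gaussian with the target sandwich Gaussian.

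First I would fix a realization of $x_{1:n}$ in the high-probability event on which the Edgeworth expansion \eqref{eq:edgeworth_expansion_general_pb_misspec_multivariate} is valid. On this event, the conditional density $q_n(y)$ of $I_n^{-1/2}J_n n^{1/2}(\tilde{\theta}-\hat{\theta})$ differs from $\phi(y)$ by $O(n^{-1/2})$ pointwise, since the prior-dependent term $n^{-1/2}I(\theta^*)^{-1/2}\{w_0\cdot\nabla_\theta\log\pi(\theta^*)\}^T y\,\phi(y)$ and the polynomial term $n^{-1/2}P_{3,n}(y)\phi(y)$ both vanish in the limit for any fixed $w_0$. Hence $q_n(y)\to \phi(y)$ pointwise. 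Since both are bona fide probability densities, Scheff\'e's lemma yields
\[
\int_{\R^d}\bigl|q_n(y)-\phi(y)\bigr|\,\mathrm{d}y \longrightarrow 0,
\]
which is exactly total variation convergence of the conditional law of $I_n^{-1/2}J_n n^{1/2}(\tilde{\theta}-\hat{\theta})$ to $N(0,I_d)$.

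Next I would transport this through the invertible linear map $x\mapsto J_n^{-1}I_n^{1/2}x$, well defined by Assumption~\ref{assump:positive_definite}. Total variation is invariant under bijective measurable maps applied simultaneously to both measures, so
\[
\mathrm{TV}\bigl(\mathcal{L}\{n^{1/2}(\tilde{\theta}-\hat{\theta})\mid x_{1:n}\},\; N(0,J_n^{-1}I_nJ_n^{-1})\bigr)\longrightarrow 0.
\]
It remains to replace $J_n^{-1}I_nJ_n^{-1}$ by $J(\theta^*)^{-1}I(\theta^*)J(\theta^*)^{-1}$. Consistency of $\hat{\theta}$ (guaranteed by Assumptions~\ref{assump:log_lik}--\ref{assump:identifiability}) together with the smoothness of $\ell$ from Assumption~\ref{assump:log_lik_smoothness} gives $I_n\to I(\theta^*)$ and $J_n\to J(\theta^*)$ in $P^*$-probability, and the total variation distance between two centred Gaussians with close, uniformly positive-definite covariances tends to zero. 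A triangle inequality then closes the argument.

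The main obstacle is technical rather than conceptual: one has to ensure that $\tilde{\theta}$ admits a Lebesgue density on a set of probability tending to one so that Scheff\'e's lemma is available, and that the $o(n^{-1/2})$ remainder in the Edgeworth expansion yields genuine $L^1$ control (rather than merely pointwise control), which in the multivariate case needs the remainder to be dominated uniformly on compact sets and to be compatible with a Gaussian tail bound outside. Both requirements are delivered by the smoothing machinery underlying Chapter~2 of \cite{ghosh1994higher} on which Theorem~\ref{thm:edgeworth_expansion_penalisation} rests, so the corollary follows with essentially no additional analytic work beyond bookkeeping.
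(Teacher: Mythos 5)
Your proposal is correct and follows essentially the same route the paper intends: the paper simply notes the corollary ``can be directly inferred from the existence of Edgeworth expansions,'' and your argument (pointwise density convergence from the expansion, Scheff\'e's lemma for total variation, invariance of total variation under the invertible rescaling $y \mapsto J_n^{-1} I_n^{1/2} y$, and consistency of $I_n, J_n$ to swap in the population sandwich covariance) is just that inference spelled out. No gaps; the density's existence and the validity of the remainder are indeed supplied by Theorem~\ref{thm:edgeworth_expansion_penalisation} and the machinery of \cite{ghosh1994higher} behind it.
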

	Corollary \ref{prop:convergence_of_densities} applied to $w_0 = 0$ generalizes Theorem~1 of \cite{lyddon2019general}, and the convergence expressed in \eqref{eq:bvm_misspec} can be stated in total variation, as it is typically done for the Bernstein--von Mises type results.
	\subsection{Choice of $w_0$ in Algorithm~\ref{alg:pb_general}}\label{section:choice_w_0}
	We start by providing some intuition why, if the user simply sets $w_0 =1$ under misspecification, their prior beliefs on the parameter may not be appropriately reflected in the inference. To this end, we consider a one-dimensional toy model 
	\begin{align}
	\begin{split}\label{model:toy_normal_normal_univariate}
	x_1, \ldots, x_n  \sim N(\theta, 1), \quad \theta \sim \text N(\mu_0, \sigma_0^2).
	\end{split}
	\end{align}
	Suppose model \eqref{model:toy_normal_normal_univariate} is misspecified and in fact $x_1, \ldots, x_n \sim N(0,\sigma^2)$ for $\sigma \neq 1$. The correct model would then be
	\begin{align}
	\begin{split}\label{model:toy_normal_normal_univariate_true}
	x_1, \ldots, x_n  \sim N(\theta, \sigma^2), \quad \theta \sim \text N(\mu_0, \sigma_0^2).
	\end{split}
	\end{align}
	Drawing a sample from Algorithm~\ref{alg:pb_general}  is equivalent to computing
	\begin{align}\label{eq:normal_model_equivalence}
	\begin{split}
	\argmax_{\theta} \left\{\sum_{i=1}^{n}  \frac{-w_i(x_i - \theta)^2}{2} -  \frac{w_0(\theta- \mu_0)^2}{2\sigma_0^2} \right\}
	=  \argmax_{\theta} \left\{\sum_{i=1}^{n}  \frac{-w_i(x_i - \theta)^2}{2 \sigma^2} -  \frac{w_0(\theta- \mu_0)^2}{2\sigma_0^2 \sigma^2} \right\}.
	\end{split}
	\end{align} 
	If the user knew the true data generating mechanism they would instead compute
	\begin{align}\label{eq:normal_model_equivalence_2}
	\argmax_{\theta}  \left\{\sum_{i=1}^{n} -\frac{w_i (x_i - \theta)^2}{2 \sigma^2} -  \frac{(\theta- \mu_0)^2}{2\sigma_0^2} \right\}.
	\end{align}
	Equation \eqref{eq:normal_model_equivalence} is equivalent to \eqref{eq:normal_model_equivalence_2} if $w_0 = \sigma^2$. In fact drawing samples according to \eqref{eq:normal_model_equivalence} with $w_0 = 1$ would be  equivalent to using the correct likelihood but having the prior belief encoded by $\theta \sim N\left(\mu_0, \sigma^2 \sigma_0^2\right)$ rather than $\theta \sim N\left(\mu_0,  \sigma_0^2\right)$.
	
	We now make a connection between the above observation and the Edgeworth expansion obtained in Theorem~\ref{thm:edgeworth_expansion_penalisation}, which will further guide our choice of $w_0$.. Let $J(\theta^*)$ and $I(\theta^*)$ be the information matrices for model \eqref{model:toy_normal_normal_univariate} at $\theta^*$, that is $J(\theta^*) =1$ and $I(\theta^*) = \sigma^2$. Let $I_{\text{true}}(\theta^*)$ be the information matrix at $\theta^*$ under the correct model \eqref{model:toy_normal_normal_univariate_true}. We thus have $I_{\text{true}}(\theta^*) = \sigma^{-2}$. In the second order expansion of the Bayesian posterior of $\sigma^{-1}n^{1/2} \left(\theta - \hat{\theta} \right)$ under the correct model, as shown in \eqref{eq:edgeworth_expansion_Bayesian_posterior}, the prior-dependent term is 
	$$
	n^{-1/2}I_{\text{true}}(\theta^*)^{-1/2} \pi'(\theta^*)/\pi(\theta^*) = n^{-1/2}\sigma \pi'(\theta^*)/\pi(\theta^*). 
	$$
	On the other hand, since $I(\theta^*)^{-1/2} J(\theta^*) = \sigma^{-1}$, by Theorem \ref{thm:edgeworth_expansion_penalisation} we get that under model \eqref{model:toy_normal_normal_univariate} and Algorithm~\ref{alg:pb_general} with $w_0 = \sigma^2$, the prior-dependent term in the second order Edgeworth expansion of $\sigma^{-1}n^{1/2} \left(\tilde\theta - \hat{\theta} \right)$  is
	$$
n^{-1/2}	I(\theta^*)^{-1/2}w_0 \pi'(\theta^*)/\pi(\theta^*) = n^{-1/2}\sigma \pi'(\theta^*)/\pi(\theta^*) .
	$$
	To summarize, in this special case the prior-dependent term in the Edgeworth expansion of Posterior Bootstrap under the misspecified model is the same as the prior-dependent term in Bayesian inference under the correct model.
	
	The example described above as well as the fact that Posterior Bootstrap corrects the covariance matrix of the asymptotic distribution  motivates the following methodology of setting $w_0$, for potentially multivariate models. Suppose the prior $\pi$ satisfies \eqref{eq:prior_assumption} so that $w_0$ can be treated as a $d$-dimensional vector. The goal is to match the prior-dependent terms in Edgeworth expansion for Algorithm~\ref{alg:pb_general} and in the Laplace approximation for the Bayesian posterior with a corrected scaling $I(\theta^* )^{-1/2} J(\theta^*)$, rather than $I(\theta^*)^{1/2}$. Thus we want to set $w_0$ so that for all $y \in \R^d$ it satisfies
	\begin{align*}%
	I(\theta^{*})^{-1/2}
	\begin{pmatrix} 
	w_{0,1} \frac{\pi'(\theta^*_1)}{\pi(\theta^*_1)}  y_1 \\
	\vdots \\
	w_{0,d} \frac{\pi'(\theta^*_d)}{\pi(\theta^*_d)} y_d
	\end{pmatrix}  = \left\{ I(\theta^* )^{-1/2} J(\theta^*) \right\}^{-1}
	\begin{pmatrix} 
	\frac{\pi'(\theta^*_1)}{\pi(\theta^*_1)}  y_1 \\
	\vdots \\
	\frac{\pi'(\theta^*_d)}{\pi(\theta^*_d)} y_d
	\end{pmatrix},
	\end{align*}
	which is equivalent to 
	\begin{align}\label{eq:try_setting_w_0}
	D(w_0) \begin{pmatrix}
	\frac{\pi'(\theta^*_1)}{\pi(\theta^*_1)}  y_1 \\
	\vdots \\
	\frac{\pi'(\theta^*_d)}{\pi(\theta^*_d)} y_d
	\end{pmatrix}  = I(\theta^{*})^{1/2}  J(\theta^*)^{-1} I(\theta^* )^{1/2}
	\begin{pmatrix} 
	\frac{\pi'(\theta^*_1)}{\pi(\theta^*_1)}  y_1 \\
	\vdots \\
	\frac{\pi'(\theta^*_d)}{\pi(\theta^*_d)} y_d
	\end{pmatrix},
	\end{align}
	where $D(w_0)$ is a diagonal matrix with entries $w_{0,1}, \ldots, w_{0,d}$ on the diagonal.
	Note that for well-specified models formula  \eqref{eq:try_setting_w_0} above would imply that $w_0 =(1, \ldots, 1)$, as expected. When $I(\theta^{*})^{1/2}  J(\theta^*)^{-1} I(\theta^* )^{1/2}$ is not diagonal, it is impossible to set $w_0$ so that \eqref{eq:try_setting_w_0} is satisfied for all $y \in \R^d$. We therefore set $w_0$ as
	\begin{equation*}
	w_0^* = \argmin_{w_0} \| D(w_0) - I(\theta^{*})^{1/2}  J(\theta^*)^{-1} I(\theta^* )^{1/2} \|_F = \diag \left\{I(\theta^{*})^{1/2}  J(\theta^*)^{-1} I(\theta^* )^{1/2}  \right\},
	\end{equation*}
	where $\| \cdot \|_F$ denotes the Frobenius norm. In practice, we approximate this formula with
	\begin{equation}\label{eq:setting_w_0}
	w_0^* = \diag \left(I_n^{1/2}  J_n^{-1} I_n^{1/2} \right),
	\end{equation}
	which ensures that all elements of $w_0$ are non-negative as required.
	If the prior does not factorize, then we must constrain the diagonal elements of $D(w_0)$ in \eqref{eq:try_setting_w_0} to be equal, so instead we propose to set $w_0$ as
	\begin{equation}\label{eq:setting_bar_w_0}
	\bar{w}_0^* = \frac{1}{d}\tr \left(I_n^{1/2}  J_n^{-1} I_n^{1/2} \right).
	\end{equation}
	In the case of factorizing priors, we advocate using formula \eqref{eq:setting_w_0} rather than \eqref{eq:setting_bar_w_0} as it allows for capturing different kinds of misspecification, such as overdispersion or underdispersion, in different coordinates. If parameter $\theta$ is one-dimensional, the formulae for $w_0^*$ and $\bar{w}_0^*$ coincide.
	
	An alternative strategy would be performing cross-validation, which could be particularly useful for prediction problems. The above approach, however, offers an automatic and principled way of setting $w_0$ so that the prior information is appropriately incorporated in the inference. This method has also a natural extension to hierarchical models as we demonstrate in Section~\ref{section:hierarchical_models}.
	\section{Extending Posterior Bootstrap to hierarchical models}\label{section:hierarchical_models}
	\subsection{Fixed number of groups $K$}
	We consider the following potentially misspecified hierarchical model. 
	\begin{gather}
	\begin{aligned}\label{model:hierarchical}
x_{k,1},\ldots x_{k,n_k} & \sim f_k(x_{k,i}| \theta_k),&   k = 1, \ldots, K,\\
	\theta_k  & \sim g_k(\theta_k|\lambda),  & k = 1, \ldots, K,\\
	\lambda &\sim p(\lambda).
	\end{aligned} 
	\end{gather}
	Each group $k$ has a sample size $n_k$ for $k=1, \ldots, K$. We define $x_{1:K, 1:n_k}$ as a vector of all observations $x_{k,j}$ for $j = 1, \ldots, n_k$ and $k = 1, \ldots, K$. The marginal prior on  $\theta  = (\theta_1, \ldots, \theta_K)$ is 
	\begin{equation}\label{eq:hierarchical_prior}
	\tilde{\pi}(\theta) = \int \left\{\prod_{k=1}^K g_k(\theta_k|\lambda) p(\lambda) \right\}\mathrm{d} \lambda.
	\end{equation}
	If the integral in \eqref{eq:hierarchical_prior} is tractable, then Algorithm~\ref{alg:pb_general} can be applied to model \eqref{model:hierarchical}, with the joint   prior $\tilde{\pi}(\theta)$ on $\theta$. 
	
	When this integral cannot be computed, we propose instead to use Algorithm~\ref{alg:pb_hierarchical_gibbs}, our Posterior Bootstrap method for hierarchical models. We let $P^{HB,p}(\cdot|x_{1:K, 1:n_k})$ denote the distribution of $(\tilde{\theta}_1, \ldots, \tilde{\theta}_K, \tilde{\lambda})$ under Algorithm~\ref{alg:pb_hierarchical_gibbs} and by $p^{HB,p}(\cdot|x_{1:K, 1:n_k})$ we denote its density. The prior information in Algorithm~\ref{alg:pb_hierarchical_gibbs} is introduced via penalization, in the spirit of Algorithm~\ref{alg:pb_general}. For each $k = 1, \ldots, K$ parameter $w_{0, [k]}$ plays for data $x_{k,1}, \ldots, x_{k, n_k}$ the role of $w_0$ used in Algorithm~\ref{alg:pb_general}. The fact that the penalization term depends on $\bar{\lambda}$, which is kept the same across groups, induces correlation between $\theta_1, \ldots, \theta_K$, in a similar way to what happens under standard Bayesian inference. 
	
	It is worth emphasizing here an additional advantage of using Algorithm~\ref{alg:pb_hierarchical_gibbs} for model \eqref{model:hierarchical}, even if the integral in \eqref{eq:hierarchical_prior} is tractable. Since the prior $\tilde{\pi}$ does not factorize, Algorithm~\ref{alg:pb_general} would need to be used with $w_0$ treated as a real number, whereas Algorithm~\ref{alg:pb_hierarchical_gibbs}  makes use of the conditional independence structure between $\theta_1, \ldots, \theta_K$ and allows the user to set $w_{0, [k]}$ separately for each group. Arguments outlined in Section~\ref{section:choice_w_0} show that this approach accounts better for potential misspecification of $f_1, \ldots, f_K$. 
	
	\begin{algorithm}[!ht]
		\caption{Posterior Bootstrap for hierarchical models via prior penalization, returns samples from $P^{HB,p}(\cdot|x_{1:K, 1:n_k})$}\label{alg:pb_hierarchical_gibbs}
		\begin{algorithmic}[1]	
			\For {k = 1, \ldots, K} 
			\State{ Set $\hat{\theta}_k = \argmax_{\theta_k}\sum_{i=1}^{n_k} \log f_k(x_{k,i}| \theta_k)$.}
			\EndFor
			\For {j = 1, \ldots, N} \Comment{\textbf(in parallel)  }
			\State{Sample $\bar{\lambda}^{(j)} \sim p\left(\lambda|\hat{\theta}_1, \ldots, \hat{\theta}_K\right)$.}
			\For {k = 1, \ldots, K} 
			\State{Draw $w_1^{(j)}, \ldots, w_{n_k}^{(j)} \sim\text{Exp}(1)$.}
			\State{ Set $\tilde{\theta}_k^{(j)} = \argmax_{\theta_k} \left\{ \sum_{i=1}^{n_k} w_i^{(j)} \log f_k(x_{k,i}| \theta_k) + w_{0,[k]}^T \log g_k\left(\theta_k|\bar{\lambda}^{(j)}\right) \right\}$.}
			\EndFor
			\State{Sample $\tilde{\lambda}^{(j)} \sim p\left(\lambda|\tilde{\theta}_1^{(j)}, \ldots, \tilde{\theta}_K^{(j)}\right)$.}
			\EndFor
			\State \Return {Samples $\Big\{ \left(\tilde{\lambda}^{(j)}, \tilde{\theta}_1^{(j)}, \ldots, \tilde{\theta}_K^{(j)}\right) \Big\}_{j=1}^N$. }
		\end{algorithmic}
	\end{algorithm}

We let $\theta_k^* = \argmax_{\theta}\E_{p_k^*}\left\{\log f_k(X|\theta) \right\}$, where $x_{k,1}, \ldots, x_{k, n_k}$ are generated from $P_k^*$. We treat here $\theta_1^*, \ldots, \theta_K^*$ as deterministic values, see Section~\ref{section:inference_lambda} for a scenario where $\theta_1^*, \ldots, \theta_K^*$ are themselves random variables. Firstly, as a direct consequence of Theorem~\ref{thm:edgeworth_expansion_penalisation}, we get the first order approximation for the asymptotic distribution of $\tilde{\theta}_k$. For any measurable set $A \subseteq \Theta$ we have
\begin{equation*}
P^{HB,p} \left\{ n^{1/2} \left( \tilde{\theta}_k - \hat{\theta}_k \right) \in A  \big| x_{1:K, 1:n_k} \right\} \to \text{pr} \left(Z \in A\right), \quad   Z \sim N\left\{0, J_k(\theta_k^*)^{-1} I_k(\theta_k^*) J_k(\theta_k^*)^{-1} \right\},
\end{equation*}
where $I_k$ and $J_k$ are given by \eqref{eq:I_J_matrices}, computed for $f_k$.
	
	We explain below why $\bar{\lambda}$ in Algorithm \ref{alg:pb_hierarchical_gibbs} is drawn from the posterior $p\left(\lambda| \hat{\theta}_1, \ldots, \hat{\theta}_K\right)$ rather then from the prior $p(\lambda)$, even though the latter may seem more intuitive. To this end, let us analyse the impact of the prior on the second order Edgeworth expansion on $\theta_k$. By construction of Algorithm~\ref{alg:pb_hierarchical_gibbs} and by Theorem~\ref{thm:edgeworth_expansion_penalisation}, if $n_1, \ldots, n_k$ are large, the prior-dependent term is
	\begin{equation}\label{eq:hierarchical_model_prior_impact}
I_k(\theta_k^{*})^{-1/2} n_k^{-1/2}	w_{0,[k]}^T \int \nabla_{\theta_k} \left\{\log g_k(\theta^*_k|\lambda) \right\} p(\lambda| \theta^*_1, \ldots,\theta^*_K) \mathrm{d} \lambda.
	\end{equation} On the other hand, since in fact the prior on $\theta$ has a form \eqref{eq:hierarchical_prior}, we would want the prior impact term to be given by $I_k(\theta_k^{*})^{-1/2} n_k^{-1/2}w_{0,[k]}^T \nabla_{\theta_k}\left\{ \log\tilde{\pi}(\theta^*)\right\}$. We observe that 
	\begin{align*}
	\nabla_{\theta_k} \log\tilde{\pi}(\theta) = & \frac{\nabla_{\theta_k}\left\{\int \prod_{j=1}^K g_k(\theta_j|\lambda) p(\lambda) \mathrm{d} \lambda\right\}}{\int \prod_{j=1}^K g_k(\theta_j|\lambda) p(\lambda) \mathrm{d} \lambda} \\
	=  & \frac{\int \left\{\nabla_{\theta_k}g_k(\theta_k|\lambda) \prod_{j=1, j \neq k}^K g_k(\theta_j|\lambda) p(\lambda) \right\} \mathrm{d} \lambda}{\int \prod_{j=1}^K g_k(\theta_j|\lambda) p(\lambda) \mathrm{d} \lambda}\\
	= & \int \left\{\frac{\nabla_{\theta_k} g_k(\theta_k|\lambda)}{g_k(\theta_k|\lambda)} \frac{\prod_{j=1}^Kg_k(\theta_j|\lambda) p(\lambda)}{\int \prod_{j=1}^K g_k(\theta_j|\lambda) p(\lambda) \mathrm{d} \lambda} \right\} \mathrm{d}\lambda \\
	= & \int \nabla_{\theta_k} \left\{\log g_k(\theta_k|\lambda) \right\} p(\lambda| \theta_1, \ldots, \theta_K) \mathrm{d} \lambda,
	\end{align*} 
	so $I_k(\theta_k^{*})^{-1/2} n_k^{-1/2} w_{0,[k]}^T \nabla_{\theta_k} \log\tilde{\pi}(\theta^*)$ is  indeed equal to \eqref{eq:hierarchical_model_prior_impact}. We therefore conclude  that if model \eqref{model:hierarchical} is well-specified, the impact of the prior plays asymptotically the same role as in case of Bayesian inference. Furthermore, this reasoning justifies setting  $w_{0,[1]}, \ldots, w_{0,[K]}$ using rules outlined in Section~\ref{section:choice_w_0} for $w_0$ in Algorithm~\ref{alg:pb_general}.
	
When presenting  Algorithm~\ref{alg:pb_hierarchical_gibbs}, we assumed implicitly that the user can sample from the posterior $p(\lambda|\theta_1, \ldots, \theta_K)$. In conjugate examples one can draw $\lambda$ directly, 
otherwise it would be necessary to employ a Markov chain Monte Carlo sampler for this step. We let $ n=\min_{k=1, \ldots, K} n_k$. Proposition~\ref{prop:wlb_lambda_posterior} demonstrates the asymptotic properties of $\tilde{\lambda}$ drawn from Algorithm~\ref{alg:pb_hierarchical_gibbs}, covering both the well-specified and misspecified cases. It shows that as $n \to \infty$, the distribution on $\tilde{\lambda}$ will tend to the same distribution as the Bayesian posterior on~$\lambda$.   A similar result was proved by \cite{petrone2014bayes} for marginal maximum likelihood estimators.

\begin{prop}\label{prop:wlb_lambda_posterior} Consider model \eqref{model:hierarchical} with a fixed number of groups $K$ and suppose $\tilde{\lambda}$ is drawn from Algorithm~\ref{alg:pb_hierarchical_gibbs}.  Suppose that Assumptions \ref{assump:log_lik} -- \ref{assump:positive_definite}  hold for each  likelihood function $f_k$  and that Assumption \ref{assump:prior_smoothness} is satisfied for $g_k$, for $k=1, \ldots, K$. Assume also that for each $\lambda$ the density $p(\lambda|\theta)$ is a continuous function of $\theta$ and that $p(\lambda|\theta)$ is jointly bounded as a function of $\theta$ and $\lambda$. Then as $n \to \infty$, we have the following convergence of densities
	\begin{equation}\label{eq:lambda_weak_convergence}
	p^{HB,p}\left(\tilde{\lambda}|x_{1:K, 1:n_k}\right) \to 	p\left(\tilde{\lambda}| \theta^*\right),
	\end{equation}
	with probability going to 1 with respect to $x_{1:K, 1:n_k}$.
\end{prop}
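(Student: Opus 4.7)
The plan is to write the marginal density of $\tilde{\lambda}$ produced by Algorithm~\ref{alg:pb_hierarchical_gibbs} as an expectation of the conditional density $p(\tilde{\lambda}|\tilde{\theta}_1,\ldots,\tilde{\theta}_K)$ against the joint Posterior Bootstrap distribution on $(\tilde{\theta}_1,\ldots,\tilde{\theta}_K)$,
\[
p^{HB,p}(\tilde{\lambda}|x_{1:K,1:n_k}) = \int p(\tilde{\lambda}|\tilde{\theta}_1,\ldots,\tilde{\theta}_K)\, dP^{HB,p}(\tilde{\theta}_1,\ldots,\tilde{\theta}_K|x_{1:K,1:n_k}),
\]
and then to collapse the outer expectation onto $p(\tilde{\lambda}|\theta^*)$ by showing that the mixing measure concentrates at $\theta^*=(\theta_1^*,\ldots,\theta_K^*)$ with probability one. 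This keeps the argument structural: consistency of $\tilde{\theta}_k$ plus continuity and boundedness of $p(\lambda|\theta)$ plus bounded convergence.

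First I would establish the required concentration. By Assumptions~\ref{assump:log_lik}--\ref{assump:identifiability} applied group by group, $\hat{\theta}_k\to \theta_k^*$ almost surely as $n_k\to\infty$. Conditionally on a realization of $\bar{\lambda}^{(j)}$, the inner step of Algorithm~\ref{alg:pb_hierarchical_gibbs} is exactly Algorithm~\ref{alg:pb_general} applied to group $k$ with prior $g_k(\cdot|\bar{\lambda}^{(j)})$, which satisfies Assumption~\ref{assump:prior_smoothness}. Corollary~\ref{prop:convergence_of_densities} therefore yields $n_k^{1/2}(\tilde{\theta}_k-\hat{\theta}_k)\Rightarrow N\{0,J_k(\theta_k^*)^{-1}I_k(\theta_k^*)J_k(\theta_k^*)^{-1}\}$ in total variation. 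Crucially, the penalty term $w_{0,[k]}^T\log g_k(\theta_k|\bar{\lambda})$ contributes $O(1)$ to an objective of order $O(n_k)$, so this concentration is uniform in $\bar{\lambda}$ over compact neighborhoods; since $\bar{\lambda}\sim p(\lambda|\hat{\theta}_1,\ldots,\hat{\theta}_K)$ is tight (its location parameter converges to $\theta^*$), the marginalization over $\bar{\lambda}$ preserves concentration. Combining groups gives $P^{HB,p}\{(\tilde{\theta}_1,\ldots,\tilde{\theta}_K)\in U|x_{1:K,1:n_k}\}\to 1$ for every neighborhood $U$ of $\theta^*$, with probability one in the data.

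Next I would fix $\tilde{\lambda}$, set $h(\theta):=p(\tilde{\lambda}|\theta)$, and use the hypotheses that $h$ is continuous in $\theta$ and jointly bounded, $M:=\sup_{\theta,\lambda}p(\lambda|\theta)<\infty$. For any $\epsilon>0$, pick a neighborhood $U$ of $\theta^*$ on which $|h(\theta)-h(\theta^*)|<\epsilon$; then
\[
\bigl|p^{HB,p}(\tilde{\lambda}|x_{1:K,1:n_k}) - p(\tilde{\lambda}|\theta^*)\bigr|
\le \epsilon + 2M\, P^{HB,p}\{(\tilde{\theta}_1,\ldots,\tilde{\theta}_K)\notin U\mid x_{1:K,1:n_k}\}.
\]
Sending $n\to\infty$ drives the second term to zero with probability one by the previous step, and $\epsilon$ was arbitrary, giving the pointwise convergence \eqref{eq:lambda_weak_convergence}.

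The main obstacle I expect is justifying the uniformity in $\bar{\lambda}$ of the concentration of $\tilde{\theta}_k$ around $\hat{\theta}_k$: Corollary~\ref{prop:convergence_of_densities} is stated for a fixed prior, while here the prior $g_k(\cdot|\bar{\lambda})$ is itself random and coupled through all groups by a common $\bar{\lambda}$. The cleanest remedy is to argue that $\bar{\lambda}$ lies in a compact set with probability arbitrarily close to one (using continuity of $p(\lambda|\theta)$ at $\theta^*$ and joint boundedness) and that the $O(1)$-versus-$O(n_k)$ scaling makes the Edgeworth expansion in Theorem~\ref{thm:edgeworth_expansion_penalisation} uniform over such compacta. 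A secondary care point is that the statement only requires pointwise convergence of densities for each $\tilde{\lambda}$, which is what the bounded-convergence bound above delivers without any need for a uniform-in-$\tilde{\lambda}$ rate.
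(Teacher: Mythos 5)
Your proposal follows essentially the same route as the paper: the paper also writes $p^{HB,p}(\tilde\lambda|x_{1:K,1:n_k})$ as the mixture $\int p(\tilde\lambda|\tilde\theta)\,p^{HB,p}(\tilde\theta|x_{1:K,1:n_k})\,\mathrm{d}\tilde\theta$, splits this integral over $A_\epsilon=\{\tilde\theta:\|\tilde\theta-\theta^*\|\le\epsilon\}$ and its complement, and concludes from concentration of $\tilde\theta$ at $\theta^*$ together with continuity and joint boundedness of $p(\lambda|\theta)$. Your additional discussion of uniformity in $\bar\lambda$ is simply a more careful treatment of the concentration step that the paper asserts in one line, and it does not alter the structure of the argument.
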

We present the proof in Supplementary Material \ref{supplement:proof_lambda_finite_K}.

	\subsection{Number of groups $K$ going to infinity}\label{section:inference_lambda}
We consider model \eqref{model:hierarchical} with $f_k(\cdot |\theta) =f(\cdot |\theta) $ and $g_k(\cdot|\lambda) = g(\cdot|\lambda)$ for all $\theta, \lambda$ and $k = 1, \ldots, K$.
An alternative to sampling $\tilde{\lambda}$ using Markov chain Monte Carlo would be performing another round of Posterior Bootstrap, where $\tilde{\theta}_1, \ldots, \tilde{\theta}_K$ is used as data. The caveat is, however, that since only asymptotic results are known for Posterior Bootstrap, for a fixed $K$ this strategy would lack a theoretical underpinning. We assume instead that the number of classes, now denoted by $K_n$, as well as the number of observations $n_k$ in each class, go to infinity.
The penalization strategy then yields 
	\begin{align}\label{eq:lambda_posterior_bootstrap}
	\begin{split}
	\tilde{\lambda} = & \argmax_{\lambda} \left\{ \sum_{k=1}^{K_n} v_k \log g(\tilde{\theta}_k| \lambda) + w_{g}^T \log p\left(\lambda\right) \right\}, \quad v_1, \ldots, v_{K_n} \sim \text{Exp}(1).
	\end{split}
	\end{align}
	The parameter $w_g$ is set according to methods described in Section~\ref{section:choice_w_0} for $w_0$, that is, formula \eqref{eq:setting_w_0} or \eqref{eq:setting_bar_w_0} for matrices $I_g$ and $J_g$, which we define explicitly in \eqref{eq:hierarchical_info_matrices}. We present the full resulting algorithm in Supplementary Material~\ref{supplement:proof_lambda_K_to_infty}, and we let $P^{HB,pp}\left(\cdot|x_{1:K, 1:n_k} \right)$ denote the distribution of $(\tilde{\theta}_1, \ldots, \tilde{\theta}_{K_n}, \tilde{\lambda})$ under this algorithm.

	We now describe the assumed data-generating mechanism of $x_{1:K_n, 1:n_k}$ under this scenario. Let $Q^*$ be a distribution on a space of distributions on $\R^s$ where $s$ is the dimension of a single observation $x_{k,i}$ for $k= 1, \ldots, K_n, i = 1, \ldots, n_k$. We have
	\begin{align}\label{eq:hierarchical_data_generating}
	\begin{split}
	x_{k,1},\ldots x_{k,n_k} &\sim P_k^*, \quad k= 1, \ldots, K_n,\\
	P_1^*, \ldots, P_{K_n} &\sim Q^*.
	\end{split}
	\end{align} 
	We use our standard notation for pseudo-true parameters $\theta_k^*$, that is, $\theta_k^* = \arg\max_{\theta_k} E_{p_k^*}\left\{\log f(x|\theta_k)\right\}$. Therefore, $\theta_1^*, \ldots, \theta_K^*$ are independent and identically distributed with respect to $Q^*$.  We additionally define $\lambda^* =  \arg\max_{\lambda} E_{Q^*}\left\{  \log g(\Theta^*|\lambda) \right\}$ and 
	$\hat{\lambda}_{\theta^*} =  \arg\max_{\lambda} \left\{ \sum_{k=1}^{K_n} \log g(\theta^*_k|\lambda) \right\}$.
	The information matrices for $\lambda$ are
	\begin{equation}\label{eq:hierarchical_info_matrices}
	I_g(\lambda) = E_{Q^*} \left[\left\{\nabla_{\lambda} \log g(\Theta^*| \lambda)\right\} \left\{\nabla_{\lambda} \log g(\Theta^*| \lambda)\right\}^T\right], \quad J_g(\lambda) = -  E_{Q^*}\left[ D_{\lambda}^2\left\{  \log g(\Theta^*| \lambda)\right\}\right].
	\end{equation}
	Theorem \ref{thm:hierarchical_large_K_result} shows that if $\tilde{\lambda}$ is drawn from \eqref{eq:lambda_posterior_bootstrap} and the rate at which  $K_n$ goes to infinity is slow enough compared with the rate at which $n\to \infty$, the distribution of $\tilde{\lambda}$ is asymptotically normal with the sandwich covariance matrix. 
	
		\begin{thm}\label{thm:hierarchical_large_K_result}
		Suppose that the data generating mechanism follows \eqref{eq:hierarchical_data_generating} and assume that  $K_n \to \infty$ as $n \to \infty$ at a rate such  $K_n = o\left(n^{1/2}/\log n\right)$, where $ n=\min_{k=1, \ldots, K_n} n_k$. Assume also that $\tilde{\lambda}$ is drawn from \eqref{eq:lambda_posterior_bootstrap}. Then under regularity conditions stated explicitly in Supplementary Material~\ref{supplement:proof_lambda_K_to_infty} we have
		\begin{equation*}
		P^{HB,pp}\left\{K_n^{1/2}\left( \tilde{\lambda} - \hat{\lambda}_{\theta^*} \right) \in A  \big|{x}_{1:K_n, 1:n_k} \right\} \to 	\textnormal{pr}  \left(Z \in A\right), \quad  Z \sim N\left\{0, J_g(\lambda^*)^{-1} I_g(\lambda^*) J_g(\lambda^*)^{-1} \right\},
		\end{equation*}
		with probability going to 1 with respect to ${x}_{1:K_n, 1:n_k}$ as $n \to \infty$, for matrices $I_g$ and $J_g$ defined in \eqref{eq:hierarchical_info_matrices}.
	\end{thm}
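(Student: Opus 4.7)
The plan is to view the step computing $\tilde\lambda$ in \eqref{eq:lambda_posterior_bootstrap} as a second-stage Weighted Likelihood Bootstrap applied to pseudo-data $\tilde\theta_1,\ldots,\tilde\theta_{K_n}$, and to reduce its analysis to the idealised problem in which these inputs are replaced by the true pseudo-truths $\theta_1^*,\ldots,\theta_{K_n}^*$, which are i.i.d.\ draws from $Q^*$ by \eqref{eq:hierarchical_data_generating}. For the idealised problem the conclusion would follow directly from Corollary~\ref{prop:convergence_of_densities} applied at sample size $K_n$, yielding the sandwich limit $J_g(\lambda^*)^{-1}I_g(\lambda^*)J_g(\lambda^*)^{-1}$. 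The natural centre of the expansion is therefore $\hat\lambda_{\theta^*}$, characterised by $\sum_k\nabla_\lambda\log g(\theta_k^*|\hat\lambda_{\theta^*})=0$.

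Concretely, I would begin from the first-order optimality condition for $\tilde\lambda$, perform a Taylor expansion of the score around $\hat\lambda_{\theta^*}$, and invert the sample Hessian $H_n = K_n^{-1}\sum_k v_k D_\lambda^2 \log g(\tilde\theta_k|\lambda^{\dagger})$ evaluated at an intermediate point $\lambda^{\dagger}$ (the contribution of $w_g$ to the Hessian is $O(1)$ and is absorbed into $o_p(1)$ after division by $K_n$). Convergence $H_n \to -J_g(\lambda^*)$ in probability would follow from the weak LLN for the i.i.d.\ sample $(\theta_k^*,v_k)$, after absorbing the perturbations $\lambda^{\dagger}-\lambda^*=O_p(K_n^{-1/2})$ and $\max_k\|\tilde\theta_k-\theta_k^*\|$, the latter controlled via the first-stage WLB rate from \eqref{eq:bvm_wlb} together with a union bound across the $K_n$ groups. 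Using $\sum_k\nabla_\lambda\log g(\theta_k^*|\hat\lambda_{\theta^*})=0$, the score at $\hat\lambda_{\theta^*}$ decomposes as
\begin{align*}
S_n = \sum_{k=1}^{K_n}(v_k-1)\nabla_\lambda\log g(\theta_k^*|\lambda^*) + \sum_{k=1}^{K_n} v_k\left\{\nabla_\lambda\log g(\tilde\theta_k|\hat\lambda_{\theta^*})-\nabla_\lambda\log g(\theta_k^*|\hat\lambda_{\theta^*})\right\} + R_n,
\end{align*}
where $R_n$ collects the $O(1)$ prior contribution $w_g^T\nabla_\lambda\log p(\hat\lambda_{\theta^*})$ and the lower-order corrections from swapping $\hat\lambda_{\theta^*}$ for $\lambda^*$ inside the leading score. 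The first sum drives the limit: conditionally on the data and on $(\theta_k^*)$ it is a sum of i.i.d.\ mean-zero summands whose conditional covariance $K_n^{-1}\sum_k\nabla_\lambda\log g(\theta_k^*|\lambda^*)\nabla_\lambda\log g(\theta_k^*|\lambda^*)^{T}$ converges to $I_g(\lambda^*)$ by the LLN, and a conditional Lindeberg--Feller CLT yields $K_n^{-1/2}$ times this sum tending to $N(0,I_g(\lambda^*))$ with probability going to $1$ in the data.

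The hardest part will be controlling the substitution error $\sum_k v_k\{\nabla_\lambda\log g(\tilde\theta_k|\hat\lambda_{\theta^*})-\nabla_\lambda\log g(\theta_k^*|\hat\lambda_{\theta^*})\}$, because $\tilde\theta_k$ depends on the within-group data through the first-stage Posterior Bootstrap. A Taylor expansion in $\theta$ reduces it to $\sum_k v_k D^2_{\lambda\theta}\log g(\theta_k^*|\hat\lambda_{\theta^*})(\tilde\theta_k-\theta_k^*)$ plus a quadratic remainder. The linear part is $O_p((K_n/n)^{1/2})$ by independence of the first-stage WLB draws across $k$ and $\tilde\theta_k-\theta_k^* = O_p(n_k^{-1/2})$ from Theorem~\ref{thm:edgeworth_expansion_penalisation}, hence $o_p(K_n^{1/2})$. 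The quadratic remainder is bounded uniformly by $\sum_k v_k\|\tilde\theta_k-\theta_k^*\|^2$ via a maximal inequality on $\max_k\|\tilde\theta_k-\theta_k^*\|$; it is at this step that the growth condition $K_n = o(n^{1/2}/\log n)$ enters, ensuring that the accumulated Taylor remainder is $o_p(K_n^{1/2})$. Assembling the Hessian limit, the conditional CLT on the leading score, and these negligibility bounds via Slutsky's theorem delivers the claimed sandwich normal limit.
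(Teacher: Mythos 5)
Your architecture is essentially the paper's: expand the penalized weighted score at $\hat\lambda_{\theta^*}$, show the normalized Hessian tends to $J_g(\lambda^*)$, apply a CLT to $K_n^{-1/2}\sum_k(v_k-1)\nabla_\lambda\log g(\theta_k^*|\cdot)$ to produce $I_g(\lambda^*)$, and show the prior and substitution terms are negligible before invoking Slutsky. The genuine gap is in how you control the substitution error and the uniformity over groups. You appeal to $\tilde\theta_k-\theta_k^*=O_p(n_k^{-1/2})$ from \eqref{eq:bvm_wlb} or Theorem~\ref{thm:edgeworth_expansion_penalisation} ``together with a union bound across the $K_n$ groups'' and ``a maximal inequality on $\max_k\|\tilde\theta_k-\theta_k^*\|$''. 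Those first-stage results are purely asymptotic, fixed-$k$ distributional statements; they provide no quantitative tail bound that can be summed over $K_n\to\infty$ groups. The paper's key technical ingredient, Lemma~\ref{lemma:auxiliary_hierarchical}, exists precisely to fill this hole: via the multivariate Berry--Esseen theorem with explicit constants, Gaussian tail estimates, and the uniform envelope conditions of Assumptions~\ref{assump_hierarchical:smoothness_theta} and~\ref{assump_hierarchical:positive_definite_theta}, it proves $\sup_{k\le K_n}P^{HB,pp}\left(\|\tilde\theta_k-\theta_k^*\|>\epsilon_n\mid x\right)=O\left(n^{-1/2}\right)$ for any $\epsilon_n$ with $\epsilon_n n^{1/2}/(\log n)^{1/2}\to\infty$, so that the union bound gives $K_n\,O(n^{-1/2})=o(1)$. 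This, together with choosing $\epsilon_n$ so that also $K_n^{1/2}\epsilon_n=o(1)$, is exactly where the rate $K_n=o\left(n^{1/2}/\log n\right)$ is used; your sketch instead attributes the growth condition to the quadratic Taylor remainder, which misses its real role. Without a Lemma-2-type uniform quantitative bound, neither your union bound nor your maximal inequality can be carried out.

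Two further steps need repair. First, you invert the Hessian at an intermediate point and assert $\lambda^{\dagger}-\lambda^*=O_p(K_n^{-1/2})$, which presupposes a rate version of the conclusion; the paper first proves consistency of $\tilde\lambda$ by a separate argument, decomposing $K_n^{-1}\sum_k v_k\left\{\ell_g(\theta_k^*,\tilde\lambda)-\ell_g(\theta_k^*,\lambda^*)\right\}$ on the event $\sup_k\|\tilde\theta_k-\theta_k^*\|\le\epsilon_n$ using the Lipschitz conditions of Assumption~\ref{assump_hierarchical:smoothness_lambda_wrt_theta}, the identifiability Assumption~\ref{assump_hierarchical:identifiability_lambda}, and the weighted-bootstrap law of large numbers of Newton and Raftery, and only then passes to the Hessian limit. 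Second, your claim that the linear part of the substitution error is $O_p\left((K_n/n)^{1/2}\right)$ ``by independence of the first-stage WLB draws across $k$'' is not immediate: in Algorithm~\ref{alg:pb_hierarchical_large_K} the $\tilde\theta_k$ share the common $\bar\lambda$ and are only conditionally independent given it, and the component $\hat\theta_k-\theta_k^*$ is non-random conditionally on the data, so any cancellation must be argued in data-probability separately. The paper sidesteps all of this with the cruder but robust bound $\|\nabla_\lambda h_{K_n}(\hat\lambda_{\theta^*})\|\le L_2\sum_k v_k\|\tilde\theta_k-\theta_k^*\|=O_P(K_n\epsilon_n)$ on the good event, which suffices because $K_n^{1/2}\epsilon_n=o(1)$. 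Your finer expansion could in principle yield the same negligibility, but as written both negligibility claims rest on bounds you have not established.
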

	The regularity conditions used in Theorem~\ref{thm:hierarchical_large_K_result} are analogues of Assumptions \ref{assump:log_lik} -- \ref{assump:prior_smoothness} for functions $ \log f(x|\theta)$ and $ \log g(\theta| \lambda)$. We additionally require $ \log g(\theta| \lambda)$ to be sufficiently smooth both with respect to $\lambda$ and $\theta$.
    The proof of Theorem~\ref{thm:hierarchical_large_K_result}  is reported in Supplementary Material~\ref{supplement:proof_lambda_K_to_infty}. 
	
Careful examination of the proofs of Proposition \ref{prop:wlb_lambda_posterior} and Theorem \ref{thm:hierarchical_large_K_result} shows that the same results would hold if we replaced $\tilde{\lambda}$ with $\bar{\lambda}$. If the parameter of interest is $\lambda$, we however advocate using $\tilde{\lambda}$ for inference, as it gives more accurate results. We illustrate the difference between $\bar{\lambda}$ and $\tilde{\lambda}$ on a synthetic dataset in Supplementary Material~\ref{supplement:dirichlet_allocation}. On the other hand, if only parameters $\theta_1, \ldots, \theta_K$ are of interest, the user can skip the step of drawing $\tilde{\lambda}$.

	To summarize, we have developed an algorithm for model \eqref{model:hierarchical} that demonstrates good asymptotic properties, both for inference on $\theta_1, \ldots, \theta_K$ and on  $\lambda$, while being robust to misspecification. Furthermore, even when the model is well-specified, the fact that Algorithm~\ref{alg:pb_hierarchical_gibbs} can be fully parallelized makes it an attractive alternative to  Metropolis-within-Gibbs samplers typically used for \eqref{model:hierarchical}.
	
	\section{Posterior Bootstrap via pseudo-samples from the prior predictive}\label{section:pb_pseudosamples}
	\subsection{Methodology}
	\label{subsection:pb_pseudosamples_methodology}
	Using Algorithm~\ref{alg:pb_general} is possible for most standard priors, however, for certain  distributions  combining them with Algorithm~\ref{alg:pb_general} would have adverse consequences for the inference, or could even be impossible from a computational point of view. Firstly, formulation of Algorithm~\ref{alg:pb_general} necessitates that evaluating $\log \pi(\theta)$ is possible for every $\theta$, which is not the case, for example for the original spike-and-slab prior \citep{mitchell1988bayesian}. Moreover, there is a widely used group of priors that are infinite on the boundary on the set, for example the Gamma distribution $\Gamma(\alpha,\beta)$ with $\alpha<1$, Dirichlet prior $\mathcal D(\alpha_1, \cdots, \alpha_k)$ with some of the $\alpha_j<1$ as advocated for instance by \cite{rousseau2011asymptotic} for mixture models. For these priors Algorithm~\ref{alg:pb_general} would always return the boundary of the set, regardless of the data. 
	
	To overcome these drawbacks, we consider another version of Posterior Bootstrap, proposed originally by \cite{fong2019scalable}. We summarize it in Algorithm~\ref{alg:pb_pseudosamples_prior}. The prior information is incorporated into the algorithms via drawing so-called pseudo-samples from the prior predictive. 
	\begin{algorithm}[!ht]
		\caption{Posterior Bootstrap via pseudo-samples from the prior predictive}\label{alg:pb_pseudosamples_prior}
		\begin{algorithmic}[1]	
			\For {j = 1, \ldots, N} \Comment{\textbf(in parallel)  }
			\State{Draw $\bar{\theta}_1^{(j)}, \ldots, \bar{\theta}_T^{(j)}\sim \pi(\theta)$.}
			\State{Generate prior pseudo-samples $x_{n+i}^{(j)} \sim f\left(\cdot |\bar{\theta}_i^{(j)}\right)$ for $i =1, \ldots, T$.}
			\State{Draw $w_1^{(j)}, \ldots, w_{n}^{(j)} \sim\text{Exp}(1)$ and $w_{n+1}^{(j)}, \ldots, w_{n+T}^{(j)} \sim\text{Exp}(T/c)$.}
			\State{Set $\tilde{\theta}^{(j)} = \argmax_{\theta} \left\{\sum_{i=1}^{n} w_i^{(j)} \log f(x_{i}| \theta) + \sum_{i=1}^{T} w_{n+i}^{(j)} \log f\left(x_{n+i}^{(j)}| \theta\right) \right\}$.}
			\EndFor
			\State \Return {Samples $\big\{\tilde{\theta}^{(j)}\big\}_{j=1}^N$. }
		\end{algorithmic}
	\end{algorithm}
	
	To obtain the Edgeworth expansion of Algorithm~\ref{alg:pb_pseudosamples_prior}, we no longer need Assumption \ref{assump:prior_smoothness} to hold so replace it with a milder condition. To this end we define the prior predictive density as $f_{\pi}(x)=\int f(x|\theta) \pi(\theta) \mathrm{d} \theta$. 
	
	\begin{assumption}\label{assump:prior_integral} \normalfont The quantity
		$\alpha_{\theta, \pi}  =\int   \nabla_{\theta} \left\{\log f(x|\theta)\right\}  f_{\pi}(x) \mathrm{d} x$ is continuous  as a function of $\theta\in B$.
	\end{assumption}
	
	\begin{thm}\label{thm:edgeworth_expansion_pseudo}
		Consider model \eqref{eq:Bayesian_model} with a parameter of interest $\theta \in \R^d$ and suppose Assumptions \ref{assump:log_lik} -- \ref{assump:weights}, \ref{assump:prior_integral} hold. As $n \to \infty$ with probability going to 1 the Edgeworth expansion of the density of $I_n^{-1/2} J_{n} n^{1/2}(\tilde{\theta} - \hat{\theta})$  for $\tilde{\theta}$ drawn from Algorithm~\ref{alg:pb_pseudosamples_prior} is given by:
		\begin{equation}\label{eq:edgeworth_expansion_pseudo_misspec_multivariate}
		\phi(y) \left\{1 + I(\theta^{*})^{-1/2}\frac{ \left( c\alpha_{\theta^*, \pi}\right)^T y}{n^{1/2}} + \frac{P_{3,n}(y)}{n^{1/2}} \right\} + o \left(n^{-1/2}\right)
		\end{equation}
		for a constant $\alpha_{\theta^*, \pi}$ defined like in Assumption \ref{assump:prior_integral} for $\theta=\theta^*$,	and for $P_{3,n}$ defined in Theorem~\ref{thm:edgeworth_expansion_penalisation}.
	\end{thm}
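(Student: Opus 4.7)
The plan is to mirror the proof strategy of Theorem~\ref{thm:edgeworth_expansion_penalisation}, treating the pseudo-sample block as a random surrogate for the prior penalization whose expectation produces the desired $n^{-1/2}$ shift. First I would write the first-order optimality condition for $\tilde{\theta}$ and Taylor expand each summand around $\hat{\theta}$. Using the identity $\sum_{i=1}^{n}\nabla_{\theta}\ell(x_{i},\hat{\theta})=0$, this yields, to the order relevant for a second-order expansion,
\begin{equation*}
n^{1/2}(\tilde{\theta}-\hat{\theta}) = J_{n}^{-1} n^{-1/2}\Bigl[\sum_{i=1}^{n}(w_{i}-1)\nabla_{\theta}\ell(x_{i},\hat{\theta}) + \sum_{j=1}^{T} w_{n+j}\nabla_{\theta}\ell(x_{n+j},\hat{\theta})\Bigr] + R_{n},
\end{equation*}
where $R_{n}$ collects the quadratic correction terms that already appear in the proof of Theorem~\ref{thm:edgeworth_expansion_penalisation} and will ultimately feed into the polynomial $P_{3,n}(y)$.

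The crucial step is to analyse the pseudo-sample block $S_{T}(\hat{\theta})=\sum_{j=1}^{T} w_{n+j}\nabla_{\theta}\ell(x_{n+j},\hat{\theta})$. Since $w_{n+j}\sim \text{Exp}(T/c)$ is independent of the prior predictive draw $x_{n+j}\sim f_{\pi}$, a direct computation gives $E\{S_{T}(\hat{\theta})\mid \hat{\theta}\}=c\,\alpha_{\hat{\theta},\pi}$ and $\mathrm{var}\{S_{T}(\hat{\theta})\mid \hat{\theta}\}=O(1)$, bounded in $n$. Substituting this decomposition into the display above, the expectation produces a deterministic shift $c J_{n}^{-1}\alpha_{\hat{\theta},\pi}/n^{1/2}$ of $n^{1/2}(\tilde{\theta}-\hat{\theta})$, while the centred fluctuation perturbs the covariance at order $O(n^{-1})$ only, which is absorbed into the $o(n^{-1/2})$ remainder. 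Continuity of $\theta\mapsto\alpha_{\theta,\pi}$ on $B$ from Assumption~\ref{assump:prior_integral} then allows anchoring the shift at $\theta^{*}$.

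Combining the two contributions and standardizing by $I_{n}^{-1/2}J_{n}$, the leading Gaussian approximation for the rescaled variable becomes $\phi\bigl(\,\cdot - c I_{n}^{-1/2}\alpha_{\hat{\theta},\pi}/n^{1/2}\bigr)$ modulated by the prior-independent cubic corrections inherited from $R_{n}$. Expanding this shifted density in powers of $n^{-1/2}$ around $\phi(y)$ reproduces the linear-in-$y$ term $I(\theta^{*})^{-1/2}(c\alpha_{\theta^{*},\pi})^{T} y / n^{1/2}$ appearing in \eqref{eq:edgeworth_expansion_pseudo_misspec_multivariate}, and the remaining structure is precisely the prior-independent polynomial $P_{3,n}(y)$ already identified in Theorem~\ref{thm:edgeworth_expansion_penalisation}, since both the Taylor-expansion corrections and the cumulants of the real-data score contribution are unchanged.

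The principal obstacle is certifying validity of the Edgeworth expansion for the augmented sum that now mixes two independent blocks of randomness: the original weights on the one hand, and the joint pseudo-weight--pseudo-sample pairs on the other. Following Chapter~2 of \cite{ghosh1994higher}, this reduces to checking Cram\'er's condition and the polynomial moment conditions on the combined score vector evaluated at $\theta^{*}$. By independence of the two blocks, Assumptions~\ref{assump:log_lik_smoothness} and~\ref{assump:positive_definite} handle the real-data part, while finiteness of all moments of $w_{n+j}\nabla_{\theta}\ell(x_{n+j},\theta^{*})$ together with continuity of $\alpha_{\theta,\pi}$ from Assumption~\ref{assump:prior_integral} handles the pseudo-sample part. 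The remaining moment bookkeeping parallels the derivation carried out in Supplementary Material~\ref{supplementary:edgeworth_proof}, and when executed yields exactly \eqref{eq:edgeworth_expansion_pseudo_misspec_multivariate}.
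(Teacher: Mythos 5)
Your proposal is correct and follows essentially the same route as the paper's proof: Taylor-expand the optimality condition, isolate the independent pseudo-sample block, compute its expectation over the prior predictive to obtain the $c\,\alpha_{\hat{\theta},\pi}$ shift (anchored at $\theta^*$ via Assumption~\ref{assump:prior_integral}), note that the centred fluctuation and the higher cumulants of this block are negligible at order $n^{-1/2}$, and invoke the joint Edgeworth validity of the independent blocks via Chapter~2 of \cite{ghosh1994higher}, leaving $P_{3,n}$ unchanged from Theorem~\ref{thm:edgeworth_expansion_penalisation}. The only difference is presentational: you phrase the prior term as a location shift of the Gaussian and cite Cram\'er's condition, whereas the paper computes the first cumulant directly and reuses its auxiliary lemma for validity.
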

	
	The proof of Theorem~\ref{thm:edgeworth_expansion_pseudo} follows analogous steps to the proof of Theorem~\ref{thm:edgeworth_expansion_penalisation}, for details see Supplementary Material \ref{supplementary:psudosamples_proof}.

	Since the prior information enters into the second order Edgeworth expansion via an intractable constant $\alpha_{\theta^*, \pi}$, in this case the impact of the prior has a slightly more cumbersome interpretation than in case of Algorithm~\ref{alg:pb_general}.  However, to shed some light on possible interpretations of this constant, note that $\alpha_{\theta^*, \pi}$ is an integral of $\nabla_{\theta} \left\{\log f(x| \theta^*)\right\}$ with respect to the prior predictive $\int f(x|\theta) \pi(\theta)  \mathrm{d} \theta$ and can also be written as
	\begin{align*}
	\alpha_{\theta^*, \pi}  =\int  \left[ \nabla_{\theta} \left\{\log f(x|\theta^*)\right\}  \left\{ f_{\pi}(x)   - f(x|\theta^*) \right\}\right] \mathrm{d} x.
	\end{align*} 
	since $\int   \nabla_{\theta} \left\{\log f(x| \theta^*)\right\}f(x|\theta^*) dx =0$. Therefore $\alpha_{\theta^{*}, \pi}$ measures the difference between the prior predictive and the predictive distribution under $\theta^*$. 
	
	The form of $\alpha_{\theta^*, \pi}$ makes it difficult to provide detailed general guidelines on the appropriate choice of $c$. For well-specified models $c$ should be interpreted as the effective sample size of the prior, which we make explicit in Section~\ref{section:choice of c} for conjugate priors in exponential families. Importantly, as opposed to the choice of $w_0$ in Algorithm~\ref{alg:pb_general}, the optimal choice of $c$ depends both on the likelihood and the prior.  Parameter $T$ does not appear in the second order Edgeworth expansion and we do not expect it to have much effect on the results.
	\begin{remark}\label{remark:pseudosamples} Posterior Bootstrap via pseudo-samples can also be used for hierarchical models \eqref{model:hierarchical} within Algorithm~\ref{alg:pb_hierarchical_gibbs}.
		The procedure would involve drawing $\bar{\lambda}$ from $p(\lambda| \hat{\theta}_1, \ldots, \hat{\theta}_K)$, then drawing pseudo-samples given $\bar{\lambda}$ to finally  compute $\tilde{\theta}_1, \ldots, \tilde{\theta}_K$. Using the same value of $\bar{\lambda}$ across all $K$ groups induces correlation between $\tilde{\theta}_1, \ldots, \tilde{\theta}_K$. Inference on $\tilde{\lambda}$ given $\tilde{\theta}_1, \ldots \tilde{\theta}_K$ can then be performed using methods described in Section~\ref{section:inference_lambda}, and its theoretical properties would be analogous to those presented in Proposition \ref{prop:wlb_lambda_posterior} and Theorem \ref{thm:hierarchical_large_K_result}.
	\end{remark}
	
	As mentioned in Section~\ref{section:reminders_wlb}, another method of incorporating prior information is Algorithm~1 of \cite{lyddon2018nonparametric}. It is similar to Algorithm~\ref{alg:pb_pseudosamples_prior}, but instead of drawing $\bar{\theta}$ from the prior $\pi(\theta)$, $\bar{\theta}$ is drawn from the Bayesian posterior $\pi(\theta|x_{1:n})$. 	Nevertheless, it turns out that the impact of the prior in that algorithm is negligible, as the prior information does not enter into the second order Edgeworth expansion.
	
	To see this, note that the second order Edgeworth expansion for Algorithm~1 of \cite{lyddon2018nonparametric} can be obtained using analogous calculations to those required to prove Theorem~\ref{thm:edgeworth_expansion_pseudo}. The expansion would thus be given by \eqref{eq:edgeworth_expansion_pseudo_misspec_multivariate}, where instead of the constant $\alpha_{\theta^*,  \pi}$  we have
	\begin{equation*}
	\beta_{\theta^*, \pi}= \int   \nabla_{\theta} \left\{\log f(x|\theta^*)\right\}  \left\{\int f(x|\theta) \pi(\theta|x_{1:n}) \mathrm{d} \theta \right\} \mathrm{d} x.
	\end{equation*}
	Since the posterior $\pi(\theta|x_{1:n})$ concentrates around $\theta^*$ as $n \to \infty$, we have $\beta_{\theta^*, \pi} = o_P(1)$  and therefore the prior information does not enter into the second order Edgeworth expansion. Details are provided in Supplementary Material  \ref{supplementary:other_methods}.

	\subsection{Choice of $c$ in Algorithm~\ref{alg:pb_pseudosamples_prior}}\label{section:choice of c}
	We consider a special case of regular exponential families in a canonical form 
	\begin{equation*}
	f(x|\theta) = h(x) \exp \left\{T(x)^T \theta - A(\theta)\right\}.
	\end{equation*}
	The corresponding conjugate prior on $\theta$ is 
	\begin{equation}\label{eq:exponential_family_prior_density}
	\pi(\theta) = \pi(\theta|\tau, n_0) = H(\tau, n_0) \exp \left\{ \tau^T \theta - n_0 A(\theta)\right\}.
	\end{equation}
	for some hyperparameters $\tau$ and $n_0$. The resulting posterior follows $\pi(\theta|\tau + \sum_{i=1}^n T(x_i), n_0+n)$.
	We can therefore interpret $n_0$ as the number  of pseudo-samples quantifying the amount of information in the prior, called also effective sample size.
	
	In this special case we can compute the expression $c \alpha_{\theta^*, \pi}$ appearing in the Edgeworth expansion for Algorithm~\ref{alg:pb_pseudosamples_prior}. We have 
	\begin{align*}
	\alpha_{\theta^*, \pi} = &  \int   \left\{T(x) - \nabla_{\theta} A(\theta^*)  \right\} \int f(x|\theta) \pi(\theta) \mathrm{d} \theta dx = \int \nabla_{\theta} A(\theta) \pi(\theta) \mathrm{d} \theta - \nabla_{\theta} A(\theta^*).
	\end{align*} 
	where the last equality follows from the property of exponential families that $\E_{\theta} \left\{ T(X) \right\} = \nabla_{\theta} A(\theta)$. Since $\int \nabla_{\theta} \pi(\theta) \mathrm{d} \theta =0$, see \cite{diaconis1979conjugate}, we get that
	\begin{align*}
	\tau  -  n_0 \int \nabla_{\theta} A(\theta) \pi(\theta) \mathrm{d} \theta =  & \int \left\{ \tau -  n_0  \nabla_{\theta} A(\theta) \right\} \pi(\theta)  \mathrm{d} \theta  = \int \nabla_{\theta} \pi(\theta) \mathrm{d} \theta =0.
	\end{align*}
	Then $\alpha_{\theta^*, \pi} = \tau/n_0 - \nabla_{\theta} A(\theta^*).$
	Recall that when we instead use Algorithm~\ref{alg:pb_general}, the corresponding term in the Edgeworth expansion  is given by $w_0^T\nabla_{\theta} \log \pi(\theta^*)$, which in case of the prior \eqref{eq:exponential_family_prior_density} is equal to $w_0^T \left\{\tau - n_0 A(\theta^*)\right\}$. Therefore in this special case setting $c = n_0$ in Algorithm~\ref{alg:pb_pseudosamples_prior} yields the same second order Edgeworth expansion as setting $w_0 = 1$ in Algorithm~\ref{alg:pb_general}. The above reasoning confirms our intuition that for well-specified models we should set $c$ representing the weight of the prior included via pseudo-samples to be equal to the effective sample size of that prior.
	
	In fact using Posterior Bootstrap for well-specified exponential models with conjugate priors seems rather contrived, as one can in this case sample directly from the Bayesian posterior. However, under the misspecified scenario Posterior Bootstrap is more robust than Bayesian inference. In that case we propose to set the optimal $c$ as 
	\begin{equation}\label{eq:setting_c_conjugate_model}
	c^* = \bar{w}_0^* n_0.
	\end{equation}
	for $\bar{w}_0^*$ defined in \eqref{eq:setting_bar_w_0}.
	\FloatBarrier
	\section{Illustrations}\label{section:illustrations}
	\subsection{Toy examples}\label{section:toy_examples}
	We first analyse the following toy model
	\begin{align}
	\begin{split}\label{model:toy_normal_univariate}
	x_1, \ldots, x_n  \sim N(\theta, 1), \quad \theta \sim \text{Gamma}(5,3).\\
	\end{split}
	\end{align}
	We consider three different scenarios by generating $x_1, \ldots, x_{n} \sim N(10, \sigma^2)$ for $n=200$ with $\sigma^2=1$, $\sigma^2=2.8$ or $\sigma^2 =0.6$. The correct model would then be \begin{align}
	\begin{split}\label{model:toy_normal_univariate_true}
	x_1, \ldots, x_n  \sim N(\theta, \sigma^2), \quad \theta \sim \text{Gamma}(5,3).\\
	\end{split}
	\end{align}
	\begin{figure}[!ht]
		\centering
		\includegraphics[width = 0.8\textwidth]{./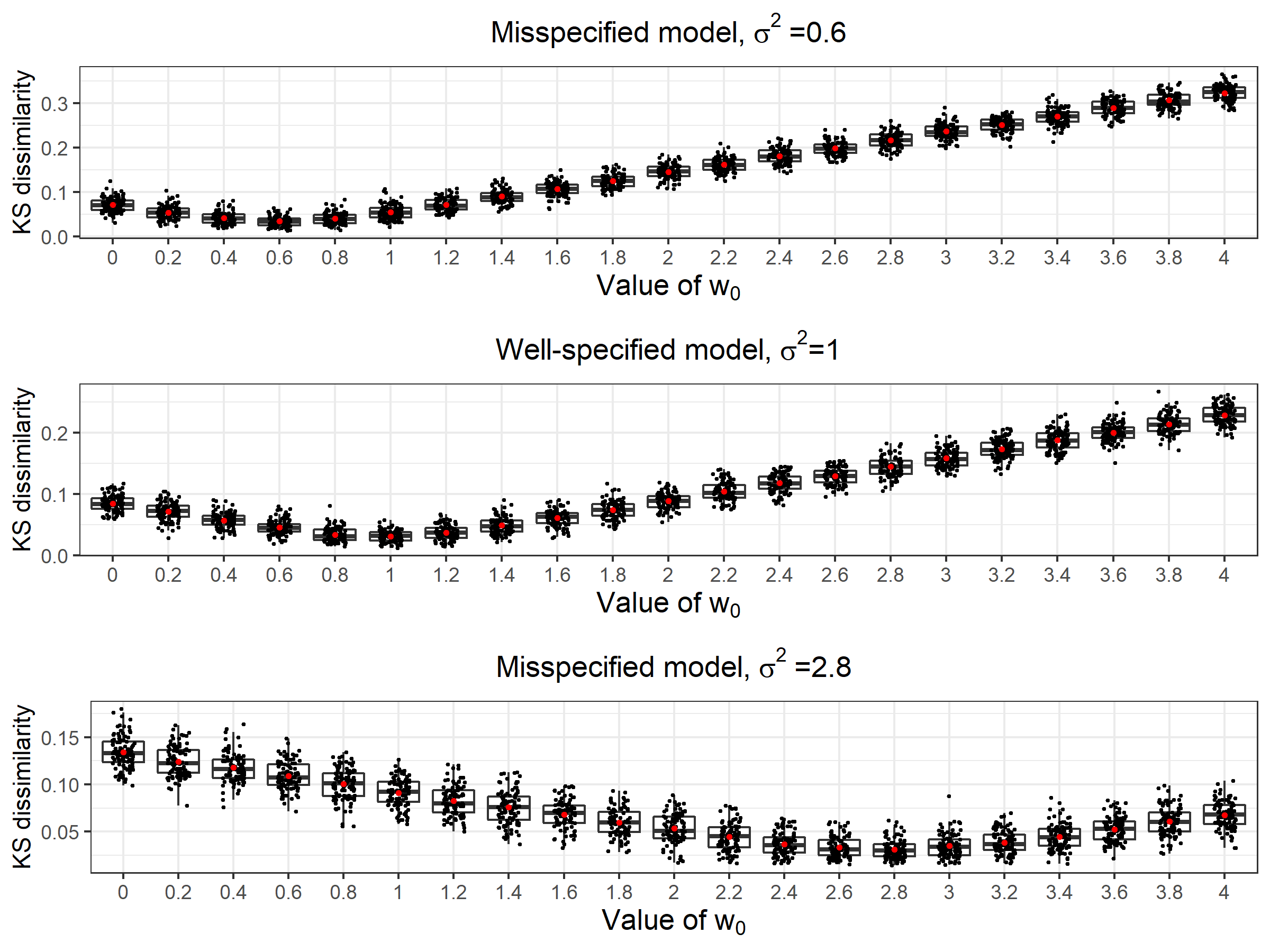}
		\caption{\small Kolmogorov--Smirnov dissimilarity between samples drawn from Posterior Bootstrap for model \eqref{model:toy_normal_univariate}, and samples from the Bayesian posterior for the correct model \eqref{model:toy_normal_univariate_true}. Three scenarios are considered, $\sigma^2 = 0.6$ (top panel), $\sigma^2 =1$ (middle panel), and $\sigma^2 = 2.8$ (bottom panel). Posterior Bootstrap simulations were performed using Algorithm~\ref{alg:pb_general} for a range of values of $w_0$. Each experiment was repeated on 100 simulated datasets with $N = 2000$.}\label{fig:univariate_normal_KS_w_0_boxplot}
	\end{figure}
\noindent Recall from Section~\ref{section:choice_w_0} that $w_0^*$ would in this case be equal to $\sigma^2$. We ran Algorithm~\ref{alg:pb_general} for this model with different values of $w_0$ and measured the Kolmogorov--Smirnov dissimilarity between the empirical distribution of the obtained samples, and samples from the Bayesian posterior applied to the correct model. In this special case making such a comparison is justified since  $\theta$ has the same interpretation of being the mean under both models, and the difference between these models lies only in the uncertainty around the mean. 	As shown in Figure \ref{fig:univariate_normal_KS_w_0_boxplot}, we indeed observe the shortest distance for $w_0 = w_0^*$. For additional plots and comparisons with other methods we refer the reader to Supplementary Material \ref{supplementary:toy_examples}.

	We now analyse another toy example, with a misspecified model
	\begin{align}
	\begin{split}\label{model:toy_normal_multivariate}
	x_1, \ldots, x_n  \sim N(\theta, \Sigma_1), \quad \theta \sim N(\mu_0, \Sigma_0),
	\end{split}
	\end{align}
	whereas correct model is
	\begin{align}
	\begin{split}\label{model:toy_normal_multivariate_correct}
	x_1, \ldots, x_n  \sim N(\theta, \Sigma_2), \quad \theta \sim N(\mu_0, \Sigma_0).
	\end{split}
	\end{align} 
	We set
	$$
	\mu_0 = (5,5), \quad\Sigma_0 = \begin{bmatrix}
	1 & 0 \\ 0 & 1
	\end{bmatrix}, \quad \Sigma_1 = \begin{bmatrix}
	1 & 0.5 \\ 0.5 & 2
	\end{bmatrix}, \quad \Sigma_2 = \begin{bmatrix}
	0.7 & 0.6\\ 0.6 & 3
	\end{bmatrix},
	$$
	so that the example is characterized by overdispersion of one coordinate, and underdispersion of the other. Moreover, following the discussion in Section~\ref{section:choice_w_0}, in this example correlation between variables makes it impossible for \eqref{eq:try_setting_w_0} to hold.
	In our experiment the data  $x_1, \ldots, x_n$ for $n = 200$ is generated from  $N \left\{(0, 0), \Sigma_2\right\}$.
	 In Figure \ref{fig:multivariate_normal_plot} we compare three versions of Posterior Bootstrap for model \eqref{model:toy_normal_multivariate} with standard Bayesian posterior and two methods of robust Bayesian inference: BayesBag  \citep{huggins2019using} and power posteriors with the optimal power $\eta^*$ as proposed by \cite{lyddon2019general}. The plots show that Algorithm~\ref{alg:pb_general} outperforms other methods on this example returning samples that are closer to the Bayesian posterior under the correct model \eqref{model:toy_normal_multivariate_correct}. We notice that the power posterior approach governed by a single parameter $\eta$ is unable to capture overdispersion of one coordinate and underdispersion of the other, whereas BayesBag with the choice of hyperparameters advocated in Section 2.2.1 of \cite{huggins2019using} tends to slightly overestimate the uncertainty. We also conclude that treating $w_0$ as a two-dimensional vector rather than a single value helps to better incorporate the prior information.
	 
	 We present an additional toy example for a hierarchical model in Supplementary Material \ref{supplementary:toy_example_hierarchical}.
	
	\begin{figure}[!ht]
		\centering
		\includegraphics[width = 0.9\textwidth]{./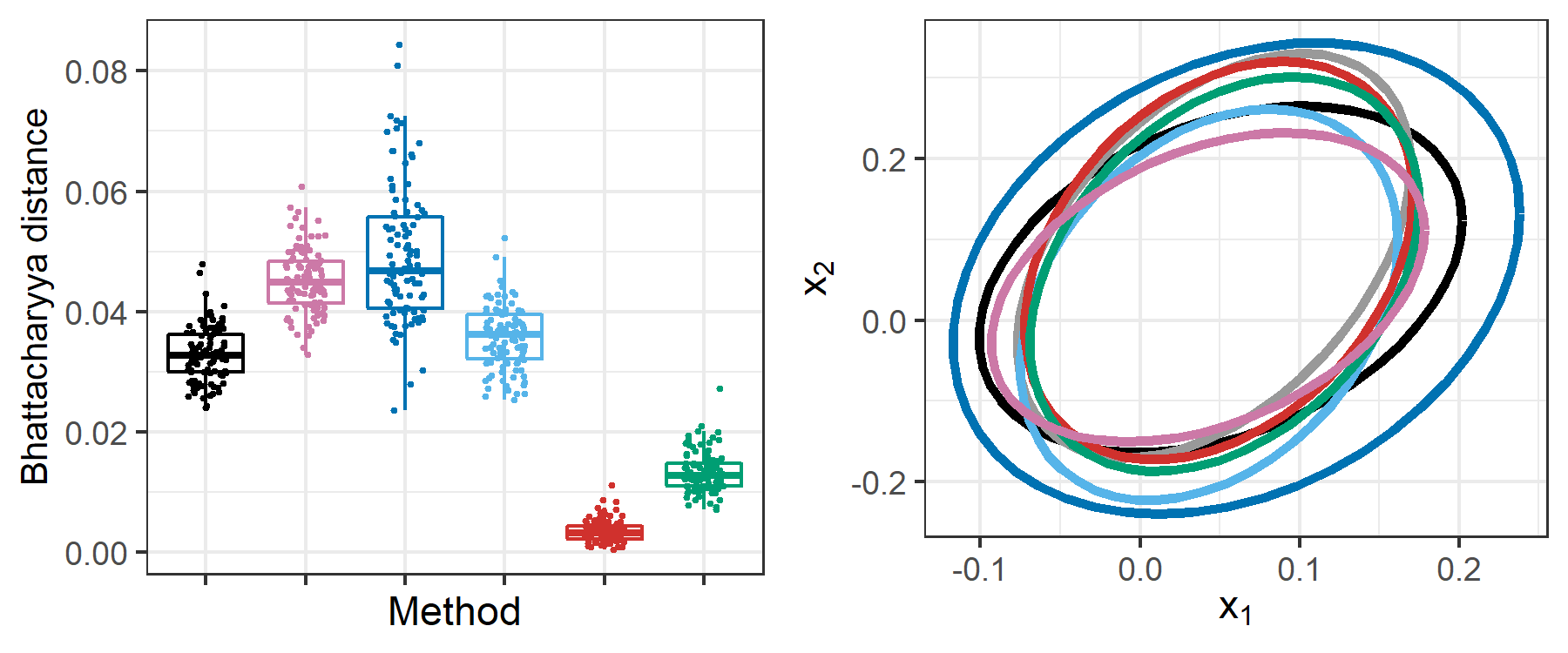}
		\caption{ \small Left: Bhattacharyya distance between samples drawn from the Bayesian posterior for correct model \eqref{model:toy_normal_multivariate_correct}  and  standard Bayesian inference for the misspecified model (black), power posterior inference with optimal parameter $\eta$ according to \cite{lyddon2019general} (pink), BayesBag \citep{huggins2019using} (dark blue), Posterior Bootstrap Algorithm~\ref{alg:pb_pseudosamples_prior} with the optimal parameter $ c^{\star}$ (light blue),  Posterior Bootstrap Algorithm~\ref{alg:pb_general} with $ w_0 $ set according to  \eqref{eq:setting_w_0} (red) and  \eqref{eq:setting_bar_w_0} (green). Each experiment was repeated on 100 simulated datasets with $N = 2000$ samples drawn from each method. Optimal $c^*$ for Algorithm~\ref{alg:pb_pseudosamples_prior} was set as the value that empirically minimized the Bhattacharyya distance over a grid of valued. The size of each bootstrapped dataset in BayesBag was set according to Section 2.2.1 of \cite{huggins2019using}. Right: density contours for samples obtained using the above methods on a single synthetic dataset, and for samples from the Bayesian posterior for the correct model \eqref{model:toy_normal_multivariate_correct} (grey). }\label{fig:multivariate_normal_plot}
	\end{figure}
	
	\subsection{Overdispersed and underdispersed Poisson regression}\label{section:poisson_regression}
	A common approach for modelling count data is the Poisson regression:
	\begin{equation}\label{model:Poisson_regression}
	Y_i \sim \text{Poisson}(\lambda_i), \quad \lambda_i = \exp \left(\beta_0 + \beta_1 x_{1,i} + \cdots + \beta_p x_{p,i}\right), \quad i = 1, \ldots, n,
	\end{equation}
	with a prior distribution on the vector of coefficients $\beta$.
	The above model assumes that $E\left(Y_i|x_i\right) = \text{var}\left(Y_i|x_i\right)$. In practice this assumption may be violated and the data exhibits either overdispersion or underdispersion relative to model \eqref{model:Poisson_regression}. In case of the former an alternative is using the negative binomial regression model, for the latter however no simple alternative is known. 
	
	The COM-Poisson regression \citep{shmueli2005useful, sellers2010flexible} was proposed as a generalization of the Poisson model, which can handle different levels of dispersion, including underdispersion.  Since the COM-Poisson likelihood \citep{conway1962queuing} has an intractable constant,  Bayesian inference with the COM-Poisson regression requires non-standard and computationally costly  Monte Carlo methods, based on a rejection sampler used within the exchange algorithm \citep{moller2006efficient, murray2012mcmc}, as proposed by \cite{chanialidis2018efficient} and \cite{benson2021bayesian}. Another issue with this approach is interpretability since in the COM-Poisson model it is no longer true that $E\left(Y_i|x_i\right) =\beta_0 + \beta_1 x_{1,i} + \cdots + \beta_p x_{p,i}$.

	\begin{figure}
		\centering
		\includegraphics[width = 1\textwidth]{./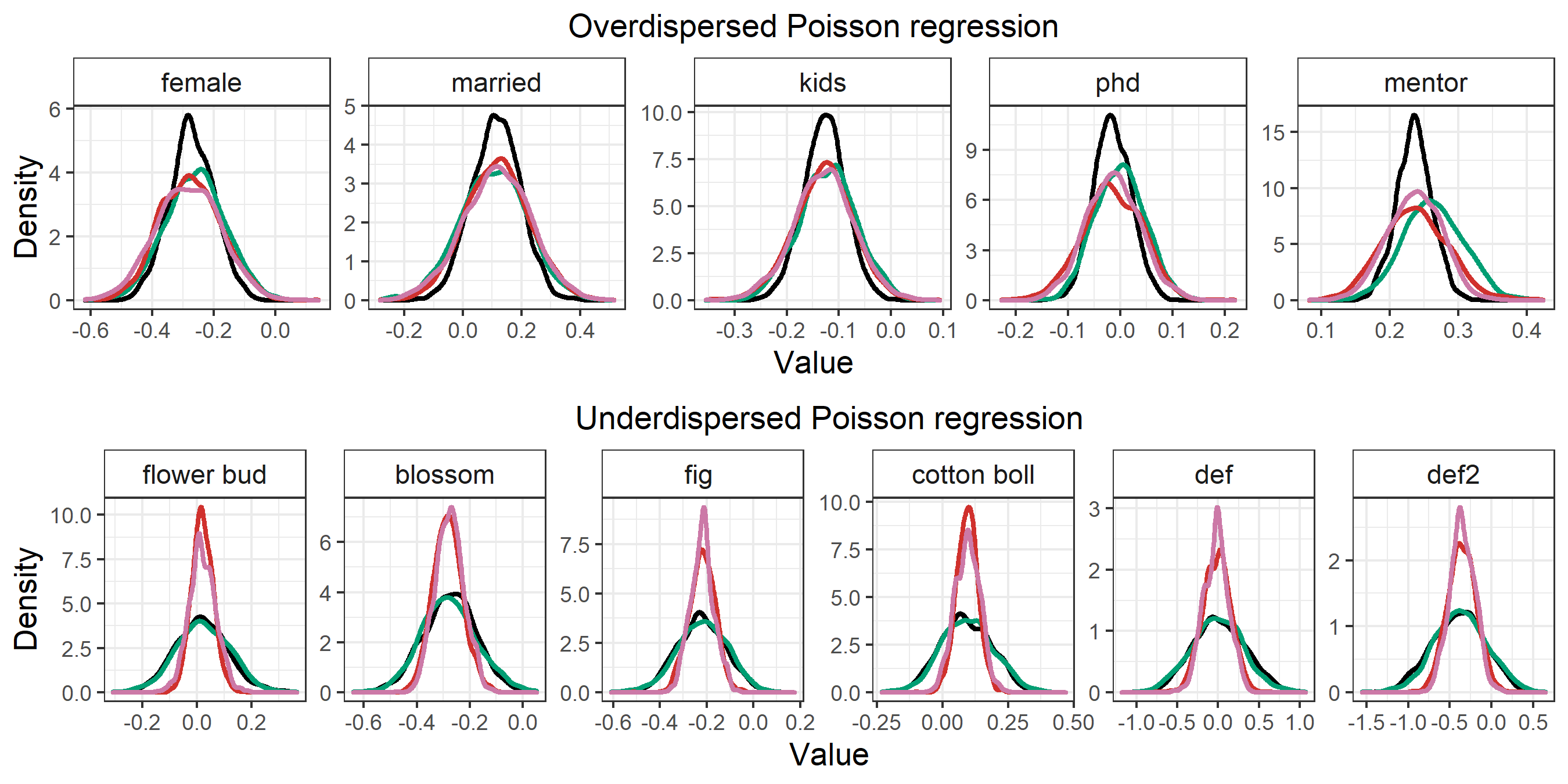}
		\caption{\small Kernel density estimates for the posterior on coefficients $\beta$ in model \eqref{model:Poisson_regression} using four different methods: Bayesian inference  for model \eqref{model:Poisson_regression} (black), Bayesian inference for the negative binomial regression model (green), power posterior inference with optimal parameter $\eta$ according to \cite{lyddon2019general} (pink), and Algorithm \ref{alg:pb_general} (red). We use the \texttt{Articles} dataset (top panel) and the \texttt{cottonbolls} dataset (bottom panel). For both examples we used the independent $N(0, 10^2)$ prior on parameters $\beta$.}\label{fig:poisson_regression_phd_cotton}
	\end{figure}

	An alternative could be to use Algorithm~\ref{alg:pb_general}, which preserves the usual interpretability of parameter $\beta$, while being able to handle both underdispersion and overdispersion. The fact that Algorithm~\ref{alg:pb_general} can be fully parallelized and does not require tuning gives it a significant computational advantage over complex Monte Carlo methods proposed for the COM-Poisson regression. We illustrate the overdispersed Poisson regression with the \texttt{Articles} example from the \texttt{Rchoice} package \citep{sarrias2016discrete} in $\texttt{R}$, which studies the number of articles published during scientists' Ph.D. studies. We preprocess the data following \cite{chanialidis2018efficient}, who analysed the same example. For underdispersed Poisson models, we use as an illustration the \texttt{cottonbolls} example from the \texttt{mpcmp} package \citep{fung2020package}, modelling number of bolls produced by the cotton plants. Figure \ref{fig:poisson_regression_phd_cotton} shows that Algorithm \ref{alg:pb_general} corrects both for overdispersion and underdispersion. Power posteriors also perform well on these examples,  however in general this approach is less flexible in accounting for misspecification than Posterior Bootstrap. 

	\subsection{Dirichlet allocation model for opioid crisis data}\label{section:dirichlet_example}
	We illustrate Posterior Bootstrap for hierarchical models, as well as the use of pseudosamples on a Dirichlet allocation model \citep{glynn2019bayesian} of deaths caused by drug overdoses in the United States \citep{overdose2021}. The data comprise 139 observations, where each observation is the numbers of deaths due to each of six drug types in a given state and year. Our primary interest lies in $\lambda = \left(\lambda_1, \ldots, \lambda_6 \right)$, which describes the expected proportion of deaths in each category. For each coordinate of $ \lambda $, we assume a $\mathrm{N}(0, \tau^2)$ prior truncated to the positive half-line, denoted $ \mathrm{TN}(0, \tau^2) $ below. For $m_k = \left(m_{k,1}, \ldots, m_{k,6}\right)$, where $ m_{k,l} $ is the deaths by drug type $ l $ in state-year $k$, we write the model:
	\begin{gather}\label{model:Dirichlet_allocation}
	\begin{aligned}
	m_k &\sim \mathrm{Multinomial}(\theta_k, n_k), & k &= 1, \dotsc, 139,\\
	\theta_k &\sim \mathrm{Dirichlet}(\lambda), & k &= 1, \dotsc, 139,\\
	\lambda_l &\sim \mathrm{TN}(0, \tau^2), & l &= 1, \dotsc, 6.
	\end{aligned}
	\end{gather}
	By comparing $I_g$ and $J_g$ as defined in \eqref{eq:hierarchical_info_matrices}, we suspect that the Dirichlet model for $\theta_k$ given $\lambda$ is misspecified. This, as well as the fact that the number of classes $K = 139$ is large motivates using \eqref{eq:lambda_posterior_bootstrap} for inference on $\lambda$. As for the inference on $\theta_k$, following the discussion in Section~\ref{subsection:pb_pseudosamples_methodology}  we cannot use the penalization-based approach, due to the $\text{Dirichlet}(\lambda)$ prior possibly with some components smaller than 1. We therefore follow the strategy based on pseudo-samples, and conjugacy of the model motivates setting $c$ according to \eqref{eq:setting_c_conjugate_model}. We present the full resulting algorithm in Supplementary Material \ref{supplement:dirichlet_allocation}.

	Figure~\ref{fig:opioid_density_plot} displays the posterior distribution on  parameter $ \lambda $ in \eqref{model:Dirichlet_allocation} using Posterior Bootstrap and Bayesian inference performed with Metropolis-within-Gibbs Markov chain Monte Carlo.  As $ \tau $ decreases, the truncated normal prior concentrates on zero and we observe a shift in distributions under both methods, which confirms that our method incorporates the prior in a similar way to Bayesian inference. The results for the two methods are different, and we expect this to be due to model misspecification. In Supplementary Material \ref{supplement:dirichlet_allocation} we present an analogous analysis on a synthetic dataset generated from model \eqref{model:Dirichlet_allocation} and show that the results delivered by these methods are then the same. In addition to correcting for model misspecification, Posterior Bootstrap runs in parallel with virtually no tuning and without the requirement for convergence diagnostics in Markov chain Monte Carlo-based inference.
	\begin{figure}
		\centering
		\includegraphics[width = 1\textwidth]{./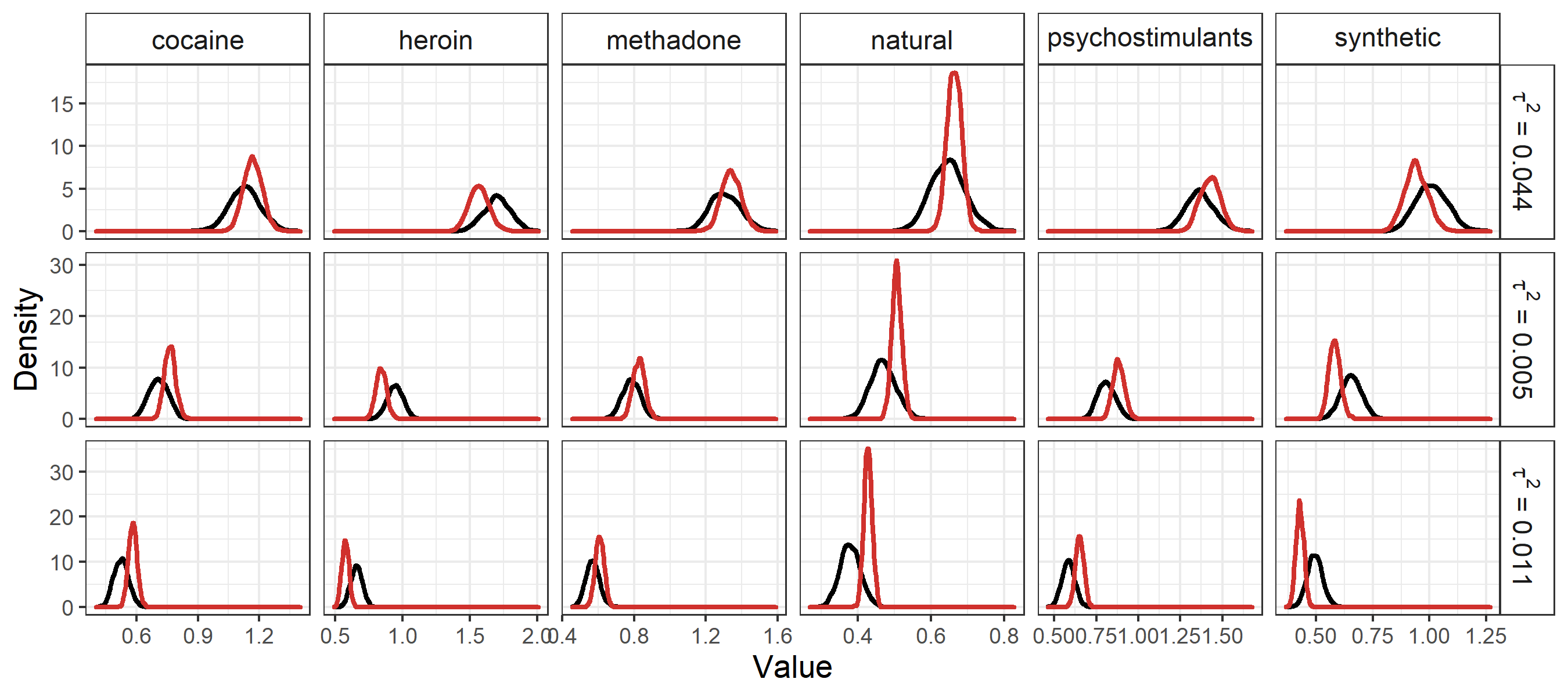}
		\caption{\small Kernel density estimates under standard Bayesian inference (black) and Posterior Bootstrap (red) for parameters $\lambda_l$ in model \eqref{model:Dirichlet_allocation}, corresponding to six different types of drugs. We test this example on three prior distributions considered in \cite{glynn2019bayesian}.}\label{fig:opioid_density_plot}
	\end{figure}

	\section{Limitations of methods based on Weighted Likelihood Bootstrap}\label{section:limitations}
	In previous sections we focused on positive aspects and properties of Weighted Likelihood Bootstrap and its extensions, such as its computational advantages and its capability to correct the uncertainty in case of model misspecification. The aim of this section is to answer some frequently asked questions about limitations of Weighted Likelihood Bootstrap and its extensions.

	There are numerous situations when Posterior Bootstrap  cannot be used for computational reasons, whereas standard Bayesian modelling combined with some Markov chain Monte Carlo methods prove to be successful.	Firstly, by construction of the algorithms it is implicitly assumed that one can numerically find a maximum of the randomized likelihood  function, which is typically done using gradient-based methods. However, if the parameter of interest live in a discrete space, or a mix of discrete and continuous state spaces, this will not be possible. There exists a vast literature on Gibbs samplers that deal effectively with such models, for example  \citep{richardson1997bayesian, titsias2017hamming, zanella2020informed}. Secondly, the Posterior Bootstrap algorithms cannot be used in case of intractable likelihoods, that is, when the likelihood has a form $f(x| \theta)/Z(\theta)$ for some unknown function $Z(\theta)$. Again, several Markov chain Monte Carlo methods have been proposed for target distributions of that type, for example \citep{moller2006efficient, andrieu2009pseudo, murray2012mcmc, deligiannidis2018correlated,  middleton2020unbiased}.

	We would also like to note a limitation related specifically to Algorithm~\ref{alg:pb_pseudosamples_prior}. Recall that the guidance on setting parameter $c$ is well-understood only for conjugate models due to its link with the effective sample size of the prior. Our intuition, confirmed by simulations, is that in non-conjugate scenarios choosing $c$ may be problematic. Pseudo-samples drawn from the prior predictive are often outliers with respect to the original dataset, and this phenomenon is particularly pronounced for vague priors. Therefore assigning too large importance to the pseudo-samples leads to unstable results and consequently distorts the inference. Another issue  is that a one-dimensional parameter $c$ controls the impact of the prior on all coordinates, as we showed in Theorem~\ref{thm:edgeworth_expansion_pseudo}.
	
	There have been attempts in the Bayesian literature to extend the notion of the effective sample size beyond the conjugate scenario, and quantify the amount of information in the prior relative to the assumed sampling distribution, for example \cite{morita2008determining,morita2012prior, neuenschwander2020predictively}.  Developing the appropriate methodology for the choice of $c$ in general models, potentially based on the approaches cited above, as well as supporting it with theoretical results, is an interesting future work direction.

	Even though Posterior Bootstrap methods can alleviate the consequences of model misspecification,  this approach is not always optimal. In particular, as we showed in Theorem~\ref{thm:risk_wlb_power_posteriors}, in certain cases power posteriors are provably better in terms of prediction.  Furthermore, it is  important to emphasize that Posterior Bootstrap, as well as power posteriors and BayesBag described in Section~\ref{section:model_misspecification}, rely on correcting automatically the uncertainty around parameters, while preserving the same asymptotic mean. There are cases when the user has some prior knowledge about the nature of misspecification and has more trust in certain parts of the model than others. The cut model approach, also called modular inference, relies on cutting feedback, or the information flow, from the part of the model which we put less trust in to the one we trust more. This way of tackling misspecification allows the user to make better use of their understanding of the model than when applying more automatic approaches listed earlier. Bayesian literature on the cut model approach includes \citep{liu2009modularization, plummer2015cuts, jacob2017better, carmona2020semi}.

	\subsection*{Acknowledgements}
	This work is supported by the EPSRC and MRC Centre for Doctoral Training in Next Generation Statistical Science: the Oxford-Warwick Statistics Programme, EP/L016710/1, and the Clarendon Fund.  The author thanks Judith Rousseau, who helped enormously with all parts of the paper, and without whom this work would never exist. The author also thanks a person who asked the question of extending Weighted Likelihood Bootstrap to hierarchical models, but wishes to remain anonymous, as well as Luke~J.~Kelly,  Pierre E.~Jacob, Roman D.~Stasi\'nski, Stephen G.~Walker and Jonathan Huggins for helpful discussions and feedback.
	
	\appendix
	
	\section{Proof of Theorem~\ref{thm:risk_wlb_power_posteriors}}\label{supplementary:risk_proof}
	
	\begin{proof}
		Let  $\hat{p}_{\hat{\theta}}$, $\hat{p}_B$ and $\hat{p}_{\pi_{\eta}}$ denote the predictive distributions under $\hat{\theta}$, Efron's bootstrap \citep{efron1979bootstrap} and power posterior $\pi_{\eta}$ respectively.
		
		Firstly, note that for the risk associated with $\hat{p}_B$, we can use directly the formula given by \cite{fushiki2005bootstrap} as follows:
		\begin{align*}
		\begin{split}
		E_{x_{1:n}} \left\{D_{KL}\left(p^*||\hat{p}_B\right) \right\}& =  E_{x_{1:n}}  \left\{ D_{KL}(p^*||\hat{p}_{\hat{\theta}}) \right\} \\
		&  - \frac{1}{2n} \left[ \tr \left\{I(\theta^*) J^{-1}(\theta^*)I(\theta^*) J^{-1}(\theta^*) \right\} - \tr \left\{I(\theta^*) J^{-1}(\theta^*)\right\} \right]  + o\left(1/n\right).
		\end{split}
		\end{align*} 
		To obtain an analogous formula for the risk associated with $\hat{p}_{\pi_{\eta}}$, we can use Lemma 5 of \cite{shimodaira2000improving} with
		\begin{align*}
		K_w^{[1 \cdot 1]} = I(\theta^*), \quad  K_w^{[2]} = J(\theta^*),  \quad
		H_w = \eta J(\theta^*),  \quad  K_w^{[1]} = 0,
		\end{align*}
		where the last equality holds because the expectation of the score function at $\theta^*$ is 0. We therefore have 
		\begin{align*}\label{eq:power_posterior_risk}
		\begin{split}
		E_{x_{1:n}} \left\{D_{KL}\left(p^*||\hat{p}_{\pi_{\eta}}\right)\right\} & =  E_{x_{1:n}}\left\{ D_{KL}(p^*||\hat{p}_{\hat{\theta}})\right\} \\
		&  - \frac{1}{2n} \left[\eta^{-1}\tr \left\{I(\theta^*) J^{-1}(\theta^*)\right\} -\eta^{-1} d \right]  + o\left(1/n\right).
		\end{split}
		\end{align*}
		Recall that $\lambda_1, \ldots, \lambda_d$ denote the eigenvalues of $J^{-1/2}(\theta^*)I(\theta^*) J^{-1/2}(\theta^*)$. By modifying calculations from Section~2 of \cite{fushiki2005bootstrap} we get
		\begin{align*}
		\begin{split}
		E_{x_{1:n}}\left\{ D_{KL}\left(p^*||\hat{p}_{\pi_{\eta}}\right) \right\}  - &  E_{x_{1:n}} \left\{D_{KL}\left(p^*||\hat{p}_B\right)\right\}   \\
		= &  \frac{1}{2n} \left[ \tr \left\{I(\theta^*) J^{-1}(\theta^*)I(\theta^*) J^{-1}(\theta^*) \right\} - \tr \left\{I(\theta^*) J^{-1}(\theta^*)\right\} \right] \\
		&   - \frac{1}{2n} \left[\eta^{-1}\tr \left\{I(\theta^*) J^{-1}(\theta^*)\right\} -\eta^{-1} d \right]  + o\left(1/n\right)\\
		= & \frac{1}{2n}  \tr \left[ \left\{J^{-1/2}(\theta^*)I(\theta^*) J^{-1/2}(\theta^*) \right\}^2\right] - \frac{(1+\eta^{-1})}{2n}\tr \left\{J^{-1/2}(\theta^*)I(\theta^*) J^{-1/2}(\theta^*) \right\} \\
		&  + \frac{\eta^{-1}d}{2n} + o\left(1/n\right) \\
		= &\frac{1}{2n} \sum_{i=1}^d \left\{\lambda_i^2 -(1+\eta^{-1}) \lambda_i + \eta^{-1}\right\}  + o\left(1/n\right) \\
		= &\frac{1}{2n} \sum_{i=1}^d (\lambda_i -1) (\lambda_i - \eta^{-1}) + o\left(1/n\right).
		\end{split}
		\end{align*} 
		Finally, we use Theorem~1 of \cite{fushiki2010bayesian} which captures the relationship between the risk associated with Weighted Likelihood Bootstrap and the risk associated with Efron's bootstrap, that is
		$$
		E_{x_{1:n}}  \left\{D_{KL}\left(p^*||\hat{p}_{\text{WLB}}\right) \right\} = E_{x_{1:n}} \left\{ D_{KL}\left(p^*||\hat{p}_B\right) \right\}  + o\left(n^{-3/2}\right),
		$$
		where $\hat{p}_{\text{WLB}}$ denotes the predictive distribution under Weighted Likelihood Bootstrap. Therefore also
		$$
		E_{x_{1:n}} \left\{D_{KL}\left(p^*||\hat{p}_{\pi_{\eta}}\right) \right\}  -  E_{x_{1:n}}\left\{ D_{KL}\left(p^*||\hat{p}_{\text{WLB}}\right)\right\} = \frac{1}{2n} \sum_{i=1}^d (\lambda_i -1) (\lambda_i - \eta^{-1}) + o\left(1/n\right).
		$$
		We finally observe that the expression $\sum_{i=1}^d (\lambda_i -1) (\lambda_i - \eta^{-1})$ is negative when all eigenvalues are between 1 and $\eta^{-1}$. This completes the proof.
	\end{proof}
	\section{Proofs of theoretical results presented in Sections \ref{section:pb_penalisation} and \ref{section:pb_pseudosamples}}\label{supplementary:edgeworth_proof}
	\subsection{Summary and Assumption \ref{assump:log_lik_smoothness}}\label{supplementary:summary_assumption}
	The proofs of Theorems \ref{thm:edgeworth_expansion_penalisation} and \ref{thm:edgeworth_expansion_pseudo} require similar reasoning and calculations. We therefore present below a detailed version of the proof of  Theorem~\ref{thm:edgeworth_expansion_penalisation} and refer to it when proving Theorem~\ref{thm:edgeworth_expansion_pseudo}. The plan of action is as follows. We start by obtaining an expansion  for $n^{1/2}(\tilde{\theta} - \hat{\theta})$. Then applying Lemma \ref{lemma:auxiliary_lemma_Edgeworth}, we show that the Edgeworth expansion of this expression is indeed valid. Finally, we compute the coefficients of the Edgeworth expansion using the formula provided in Chapter 2 of \cite{ghosh1994higher}. 
	We start by presenting the full version of Assumption \ref{assump:log_lik_smoothness} and introducing the notation used in the proofs.
	
	\setcounter{assumption}{2}
	\begin{assumption}\label{assump:log_lik_smoothness_full}\normalfont(Smoothness of the log likelihood function).(Smoothness of the log likelihood function). There is an open ball $B$ containing $\theta^*$ such that  $\ell(x, \theta)$ is three times continuously differentiable with respect to $\theta \in B$ almost surely on $x$. Furthermore, there exist measurable functions $G_1, G_{2}$ and  $G_{3}$ such that for $\theta \in B$ \begin{align*}
		\Big|\frac{\partial \ell(x,\theta)}{\partial \theta_j}\Big| &\leq G_{1}(x),
		& \E_{p*} \left\{G_1(X)^4\right\} &< \infty,\\
		\Big|\frac{\partial^2 \ell(x,\theta)}{\partial \theta_j \partial \theta_k}\Big| &\leq G_{2}(x),
		& \E_{p*}\left\{ G_{2}(X)^4 \right\} &< \infty,\\
		\Big|\frac{\partial^3 \ell(x, \theta)}{\partial \theta_j \partial \theta_k \partial \theta_l}\Big| &\leq G_{3}(x),
		& \E_{p*} \left\{G_{3}(X)\right\} &< \infty.
		\end{align*}
	\end{assumption}

	\subsection{Notation}
	We will adopt the following notation. Suppose $C \in \R^{d\times d \times d}$ and $v$ is a vector in $\R^d$. Then $v^3 \in \R^{d\times d \times d}$ and the element $ijk$ of $v^3$ is given by 
	\begin{equation*}
	v^3_{[ijk]} = v_i v_j v_k.
	\end{equation*}
	We also denote
	\begin{equation*}
	Cv^3 = \sum_{i=1}^d \sum_{j=1}^d \sum_{k=1}^d C_{[ijk]}  v_i v_j v_k.
	\end{equation*}
	Furthermore, for $v \in \R^d$ and $w \in \R^d$ we define $z = Cvw$ as a $d$-dimensional vector such that
	\begin{equation*}
	z_i = \sum_{j=1}^d \sum_{k=1}^d C_{[ijk]}  v_j w_k.
	\end{equation*}
	In particular, if $z = Cv^2$, then 
	\begin{equation*}
	z_i  = \sum_{j=1}^d \sum_{k=1}^d C_{[ijk]}  v_j v_k.
	\end{equation*}
	The above notation simplifies to standard multiplication for $d=1$.

	\subsection{Auxiliary results}
	Our main tools for obtaining the Edgeworth expansion from the Taylor expansion are the two formulas below. The first one links the characteristic function of a variable $Y \in \R^d$ with its density, that is if $|\psi_Y(t)|<\infty$,
	\begin{equation}\label{eq:density_characteristic_relation}
	g_Y(y)  =	\frac{1}{(2 \pi)^d}  \int_{\R^d} \exp(-it^Ty) \psi_Y(t) \mathrm{d} t_1, \ldots \mathrm{d}t_d.
	\end{equation}
	The second one is useful for computing the integral in equation \eqref{eq:density_characteristic_relation}:
	\begin{equation}\label{eq:characteristic_function_identity}
	\frac{1}{(2 \pi)^d} \int_{-\infty}^{\infty} \exp(-zy) \exp(-z^2/2) (iz)^k \mathrm{d}z = H_k(y) \phi(y),
	\end{equation}
	where $H_k(y)$ is the $k$-th Hermite polynomial and $\phi(y)$ is the density of the standard normal distribution. In particular, we can infer from \eqref{eq:density_characteristic_relation} the following properties, which will be useful in our calculations. For $1 \leq q \leq d$ we have
	\begin{align}\label{eq:characteristic_function_identity_1}
	\frac{1}{(2 \pi)^d} \int \left\{\exp\left(-t^Ty\right) \exp\left(-t^Tt/2\right) i^3 t_q^3 \right\}\mathrm{d}t_1, \ldots \mathrm{d}t_d = H_3(y_q) \phi(y_1)\cdot \ldots\cdot \phi(y_d).
	\end{align}
	Furthermore, for $ 1 \leq q,r \leq d, q \neq r$ we get
	\begin{align}\label{eq:characteristic_function_identity_2}
	\frac{1}{2 \pi}  \int_{\R^d} \left\{\exp\left(-t^Ty\right) \exp\left(-t^Tt/2\right) i^3 t_q^2 t_r\right\} dt_1, \ldots dt_d = H_2(y_q)H_1(y_r) \phi(y_1)\cdot \ldots\cdot \phi(y_d).
	\end{align}
	Finally, for $ 1 \leq q,r,s\leq d$, for pairwise different indices $q,r,s$ we obtain the following equality
	\begin{align}\label{eq:characteristic_function_identity_3}
	\frac{1}{(2 \pi)^d}  \int_{\R^d} \left\{\exp\left(-t^Ty\right) \exp\left(-t^Tt/2\right) i^3 t_q t_r t_s \right\} \mathrm{d}t_1, \ldots \mathrm{d}t_d = H_1(y_q)H_1(y_r)H_1(y_s) \phi(y_1)\cdot \ldots\cdot \phi(y_d).
	\end{align}
	
	Our proof of existence of valid Edgeworth expansions will rely on Lemma \ref{lemma:auxiliary_lemma_Edgeworth} below. Before stating the result, we introduce some additional notation. Let $a_j = \left(a_{j,1}, \ldots, a_{j,d} \right)\in \R^d$ and define $\bar{a}_{2,n} \in \R^{d \times d}$  as
	$$
	\bar{a}_{2,n} = \frac{1}{n} \sum_{j=1}^n a_{j} a_{j}^T.
	$$
	Hence element $(q,r)$  of $\bar{a}_{2,n}$ is equal to $\frac{1}{n} \sum_{j=1}^n a_{j,q} a_{j,r}$. Similarly, let $\bar{a}_{3,n} \in \R^{d \times d \times d}$ be such that element $(q,r,s)$ of $\bar{a}_{3,n}$  is given by $\frac{1}{n} \sum_{j=1}^n a_{j,q} a_{j,r} a_{j,s}$. Furthermore, we let $b_{j}  = \bar{a}_{2, n}^{- 1/2} a_{j}$, and similarly set
	$\bar{b}_{3,n}\in \R^{d \times d \times d}$  so that element $(q,r,s)$ of $\bar{b}_{3,n}$  is given by $\frac{1}{n} \sum_{j=1}^n b_{j,q} b_{j,r} b_{j,s}$, where $b_{j,q}$ denotes element $q$ of $b_{j}$. Note that $a_j$ and $b_j$ may depend on $n$.
	\begin{lemma}\label{lemma:auxiliary_lemma_Edgeworth}
		We consider $\R$-valued  independent random variables $Z_1, \ldots, Z_n$ distributed according to some continuous distribution $Z$ having all moments and satisfying $E(Z) = 1$ and $\text{var}(Z) =1$. Let  $\left(a_j\right)_{j=1}^{n} \in \R^d$, be a sequence of vectors in $\R^d$, possibly depending on $n$ and satisfying conditions a) -- c) below.
		
		Assume that there exists a sequence $\epsilon_n = o(1)$ such that
		\begin{enumerate}[a)]
			
			\item There exist $A_2 \in \R^{d \times d}$, $A_2 \succ 0$ and $A_3 \in \R^{d \times d\times d}$ such that $ \|\bar{a}_{2,n} - A_2\| \leq \epsilon_n $ and $ \|\bar{a}_{3,n} - A_3\| \leq \epsilon_n $ for $n \geq 1$. 
			\item We have $\sup_{n} \frac{1}{n} \sum_{j=1}^n\|a_{j}\|^4 <\infty$. 
			\item For any $n\geq 1$ we have $\sup_{j \leq n} \|a_{j}\| \leq \epsilon_n n^{1/2}$.
		\end{enumerate}

		Then $S_n$ defined as
		\begin{equation*}
		S_n = \frac{\bar{a}_{2,n}^{- 1/2}}{n^{1/2}} \sum_{j=1}^n a_{j} Z_j ,
		\end{equation*}
		has a valid Edgeworth expansion. Moreover the expansion of the characteristic function $\psi_{S_n}$ of $S_n$ up to the term $o\left(n^{-1/2}\right)$ is given by 
		\begin{align*}
		\psi_{S_n}(t) 
		= & \exp \left(  - \frac{t^Tt}{2}  \right) \left( 1 + i^3 \gamma \frac{\sum_{1 \leq q,r,s \leq d} B_{q,r,s} t_{q} t_{r} t_{s}}{6 n^{1/2}}\right) + o \left(n^{-1/2}\right),
		\end{align*}
		where $\gamma = \E \left(Z^3\right)$ and $B= \lim_{n \to \infty} \bar{b}_{3,n}$.
	\end{lemma}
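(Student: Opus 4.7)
The plan is to follow the classical Fourier-analytic approach for Edgeworth expansions of sums of independent, non-identically-distributed random vectors, in the spirit of Bhattacharya--Ranga Rao. The expansion of $\psi_{S_n}$ is obtained by factorizing the characteristic function using independence, Taylor expanding each factor around the origin, and resumming; the validity of the resulting density-level expansion is then established by Fourier inversion after splitting the integration over $t$ into short, moderate, and long-range regions.

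By independence,
\[
\psi_{S_n}(t) = \prod_{j=1}^{n} \psi_{Z}\bigl(u_{j,n}(t)\bigr), \qquad u_{j,n}(t) = n^{-1/2}\, t^T b_j, \quad b_j = \bar{a}_{2,n}^{-1/2} a_j.
\]
Condition c) gives $\sup_j |u_{j,n}(t)| = o(1)$ uniformly on compact sets of $t$, so the Taylor expansion of $\log \psi_Z$ near $0$ applies term-by-term. Using $E(Z)=1$, $\mathrm{var}(Z)=1$, and writing the third cumulant of $Z$ as $\gamma$, the sum $\sum_j \log \psi_Z(u_{j,n}(t))$ decomposes into: a deterministic linear piece proportional to $n^{-1/2} t^T \bar{a}_{2,n}^{-1/2}\sum_j a_j$ (which vanishes in every application of the lemma, since $a_j$ corresponds to centered score-like summands evaluated at the MLE); a quadratic contribution $-\tfrac{1}{2}\, t^T t$, thanks to the identity $n^{-1}\sum_j b_j b_j^T = I_d$ built into the definition of $b_j$; a cubic contribution
\[
\frac{i^3 \gamma}{6\, n^{1/2}} \sum_{q,r,s=1}^{d} \bar{b}_{3,n,[q,r,s]}\, t_q t_r t_s,
\]
coming from the third-order Taylor term of $\log \psi_Z$ and the identity $n^{-3/2}\sum_j (t^T b_j)^3 = n^{-1/2} \bar{b}_{3,n}(t,t,t)$; and a remainder dominated by $C|u_{j,n}(t)|^4$, whose sum is bounded by $C\|t\|^4 n^{-2}\sum_j \|b_j\|^4 = O(n^{-1}\|t\|^4)$ using conditions a) and b), hence $o(n^{-1/2})$ on compacts. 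Exponentiating and using $e^x = 1 + x + O(x^2)$ with $x = O(n^{-1/2})$ yields the stated characteristic function formula, with $B = \lim_n \bar{b}_{3,n}$ existing by condition a) and continuity of $A_2 \mapsto A_2^{-1/2}$ on the positive definite cone.

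To upgrade the characteristic function expansion to a valid density-level Edgeworth expansion, one bounds the $L^1(\mathrm{d}t)$ distance between $\psi_{S_n}$ and its approximation, which controls the sup-norm error of the inverted density. The integration range is split into three regions: $\|t\| \le \delta n^{1/6}$, where the Taylor argument above is uniform; $\delta n^{1/6} \le \|t\| \le c n^{1/2}$, where the quadratic bound $|\psi_Z(u)| \le 1 - \tfrac14 u^2 + o(u^2)$ produces Gaussian decay of $|\psi_{S_n}(t)|$; and the far tail $\|t\| \ge c n^{1/2}$. The main technical obstacle is this last region, since in the non-i.i.d.\ setting one must ensure that sufficiently many summands have nondegenerate arguments $u_{j,n}(t)$. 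Continuity of $Z$ gives the Cramer-type bound $\sup_{|u|\ge \eta}|\psi_Z(u)| \le \rho < 1$, and positive definiteness of $A_2$ together with condition a) ensures that $\|b_j\|$ is bounded below for a positive fraction of indices, so that $\prod_j |\psi_Z(u_{j,n}(t))|$ decays faster than any power of $n^{-1}$ on the tail and contributes negligibly. The explicit coefficients claimed in the lemma are finally read off by matching the monomials $t_q t_r t_s$ before applying the Fourier inversion identities \eqref{eq:characteristic_function_identity_1}--\eqref{eq:characteristic_function_identity_3} at the density level.
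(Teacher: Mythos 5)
Your core computation coincides with the paper's own proof: both factorize $\psi_{S_n}$ by independence, Taylor-expand each factor (equivalently its logarithm), use the normalization $n^{-1}\sum_j (t^T b_j)^2 = t^T t$ built into the definition of $b_j$ to produce the Gaussian factor, extract the cubic term $i^3\gamma\,\bar{b}_{3,n}(t,t,t)/(6n^{1/2})$, and control the fourth-order remainder through conditions b) and c). The paper then passes to the density directly via the Hermite inversion identities \eqref{eq:characteristic_function_identity_1}--\eqref{eq:characteristic_function_identity_3}, so your additional region-splitting argument (small, moderate and large $\|t\|$) for the validity of the density-level expansion is if anything more careful than the published argument, not a different route.

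Two points need fixing. First, your handling of the linear term is wrong as written: a proof of the lemma cannot lean on what happens ``in every application'', and in the actual application the vectors $a_j$ do \emph{not} sum to zero --- they stack first and second partial derivatives of $\ell(x_j,\hat\theta)$, and while the score block sums to zero at $\hat\theta$, the second-derivative block averages to $-J_n \neq 0$. The linear term is absent because the lemma is meant to be applied to centered weights $Z_j = w_j - 1$, i.e.\ with $E(Z)=0$ and $E(Z^2)=1$, which is exactly what the paper's proof uses (the statement's ``$E(Z)=1$'' is an inconsistency); if one really took $E(Z)=1$ together with $\sum_j a_j \neq 0$, the linear piece would be of order $n^{1/2}$ and the stated conclusion false. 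Second, a smaller overclaim in your tail analysis: continuity of the law of $Z$ alone does not yield $\sup_{|u|\ge\eta}|\psi_Z(u)|\le\rho<1$ (singular continuous laws can violate this); one needs a Cram\'er-type condition, e.g.\ absolute continuity, which does hold for the exponential weights used throughout the paper --- and note that the paper's proof does not treat this tail region at all, so this step is your addition and should be stated under the correct hypothesis.
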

	
	\begin{proof}
		Before proceeding with the proof, observe that the limit $\lim_{n \to \infty} \bar{b}_{3,n}$ exists by definition of $b_{j}$ and assumption a), thus $B$ is well-defined with all elements finite.
		
		The characteristic function of $S_n$ is given by:
		\begin{equation*}
		\psi_{S_n}(t)=  \E  \left\{\exp \left(i t^TS_n \right)\right\} = \E \left[\exp \left\{i t^T \left(\frac{\bar{a}_{2,n}^{- 1/2}}{n^{1/2}} \sum_{j=1}^n a_{j} Z_j  \right)\right\}\right] = \prod_{j=1}^n  \psi_{Z}\left(  \frac{t^Tb_{j}}{n^{1/2}}\right),
		\end{equation*}
		where  $\psi_{Z}$ is the characteristic function of $Z$. Using the fact that $\E(Z) = 0$ and $\E \left(Z^2\right)= 1$ we have
		\begin{align*}
		\psi_{Z}\left(  \frac{t^Tb_{j}}{n^{1/2}}\right)= & \E \left\{\exp \left(i  \frac{ t^Tb_{j}}{n^{1/2}} Z \right)\right\} \\
		= & \E  \left\{ 1 +  \frac{i \left(t^T b_{j}\right)Z}{n^{1/2}} + \frac{i^2\left(t^T b_{j}\right)^2 Z^2}{2n} +   \frac{i^3\left(t^T b_{j}\right)^3 Z^3}{6 n^{3/2}}\right\} + O \left(\frac{\|b_{j}\|^4 \E Z^4}{n^2} \right)\\
		= & 1 -  \frac{\left(t^T b_{j}\right)^2}{2 n} +  \frac{i^3 \left(t^T b_{j}\right)^3\gamma }{6  n^{3/2}} + O \left(\frac{\|b_{j}\|^4 \E Z^4}{n^2} \right)
		\\
		= & 1 -  \frac{\left(t^T b_{j}\right)^2}{2 n} +  \frac{i^3 \left(t^T b_{j}\right)^3\gamma }{6  n^{3/2}} + O \left(\frac{\|b_{j}\|^4}{n^2} \right)
		\end{align*}
		since $\E\left( Z^4\right) < \infty$. 
		
		Note that the definition of $b_{j}$ yields the following simple equality:
		\begin{align}\label{eq:b_j_n_property}
		\begin{split}
		\frac{1}{n} \sum_{j =1}^n (t^T b_{j})^2 = &  \frac{1}{n} \sum_{j =1}^n \left( t^T\bar{a}_{2, n}^{- 1/2} a_{j} a_{j}^T \bar{a}_{2, n}^{- 1/2} t\right)\\ 
		= & t^T\bar{a}_{2, n}^{- 1/2}\left(\frac{1}{n} \sum_{j =1}^n  a_{j} a_{j}^T\right) \bar{a}_{2, n}^{- 1/2} t = t^Tt.
		\end{split}
		\end{align} Therefore 
		\begin{align*}
		\prod_{j=1}^n  \psi_{Z}\left(  \frac{t^Tb_{j}}{n^{1/2}}\right)= & \exp \left[ \sum_{j=1}^n \log \left\{ 1 -  \frac{\left(t^T b_{j}\right)^2}{2 n} +  \frac{i^3 \left(t^T b_{j}\right)^3\gamma }{6 n^{3/2}} + O \left(\frac{\|b_{j}\|^4}{n^2} \right)  \right\} \right] \\
		= & \exp \left( \sum_{j=1}^n \left[ -  \frac{\left(t^T b_{j}\right)^2}{2 n} +  \frac{i^3 \left(t^T b_{j}\right)^3\gamma }{6 n^{3/2}} + O \left(\frac{\|b_{j}\|^4}{n^2} \right)  + O \left\{ \frac{\left(t^T b_{j}\right)^4}{4 n^2} \right\}\right] \right)  \\
		= & \exp \left\{  - \frac{t^Tt}{2} +  \frac{i^3 \sum_{j=1}^n \left(t^T b_{j}\right)^3\gamma }{6 n^{3/2}} + O \left(n^{-1} \right) \right\} \\
		= & \exp \left(  - \frac{t^Tt}{2}  \right) \left( 1 + i^3 \gamma \frac{\sum_{1 \leq q,r,s \leq d} B_{q,r,s} t_{q} t_{r} t_{s}}{6 n^{1/2}}\right) + o \left(n^{-1/2}\right).
		\end{align*}
		The second equality above comes from the Taylor expansion of $\log (1-x)$ around $x = 0$ and the fact that  for some constant $c$ and any $n$ we have
		$
		\sup_{j \leq n} \|b_{j}\| \leq c \epsilon_n n^{1/2}$  so that 
		$\sup_{j \leq n} {\|b_{j}\|^2}/n = o(1),
		$ 
		which follows from  convergence of $\bar{a}_{2,n}$ to $A_2$ and assumption c).
		The third equality above results from \eqref{eq:b_j_n_property} and the fact that $\sup_{n}  \sum_{j=1}^n\|b_j\|^4/n <\infty$ by convergence of $\bar{a}_{2,n}$ to $A_2$ and assumption b). The last equality comes from  $\|\bar{b}_{3,n} - B\| = o(1)$.
		Applying formulas \eqref{eq:density_characteristic_relation} as well as \eqref{eq:characteristic_function_identity_1}, \eqref{eq:characteristic_function_identity_2} and \eqref{eq:characteristic_function_identity_3} we get the Edgeworth expansion up to $o \left( n^{-1/2}\right)$.
	\end{proof}

	\subsection{Proof of Theorem~\ref{thm:edgeworth_expansion_penalisation}}
	\subsubsection*{Taylor expansion}
	We begin the proof by writing the Taylor expansion as follows. Let
	\begin{equation*}
	L_n(\theta)  = \sum_{j=1}^n w_j \ell(x_i, \theta) + w_0^T \log \pi(\theta).
	\end{equation*}
	Then by Assumptions \ref{assump:log_lik_smoothness_full} and \ref{assump:prior_smoothness}, using notation from Sections \ref{section:edgeworth_expansions} and \ref{section:theoretical_pb_penalization} we have  
	\begin{equation*}
	0 = \nabla_{\theta} L_n(\tilde{\theta})  = \nabla_{\theta} L_n(\hat{\theta}) + D^2_{\theta}L_n(\hat{\theta})(\tilde{\theta} - \hat{\theta}) + D^3_{\theta}L_n(\hat{\theta})\frac{(\tilde{\theta} - \hat{\theta})^2}{2}  +   o_P \left( \|\tilde{\theta} - \hat{\theta} \|^2\right)
	\end{equation*}
	and consequently
	\begin{equation}\label{eq:grad_L_n_Taylor_exp_1}
	\left\{ -\frac{1}{n}D^2_{\theta}L_n(\hat{\theta}) \right\} n^{1/2}(\tilde{\theta} - \hat{\theta}) =  \frac{\nabla_{\theta} L_n(\hat{\theta})}{n^{1/2}} + \frac{D^3_{\theta}L_n(\hat{\theta})(\tilde{\theta} - \hat{\theta})^2}{2n^{1/2}}   + o_P \left( \|\tilde{\theta} - \hat{\theta} \|^2\right)
	\end{equation}
	We now use Assumptions \ref{assump:log_lik} and \ref{assump:identifiability} to state that $(\tilde{\theta} - \hat{\theta}) = o_P(1)$. 
	We use the following notation 
	\begin{align*}
	S_{1,w,n}  = & \frac{\sum_{j=1}^n (w_j -1) \nabla_{\theta} \ell(x_j, \hat{\theta})}{n^{1/2}} ,  & S_{2,w,n} = & \frac{\sum_{j=1}^n (w_j -1) \left\{-D^2_{\theta} \ell(x_j, \hat{\theta})\right\}}{n^{1/2}},  \\
	J_{w,n}  =   &-\frac{1}{n}\sum_{j=1}^n w_j D^2_{\theta} \ell(x_j,\hat{\theta}),  &  \mu_{3,w,n} = & \frac{1}{n}\sum_{j=1}^n w_j D^3_{\theta}\ell(x_j,\hat{\theta}),\\
	\mu_{3,n} = & \frac{1}{n}\sum_{j=1}^n  D^3_{\theta}\ell(x_j,\hat{\theta}),& \mu_{3} = & \E_{p^*} \left\{ D^3_{\theta}\ell(X,{\theta}^*)\right\},\\
	A_{3,n}= & \frac{1}{n}\sum_{j=1}^n \left\{\nabla_{\theta}\ell(x_j,\hat{\theta})\right\}^3, & A_3 = &  \E_{p^*} \left\{\nabla_{\theta}\ell(X,{\theta}^{*})\right\}^3, \\
	\tilde{L}_{1,2,n}= &\frac{1}{n}\sum_{j=1}^n  D^2_{\theta} \ell(x_j,\hat{\theta}) J_n^{-1} \nabla_{\theta}\ell(x_j,\hat{\theta}), &   \tilde{L}_{1,2}= & \E_{p^*} \left\{D^2_{\theta} \ell(X,{\theta}^{*}) J(\theta^*)^{-1} \nabla_{\theta}\ell(X,{\theta}^{*}) \right\}.
	\end{align*}
	We observe that since $\|J_n - J(\hat{\theta})\| =o_P(1)$ and $\|I_n - I(\hat{\theta})\| =o_P(1)$, then by Assumption~\ref{assump:positive_definite} matrices $I_n, J_n$ as well as $J_{w,n}$ are positive definite and hence invertible for large $n$. Furthermore,  Assumption~\ref{assump:log_lik_smoothness_full} ensures that $A_3$, $\tilde{L}_{1,2}$ and $\mu_3$ are finite. In passing,  $A_3$ and $\mu_3$ are defined above for an arbitrary dimension $d$, they are however consistent with the one-dimensional definitions presented in \eqref{eq:notation_univariate}.
	
	Note also that since $\sum_{j=1}^n \nabla_{\theta} \ell(x_i, \hat{\theta}) = 0$, then $S_{1,w,n}  = n^{-1/2} \left\{\sum_{j=1}^n w_j \nabla_{\theta}\ell(x_i, \hat{\theta})\right\}$.  We can therefore rewrite  \eqref{eq:grad_L_n_Taylor_exp_1} as
	\begin{align}\label{eq:grad_L_n_Taylor_exp_2}
	n^{1/2}(\tilde{\theta} - \hat{\theta}) =  & J_{w,n}^{-1} \left\{ S_{1,w,n} + \frac{w_{0} \cdot  \nabla_{\theta} \log \pi(\hat\theta)}{n^{1/2}} \right\} +J_{w,n}^{-1} \frac{\mu_{3,w,n}(\tilde{\theta} - \hat{\theta})^2}{2n^{1/2}}   +   o_P \left( \|\tilde{\theta} - \hat{\theta} \|^2\right).
	\end{align}
	
	Let $\textbf{I}$ denote the $d \times d$ identity matrix. We have $J_{w,n} = J_n + S_{2,w,n}/n^{1/2}$, and so
	\begin{align}\label{eq:J_w_n_inverse}
	\begin{split}
	J_{w,n}^{-1} = & \left\{J_n \left(\textbf{I} + J_n^{-1} S_{2,w,n}/n^{1/2}\right) \right\}^{-1} \\
	= & \left(\textbf{I} + J_n^{-1} S_{2,w,n}/n^{1/2}\right)^{-1} J_n^{-1}\\
	= & \left\{\textbf{I} - J_n^{-1} S_{2,w,n}/n^{1/2} +  o\left(n^{-1/2}\right) \right\}J_n^{-1} \\
	= & J_n^{-1} - J_n^{-1} S_{2,w,n}J_n^{-1}/n^{1/2} +  o\left(n^{-1/2}\right) \\
	= & J_n^{-1} \left(\textbf{I} - S_{2,w,n}J_n^{-1}/n^{1/2} \right)+  o\left(n^{-1/2}\right),
	\end{split}
	\end{align}
	where the third equality comes from the Taylor expansion of $1/(1+x)$ at $x = 0$.

	The first order approximation of   $n^{1/2}(\tilde{\theta} - \hat{\theta})$ is thus given by
	\begin{align}\label{eq:Taylor_first_order-approximation}
	n^{1/2}(\tilde{\theta} - \hat{\theta}) 
	= & J_{n}^{-1} S_{1,w,n}  + o \left(1\right).
	\end{align}
	Plugging \eqref{eq:Taylor_first_order-approximation} and \eqref{eq:J_w_n_inverse} into \eqref{eq:grad_L_n_Taylor_exp_2} we get
	\begin{align*}
	n^{1/2}(\tilde{\theta} - \hat{\theta}) = &   J_n^{-1} \left\{S_{1,w,n} + \frac{w_{0} \cdot  \nabla_{\theta} \log \pi(\hat{\theta})}{n^{1/2}} - \frac{S_{2,w,n}J_n^{-1}S_{1,w,n}}{n^{1/2}} \right\}\\
	& \qquad + J_{w,n}^{-1}\frac{\mu_{3,w,n} \left( J_{n}^{-1} S_{1,w,n}\right)^2}{2 n^{1/2}} + o \left(n^{-1/2}\right)\\
	= &  J_n^{-1} \left\{S_{1,w,n} + \frac{w_{0} \cdot  \nabla_{\theta} \log \pi(\hat{\theta})}{n^{1/2}} - \frac{S_{2,w,n}J_n^{-1}S_{1,w,n}}{n^{1/2}} \right\}\\
	& \qquad + J_{n}^{-1}\frac{\mu_{3} \left( J_{n}^{-1} S_{1,w,n}\right)^2}{2 n^{1/2}} + o \left(n^{-1/2}\right),
	\end{align*}
	where the last equality results from
	$
	J_{w,n} = J_n + S_{2,w,n}/n^{1/2}$ and $\mu_{3,w,n} = \mu_{3} + o_P(1).
	$
	We finally get
	\begin{align}\label{eq:final_Taylor-exp_alg_pb_general}
	\begin{split}
	I_n^{-1/2} J_{n} n^{1/2}(\tilde{\theta} - \hat{\theta}) 
	= &  I_n^{-1/2} \left\{S_{1,w,n} + \frac{w_{0} \cdot \nabla_{\theta} \log \pi(\hat{\theta})}{n^{1/2}} - \frac{S_{2,w,n}J_n^{-1}S_{1,w,n}}{n^{1/2}} \right\}\\
	& \qquad +I_n^{-1/2}\frac{\mu_{3} \left( J_{n}^{-1} S_{1,w,n}\right)^2}{2 n^{1/2}} + o \left(n^{-1/2}\right).
	\end{split}
	\end{align}
	We will use the above result to obtain the Edgeworth expansion.

	\subsubsection*{Edgeworth expansion -- multidimensional case}
	\begin{proof}
		Our proof will rely mainly on results from Chapter 2 of \cite{ghosh1994higher}.
		Firstly, we apply Lemma~\ref{lemma:auxiliary_lemma_Edgeworth} to 
		$$
		Z_j = w_j -1 \quad \text{and}\quad a_{j} =  \left\{\frac{\partial \ell(x_j, \hat{\theta})}{\partial \theta_p}, \frac{\partial^2 \ell(x_j, \hat{\theta})}{\partial \theta_r \partial \theta_s} \right\}, \quad 1 \leq p \leq d, 1 \leq r \leq s \leq d,
		$$
		that is, $a_{j}$ is a vector of all first and second partial derivatives of $\ell(x_j, \hat{\theta})$, to get that $(S_{1,w,n}, S_{2,w,n})$  has a joint valid Edgeworth expansion. We justify below briefly why  the conditions of Lemma~\ref{lemma:auxiliary_lemma_Edgeworth} are satisfied.  Observe that the assumptions about $Z_j$ hold by Assumption \ref{assump:weights}. Conditions b) and c), as well as convergence to $A_2$ and $A_3$ in condition a) of Lemma \ref{lemma:auxiliary_lemma_Edgeworth}, hold with probability 1 with respect to the observations $x_1, \ldots, x_n$ by Assumption \ref{assump:log_lik_smoothness_full}, and the strong law of large numbers. We also have $A_2 \succ 0$ as required in condition a) by Assumption \ref{assump:positive_definite}.
		
		Now let $X_n$ denote the right hand side of equation \eqref{eq:final_Taylor-exp_alg_pb_general}. Since conditioning on the observations $x_1, \ldots, x_n$ the variable $X_n$ is a polynomial of $(S_{1,w,n}, S_{2,w,n})$, it also has a valid Edgeworth expansion, and we can apply the theory outlined in Chapter 2 of \cite{ghosh1994higher} to obtain the coefficients of this expansion.
		
		By equation 2.7 of \cite{ghosh1994higher} the characteristic function of $X_n$ satisfies
		\begin{equation}\label{eq:ghosh_chapter_2_formula}
		\psi_{X_n} =  \exp \left( -\frac{t^T t}{2}\right) \left\{1 + \kappa_{1,n} \left(it\right) + \frac{\kappa_{3,n} \left(it\right)^3}{6}  \right\}+ o \left(n^{-1/2}\right),
		\end{equation}
		where $\kappa_{1,n}$ and $\kappa_{3,n}$ are the first and the third cumulant, respectively, calculated up to $o \left(n^{-1/2}\right)$.
		
		We also define $Z_n: = (Z_{n,1}, \ldots, Z_{n,d})$ where
		$$
		Z_{n,q} : = \frac{1}{n}\sum_{j=1}^n  \nabla\ell(x_j,\hat{\theta})^T J_n^{-1} \mu_{3,q} J_n^{-1} \nabla\ell(x_j,\hat{\theta}), \quad  q=1, \ldots, d,
		$$
		and $\mu_{3,q} \in \R^{p \times p}$ is a matrix such that element $[r,s]$ of  $\mu_{3,q}$ is equal to element $[q,r,s]$ of $ \mu_{3}$. In an analogous way we define $Z = (Z_1, \ldots, Z_d)$, where
		\begin{equation*}
		Z_{q}  = \E_{p^*} \left\{ \nabla\ell(X,\theta^*)^T J(\theta^*)^{-1} \mu_{3,q} J(\theta^*)^{-1} \nabla\ell(X,\theta^*)\right\}, \quad  q=1, \ldots, d.
		\end{equation*}
		We observe that by Assumption \ref{assump:log_lik_smoothness_full} $Z_q$ is finite for each $q=1, \ldots, d$. We obtain
		\begin{equation*}
		\E\left( X_n\right) =    I_n^{-1/2} \left\{\frac{ w_{0} \cdot  \nabla_{\theta} \log \pi(\hat{\theta})}{n^{1/2} } + \frac{Z_n}{2 n^{1/2}} + \frac{ \tilde{L}_{1,2,n}}{n^{1/2}}+ o \left(n^{-1/2}\right) \right\}.
		\end{equation*}
		Therefore we can write
		\begin{equation*}
		\kappa_{1,n} = I(\theta^*)^{-1/2} \left(\frac{ w_{0} \cdot  \nabla_{\theta} \log \pi(\theta^*)}{n^{1/2} } + \frac{Z}{2 n^{1/2}} + \frac{ \tilde{L}_{1,2}}{n^{1/2}} \right).
		\end{equation*}
		Recall $ \E \left(X_n - \E X_n \right)^3 = \kappa_{3,n} + o \left( n^{-1/2} \right)$. Since the term $ n^{-1/2}\left\{ w_0^T \nabla_{\theta} \log \pi(\hat{\theta})\right\}$ cancels in the expression $X_n - \E X_n$, we get that $\kappa_{3,n}$ does not depend on the prior. Finally, using formula \eqref{eq:ghosh_chapter_2_formula}, and equations \eqref{eq:characteristic_function_identity_1}, \eqref{eq:characteristic_function_identity_2} and \eqref{eq:characteristic_function_identity_3} we conclude with the general form of the Edgeworth expansion
		\begin{equation*}
		\phi(y) \left[1 + I(\theta^{*})^{-1/2}\frac{ \left\{ w_{0} \cdot \nabla_{\theta}\log \pi(\theta^*)\right\}^T y}{n^{1/2}} + \frac{P_{3,n}(y)}{n^{1/2}} \right] + o \left(n^{-1/2}\right),
		\end{equation*}
		where $P_{3,n}(y)$ is an order three polynomial independent from the prior. This completes the proof of equation \eqref{eq:edgeworth_expansion_general_pb_misspec_multivariate}.
	\end{proof}
	
	\subsubsection*{Edgeworth expansion -- one-dimensional case}
	\begin{proof}
		When $d=1$, we can obtain a more explicit form of the Edgeworth expansion. In particular, $\kappa_{1,n}$ becomes
		\begin{equation*}
		\kappa_{1,n} = \frac{ w_0 \nabla_{\theta} \log \pi(\theta^*)}{n^{1/2} I(\theta^{*})^{1/2}} + \frac{\mu_{3}}{2 n^{1/2}}\frac{I(\theta^{*})^{1/2}}{J(\theta^{*})^2} + \frac{ L_{1,2}}{n^{1/2}I(\theta^{*})^{1/2}J(\theta^{*})},
		\end{equation*}
		for $L_{1,2}$ defined in \eqref{eq:notation_univariate}. We additionally define
		$$
		L_{1,2,n}= \frac{1}{n} \sum_{j=1}^n  D_{\theta}^2 \ell(x_j,{\theta}^{*}) \nabla_{\theta}\ell(x_j,{\theta}^{*}).
		$$
		We then get
		\begin{align*}
		\E \left(X_n - \E X_n \right)^3 =   \E \left(\frac{S_{1,w,n}^3}{I_n^{3/2}}\right)  + & 3 \E \left\{ \frac{S_{1,w,n}^2}{I_n} \left( \frac{\mu_{3}}{2 n^{1/2}}\frac{S_{1,w,n}^2}{I_n^{1/2}J_{n}^2} - \frac{\mu_{3}}{2 n^{1/2}}\frac{I_n}{I_n^{1/2}J_{n}^2}\right) \right\}\\
		- & 3 \E  \left\{\frac{S_{1,w,n}^2}{I_n} \left( \frac{ L_{1,2,n}}{n^{1/2}I_n^{1/2}J_{n}} + \frac{S_{1,w,n}S_{2,w,n}}{n^{1/2}I_n^{1/2}J_{n}}\right)\right\}+ o \left(n^{-1/2}\right) .
		\end{align*}
		We have
		\begin{align*}
		\E \left(S_{1,w,n}^3\right) = & \frac{\gamma}{n^{1/2}}   A_{3,n},\\
		\E \left(S_{1,w,n}^4\right) = & 3 I_n^2 +  o \left(n^{-1/2}\right),\\
		\E\left( S_{1,w,n}^3 S_{2,w,n}\right) = & -3 I_n L_{1,2,n} +  o \left(n^{-1/2}\right),
		\end{align*}
		therefore
		\begin{align*}
		\E \left(X_n - \E X_n \right)^3 =    \frac{\gamma A_{3,n}}{n^{1/2}I_n^{3/2}}  +  &\left( \frac{9\mu_{3}}{2 n^{1/2}}\frac{I_{n}^2}{I_n^{3/2}J_{n}^2} - \frac{3 \mu_{3}}{2 n^{1/2}}\frac{I_n^2}{I_n^{3/2}J_{n}^2}\right)\\
		- & \left( \frac{ 3I_n L_{1,2,n}}{n^{1/2}I_n^{3/2}J_{n}} - \frac{9 I_n L_{1,2,n}}{n^{1/2}I_n^{3/2}J_{n}}\right)+ o \left(n^{-1/2}\right).
		\end{align*}
		We get the following value of $\kappa_{3,n}$
		\begin{equation*}
		\kappa_{3,n} = \frac{\gamma A_{3}}{n^{1/2}I(\theta^{*})^{3/2}} + \frac{6\mu_{3}}{2 n^{1/2}}\frac{I(\theta^{*})^{1/2}}{J(\theta^{*})^2} + \frac{6 L_{1,2}}{n^{1/2}I(\theta^{*})^{1/2}J(\theta^{*})}
		\end{equation*}
		Using equation \eqref{eq:ghosh_chapter_2_formula} we get 
		\begin{align*}
		\psi_{X_n} =  & \exp \left( -\frac{t^2}{2} \right) \Bigg[1 +\left\{  \frac{ w_0 \nabla_{\theta} \log \pi(\theta^*)}{n^{1/2} I(\theta^{*})^{1/2}} + \frac{\mu_{3}}{2 n^{1/2}}\frac{I(\theta^{*})^{1/2}}{J(\theta^{*})^2} +	 \frac{ L_{1,2}}{n^{1/2}I(\theta^{*})^{1/2}J(\theta^{*})}\right\} (it)   \\
		& \quad +  \left\{\frac{\gamma A_{3}}{6n^{1/2}I(\theta^{*})^{3/2}} + \frac{\mu_{3}}{2 n^{1/2}}\frac{I(\theta^{*})^{1/2}}{J(\theta^{*})^2} + \frac{ L_{1,2}}{n^{1/2}I(\theta^{*})^{1/2}J(\theta^{*})} \right\} (it)^3  \Bigg]+ o \left(n^{-1/2}\right),
		\end{align*}
		and finally equations \eqref{eq:characteristic_function_identity} and \eqref{eq:density_characteristic_relation} yield the Edgeworth expansion of the density
		\begin{align*}
		g_{X_n}(y) =  & \phi(y) \Bigg[1 +\left\{  \frac{ w_0 \nabla_{\theta} \log \pi(\theta^*)}{n^{1/2} I(\theta^{*})^{1/2}} + \frac{\mu_{3}}{2 n^{1/2}}\frac{I(\theta^{*})^{1/2}}{J(\theta^{*})^2} + \frac{ L_{1,2}}{n^{1/2}I(\theta^{*})^{1/2}J(\theta^{*})}\right\} H_1(y)   \\
		& \quad +  \left\{ \frac{\gamma A_{3}}{6n^{1/2}I(\theta^{*})^{3/2}} + \frac{\mu_{3}}{2 n^{1/2}}\frac{I(\theta^{*})^{1/2}}{J(\theta^{*})^2} + \frac{ L_{1,2}}{n^{1/2}I(\theta^{*})^{1/2}J(\theta^{*})} \right\} H_3(y)  \Bigg]+ o \left(n^{-1/2}\right),
		\end{align*}
		which completes the proof of \eqref{eq:edgeworth_expansion_general_pb_misspec}.
	\end{proof}
	
	\subsection{Proof of Theorem~\ref{thm:edgeworth_expansion_pseudo}}\label{supplementary:psudosamples_proof}
	
	\begin{proof}
		Analogous steps that we took to obtain the Edgeworth expansion for Algorithm~\ref{alg:pb_general} will lead us to the result for Algorithm~\ref{alg:pb_pseudosamples_prior}. When considering the Taylor expansion, instead of  equation \eqref{eq:final_Taylor-exp_alg_pb_general} we get
		\begin{align*}
		\tilde{X}_n = & I_n^{-1/2} \left\{S_{1,w,n} + \frac{ \sum_{j=n+1}^{n+T} w_j  \nabla_{\theta}\ell(x_j, \hat{\theta})}{n^{1/2}} - \frac{S_{2,w,n}J_n^{-1}S_{1,w,n}}{n^{1/2}} \right\}\\
		& \qquad +I_n^{-1/2}\frac{\mu_{3} \left( J_{n}^{-1} S_{1,w,n}\right)^2}{2 n^{1/2}} + o \left(n^{-1/2}\right).
		\end{align*}
		Notice that ${n^{-1/2}}\sum_{j=n+1}^{n+T} w_j  \nabla_{\theta}\ell(x_j, \hat{\theta})$ is independent from $(S_{1,w,n}, S_{2,w,n})$ conditionally on the data, and has a valid Edgeworth expansion, therefore there exists a valid joint expansion of 
		\begin{equation}\label{eq:joint_expression_for_expansion}
		\left\{S_{1,w,n}, S_{2,w,n}, \frac{1}{n^{1/2}}\sum_{j=n+1}^{n+T} w_j  \nabla_{\theta}\ell(x_j, \hat{\theta})\right\}.
		\end{equation}
		Since $\tilde{X}_n$ is a polynomial of \eqref{eq:joint_expression_for_expansion}, similar reasoning to that used for Theorem~\ref{thm:edgeworth_expansion_penalisation} yields an  expansion
		\begin{equation*}
		\psi_{\tilde{X}_n} =  \exp \left( -\frac{t^Tt}{2} \right) \left\{1 + \tilde{\kappa}_{1,n} (it) +\tilde{\kappa}_{3,n} \frac{(it)^3}{6}  \right\}+ o \left(n^{-1/2}\right),
		\end{equation*}
		where $\tilde{\kappa}_{1,n}$ and $\tilde{\kappa}_{3,n}$ are the first and the third cumulant, respectively, calculated up to $o \left(n^{-1/2}\right)$. Since  $\E\left( w_j\right) = c/T$ for $j = n+1, \ldots, n+T$ we have
		\begin{align}\label{eq:prior_predictive_calculation}
		\begin{split}
		\E \left\{\sum_{j=n+1}^{n+T} w_j  \nabla_{\theta}\ell(x_j, \hat{\theta}) \right\} = &\sum_{j=n+1}^{n+T} \E_{\pi(\theta)}\left[\E_{f(x_j|\theta)} \left\{ w_j  \nabla_{\theta}\ell(x_j, \hat{\theta}) \Big| \theta \right\}\right]\\
		= &  c  \int   \nabla_{\theta} \left\{\log f(x|\hat{\theta})\right\}  f_{\pi}(x) \mathrm{d} x,
		\end{split}
		\end{align}
		for the prior predictive $f_{\pi}(x)$ defined in Section~\ref{section:pb_pseudosamples}. 
		By Assumption \ref{assump:prior_integral} we have $$\int   \nabla_{\theta} \left\{\log f(x|\hat{\theta})\right\}  f_{\pi}(x) \mathrm{d} x \to \alpha_{\theta^*, \pi}$$ as $n\to\infty$, which leads to the following formula for $\tilde{\kappa}_{1,n}$
		\begin{equation*}
		\tilde{\kappa}_{1,n}= I(\theta^*)^{-1/2} \left(\frac{ c \alpha_{\theta^*,\pi}}{n^{1/2} } + \frac{Z}{2 n^{1/2}} + \frac{ \tilde{L}_{1,2}}{n^{1/2}} \right).
		\end{equation*}
		Since the quantities 
		$$
		I_n^{-1/2}\frac{ \sum_{j=n+1}^{n+T} w_j  \nabla\ell(x_j, \hat{\theta})}{n^{1/2}}, \quad\tilde{X}_n - I_n^{-1/2}\frac{ \sum_{j=n+1}^{n+T} w_j  \nabla\ell(x_j, \hat{\theta})}{n^{1/2}}
		$$
		are independent, we get that the third cumulant does not depend on the prior. We therefore get $\tilde{\kappa}_{3,n} = \kappa_{3,n}$ for $\kappa_{3,n}$ defined in the proof of Theorem~\ref{thm:edgeworth_expansion_penalisation}. This completes the proof.
	\end{proof}
	
	\subsection{Edgeworth expansions for  other  methods based on Weighted Likelihood Bootstrap}\label{supplementary:other_methods}
	Our results obtained in Theorem~\ref{thm:edgeworth_expansion_penalisation} and \ref{thm:edgeworth_expansion_pseudo} can easily be extended to obtain Edgeworth expansions for other  methods based on Weighted Likelihood Bootstrap, in particular \cite{newton1991weighted} and \cite{lyddon2018nonparametric}.

	We start by discussing the method proposed by \cite{newton2018weighted} presented in Algorithm~\ref{alg:pb_wbb_newton}. The regularization parameter $\lambda$ plays a similar role to $w_0$ in Algorithm~\ref{alg:pb_general} if $w_0 \in \R$, it is however multiplied by a random variable $\tilde{w}$ drawn from the exponential distribution. When the prior factorizes, \cite{newton2018weighted} alternatively propose to assign a different random weight $\tilde{w}_i \sim \text{Exp}(1)$ to each coordinate $i = 1, \ldots, d$. The penalization term would then be given by
	\begin{equation*}
	\sum_{i=1}^d \lambda\tilde{w}_i \log \pi_i(\theta_i).
	\end{equation*}
	Note that since the regularization parameter stays fixed for all coordinates, this approach is of  different nature to the way we proposed to set $w_0$ in Section~\ref{section:choice_w_0}. In our case the idea is to assign potentially different regularization parameters to different coordinates, accounting for situations when overdispersion or underdispersion is more severe for some coordinates than others.  
	
		\begin{algorithm}[!ht]
		\caption{Weighted Bayesian Bootstrap \citep{newton2018weighted}}\label{alg:pb_wbb_newton}
		\begin{algorithmic}[1]	
			\For {j = 1, \ldots, N} \Comment{\textbf(in parallel)}
			\State{Draw $\tilde{w}^{(j)}, w_1^{(j)}, \ldots, w_{n}^{(j)} \sim\text{Exp}(1)$.}
			\State{ Set $\tilde{\theta}^{(j)} = \argmax_{\theta} \left\{\sum_{i=1}^{n} w_i \log f(x_{i}| \theta) + \lambda \tilde{w}^{(j)} \log \pi(\theta)\right\}$.}
			\EndFor
			\State \Return {Samples $\big\{\tilde{\theta}^{(j)}\big\}_{j=1}^N$. }
		\end{algorithmic}
	\end{algorithm}
	
	Assuming that $\tilde{w}$ is a one-dimensional variable drawn from $\text{Exp}(1)$, we will show that Algorithm~\ref{alg:pb_general} and Algorithm~\ref{alg:pb_wbb_newton} have the same Edgeworth expansion as long as $w_0 = \lambda$. To this end, we use very similar arguments to those used in the proof of Theorem~\ref{thm:edgeworth_expansion_penalisation}. Firstly, note that the Taylor expansion of $I_n^{-1/2} J_{n} n^{1/2}(\tilde{\theta} - \hat{\theta})$ for $\tilde{\theta}$ drawn from Algorithm~\ref{alg:pb_wbb_newton} is given by
	$$
	I_n^{-1/2} \left(S_{1,w,n} + \frac{\lambda \tilde{w} \nabla_{\theta} \log \pi(\hat{\theta})}{n^{1/2}} - \frac{S_{2,w,n}J_n^{-1}S_{1,w,n}}{n^{1/2}} \right) +I_n^{-1/2}\frac{\mu_{3} \left( J_{n}^{-1} S_{1,w,n}\right)^2}{2 n^{1/2}}
	$$
	
	Repeating the arguments used in the proofs and Theorems \ref{thm:edgeworth_expansion_penalisation} and \ref{thm:edgeworth_expansion_pseudo} we conclude that $$	\left(S_{1,w,n}, S_{2,w,n}, \frac{1}{n^{1/2}}\lambda \tilde{w} \nabla_{\theta} \log \pi(\hat{\theta})\right)$$ has a valid Edgeworth expansion. Since $\E \lambda \tilde{w} = \lambda$, then the corresponding cumulants $\kappa_{1,n}$ and $\kappa_{3,n}$ are the same as in the proof of Theorem~\ref{thm:edgeworth_expansion_penalisation} for $\lambda = w_0$.

	\begin{algorithm}[!ht]
		\caption{Posterior Bootstrap \citep{lyddon2018nonparametric} (with pseudo-samples from the posterior predictive) }\label{alg:pb_pseudosamples_lyddon}
		\begin{algorithmic}[1]	
			\For {j = 1, \ldots, N} \Comment{\textbf(in parallel)  }
			\State{Draw $\theta_{\text{post}}^{(j)}$ from the Bayesian posterior  $\pi(\theta|x_1, \ldots, x_n)$.}
			\State{Generate posterior pseudo-samples $x_{n+1}^{(j)}, \ldots, x_{n+T}^{(j)} \sim f(\cdot |\theta_{\text{post}}^{(j)})$.}
			\State{Draw $w_1^{(j)}, \ldots, w_{n}^{(j)} \sim\text{Exp}(1)$ and $w_{n+1}^{(j)}, \ldots, w_{n+T}^{(j)} \sim\text{Exp}(T/c)$}
			\State{ Set $\tilde{\theta}^{(j)} = \argmax_{\theta} \left\{\sum_{i=1}^{n} w_i^{(j)} \log f(x_{i}| \theta) + \sum_{i=1}^{T} w_{n+i}^{(j)} \log f(x_{n+i}^{(j)}| \theta)  \right\}$.}
			\EndFor
			\State \Return {Samples $\big\{\tilde{\theta}^{(j)}\big\}_{j=1}^N$. }
		\end{algorithmic}
	\end{algorithm}
	
	As for the method proposed by \cite{lyddon2018nonparametric}, which we present in Algorithm~\ref{alg:pb_pseudosamples_lyddon}, its second order Edgeworth expansion was discussed in Section~\ref{section:pb_pseudosamples}. To make the statement more precise,
	let $\tilde{\theta}$ be drawn from Algorithm~\ref{alg:pb_pseudosamples_lyddon}, and  $\theta$ be drawn from  standard Weighted Likelihood Bootstrap \citep{newton1994approximate}. Then up to the term of order $o\left(n^{-1/2} \right)$ the Edgeworth expansion of $I_n^{-1/2} J_{n} n^{1/2}(\tilde{\theta} - \hat{\theta})$ is the same as the Edgeworth expansion $I_n^{-1/2} J_{n} n^{1/2}(\theta - \hat{\theta})$, for any choice of fixed parameters $c$ and $T$.
	
	To prove the above statement, we follow the steps from the proof of Theorem~\ref{thm:edgeworth_expansion_pseudo} in Supplementary Material \ref{supplementary:psudosamples_proof}. However, instead of the calculation in \eqref{eq:prior_predictive_calculation}, in case of Algorithm~\ref{alg:pb_pseudosamples_lyddon} we have
	\begin{align*}
	\begin{split}
	\E \left\{\sum_{j=n+1}^{n+T} w_j  \nabla_{\theta}\ell(x_j, \hat{\theta}) \right\} = & \E_{\pi(\theta|x_{1:n})} \sum_{j=n+1}^{n+T}\left[\E_{f(x_j|\theta)} \left\{ w_j  \nabla_{\theta}\ell(x_j, \hat{\theta}) \Big| \theta \right\}\right]\\
	= &  c  \int   \nabla_{\theta} \left\{\log f(x|\bar{\theta})\right\}  \left\{\int f(x|\theta) \pi(\theta|x_{1:n}) \mathrm{d} \theta \right\} \mathrm{d} x = o_P(1).
	\end{split}
	\end{align*}
	The last equality follows from the fact that the posterior concentrates around the maximum likelihood estimator as $n \to \infty$.
	
	\section{Proofs of theoretical results presented in Section~\ref{section:hierarchical_models}}\label{supplementary:hierarchical}
	\subsection{Proof of Proposition \ref{prop:wlb_lambda_posterior}}\label{supplement:proof_lambda_finite_K}
	\begin{proof}
		Fix $\epsilon >0$. Recall that we denote by $P^{HB,p}(\cdot|x_{1:K, 1:n_k})$ the distribution of samples drawn from Algorithm~\ref{alg:pb_hierarchical_gibbs}. Let $p^{HB,p}(\cdot|x_{1:K, 1:n_k})$ denote its corresponding density. Let $\tilde{\theta} = (\tilde{\theta}_1, \ldots, \tilde{\theta}_K)$ and $\theta^* =(\theta^*_1, \ldots, \theta^*_K)$, and let $p(\lambda|\tilde\theta)$ be the Bayesian posterior of $\lambda$ given $\tilde{\theta}$ in model \ref{model:hierarchical}.  We have
		\begin{align}
		\nonumber p^{HB,p}(\tilde{\lambda}\big|x_{1:K, 1:n_k}) = & \int p(\tilde{\lambda}|\tilde{\theta}) p^{HB,p}(\tilde{\theta}\big|x_{1:K, 1:n_k}) \mathrm{d} \tilde{\theta} \\ 
		\label{eq:integral_A_epsilon}	= &  \int_{A_{\epsilon}} p(\tilde{\lambda}\big|\tilde{\theta}) p^{HB,p}(\tilde{\theta}\big|x_{1:K, 1:n_k}) \mathrm{d} \tilde{\theta} + \int_{A_{\epsilon}^c} p(\tilde{\lambda}\big|\tilde{\theta}) p^{HB,p}(\tilde{\theta}\big|x_{1:K, 1:n_k}) \mathrm{d} \tilde{\theta} 
		\end{align}
		where $A_{\epsilon} = \{\tilde{\theta}:\|\tilde\theta - \theta^*\|\leq \epsilon\}$.
		Since $\tilde{\theta}$ under $P^{HB,p}$ converges weakly to $\delta_{\theta^*}$, then by the continuity assumption for any $\lambda$ the first integral in \eqref{eq:integral_A_epsilon} converges to  $p(\tilde{\lambda}\big|\theta^*)$. Since $\sup_{\lambda\in \Lambda, \theta\in \Theta} p(\lambda|\theta) < \infty$, the second integral converges to 0. Finally we get that
		$p^{HB,p}(\tilde{\lambda}\big|\tilde{\theta}) \to p(\tilde{\lambda}\big|\theta^*)$  as $n  \to \infty$.
		By similar arguments the Bayesian posterior $p\left(\lambda\big|x_{1:K, 1:n_k}\right)$ has the same limiting distribution.
	\end{proof}

	\subsection{Additional information  for Section \ref{section:inference_lambda} and proof of Theorem \ref{thm:hierarchical_large_K_result}}\label{supplement:proof_lambda_K_to_infty}
	
	Recall that in Section~\ref{section:inference_lambda} we considered the following hierarchical model
	\begin{gather*}
	\begin{aligned}
	x_{k,1},\ldots x_{k,n_k} & \sim f(x_{k,i}| \theta_k),  &  k &= 1, \ldots, K_n,\\
	\theta_k  & \sim g(\theta_k|\lambda),  &k &= 1, \ldots, K_n,\\
	\lambda &\sim p(\lambda).
	\end{aligned}
	\end{gather*}
	We proposed using for this model a version of Posterior Bootstrap, in which $\tilde{\lambda}$ is, as opposed to Algorithm \ref{alg:pb_hierarchical_gibbs}, obtained via bootstrapping. We present this method in Algorithm \ref{alg:pb_hierarchical_large_K}.	
	
	\begin{algorithm}[!ht]
		\caption{Posterior Bootstrap for hierarchical models via prior penalization 
			under the large $K$ scenario, returns samples from $P^{HB,pp}(\cdot|x_{1:K, 1:n_k})$}\label{alg:pb_hierarchical_large_K}
		\begin{algorithmic}[1]	
			\For {$k = 1, \ldots, K_n$} 
			\State{ Set $\hat{\theta}_k = \arg \max_{\theta_k}\sum_{i=1}^{n_k} \log f(x_{k,i}| \theta_k)$.}
			\EndFor
			\For {j = 1, \ldots, N} \Comment{\textbf(in parallel)  }
			\State{Draw $z_1^{(j)}, \ldots, z_{K_n}^{(j)} \sim\text{Exp}(1)$.}
			\State{Set $\bar{\lambda}^{(j)}  =  \arg \max_{\lambda} \left\{ \sum_{k=1}^{K_n} z_k^{(j)} \log g(\hat{\theta}_k| \lambda) + w_{g}^T \log p\left(\lambda\right)\right\}$.}
			\For {$k = 1, \ldots, K_n$} 
			\State{Draw $w_1^{(j)}, \ldots, w_{n_k}^{(j)} \sim\text{Exp}(1)$.}
			\State{ Set $\tilde{\theta}_k^{(j)} = \arg \max_{\theta_k} \left\{ \sum_{i=1}^{n_k} w_i^{(j)} \log f(x_{k,i}| \theta_k) + w_{0,[k]}^T \log g\left(\theta_k|\bar{\lambda}^{(j)}\right) \right\}$.}
			\EndFor
			\State{Draw $v_1^{(j)}, \ldots, v_{K_n}^{(j)} \sim\text{Exp}(1)$.}
			\State{Set $\tilde{\lambda}^{(j)} =  \arg \max_{\lambda} \left\{ \sum_{k=1}^{K_n} v_k^{(j)} \log g(\tilde{\theta}_k^{(j)}| \lambda) + w_{g}^T \log p\left(\lambda\right) \right\}$.}
			\EndFor
			\State \Return {Samples $\Big\{ \left(\tilde{\lambda}^{(j)}, \tilde{\theta}_1^{(j)}, \ldots, \tilde{\theta}_K^{(j)}\right) \Big\}_{j=1}^N$. }
		\end{algorithmic}
	\end{algorithm}

	Before proceeding with the proof, we present the regularity conditions for Theorem \ref{thm:hierarchical_large_K_result}.
	\setcounter{assumption}{7}
	\begin{assumption}\label{assump_hierarchical:data} \normalfont {(Data-generating mechanism)}
		The mechanism of generating data $x_{1:K_n, 1:n_k}$  follows~\eqref{eq:hierarchical_data_generating}. Furthermore, $\theta_1, \ldots, \theta_{K_n}$ have a density $q^*$ with respect to the Lebesgue measure, defined on a compact subset of $\Theta$.
	\end{assumption}
	
	\begin{assumption}\label{assump_hierarchical:log_lik} \normalfont {(Log likelihood functions)}. The log likelihood function $\ell(x, \theta) = \log f(x|\theta) $ is  measurable and bounded from above, with $E_{p_k^*} \big|\ell(X, \theta) \big| < \infty$ for all $\theta \in \Theta$ and each $k = 1, \ldots, K_n$ almost surely with respect to $Q^*$. Moreover, the likelihood function $\ell_g(\theta, \lambda) = \log g(\theta|\lambda) $ is  measurable and bounded from above,  with $E_{Q^*} \big|\ell_g(\Theta^*, \lambda) \big| < \infty$ for all $\lambda \in \Lambda$.
	\end{assumption}
	
	\begin{assumption}\label{assump_hierarchical:identifiability_lambda}\normalfont(Identifiability of $\lambda$) There  exists a unique maximizing parameter value $\lambda^*$
		$$	\lambda^*=	\arg\max_{\lambda} E_{Q^*}\left\{\ell_g(\Theta^*|\lambda) \right\},
		$$
		and further for all $\delta > 0$ there exists an $\epsilon >0$ such that 
		\begin{equation*}
		\lim \inf_{n \to \infty} Q^* \left[ \sup_{\|\lambda - \lambda^*\| \geq \delta} \frac{1}{K_n} \sum_{k=1}^{K_n} \left\{ \ell_g (\theta_k^*, \lambda) - \ell_g (\theta_k^*, \lambda)\right\} \leq - \epsilon \right] =1.
		\end{equation*}
		
	\end{assumption}
	
	\begin{assumption}\label{assump_hierarchical:identifiability_theta}\normalfont (Identifiability of $\theta$). For each $k=1, \ldots,K_n$ there  exists a unique maximizing parameter value $\theta_k^* = \arg \max_{\theta \in \Theta} \E_{p_k^*} \left\{\ell(\theta, x)\right\},
		$ and further for all $\delta > 0$ there exists an $\epsilon >0$ such that 
		\begin{equation*}
		\lim_{n \to \infty} P_k^* \left[\sup_{\|\theta_k - \theta_k^*\| \geq \delta} \frac{1}{n_k} \sum_{i=1}^{n_k} \left\{ \ell (x_{k,i},\theta_k) - \ell (x_{k,i}, \theta_k^*)\right\} \leq - \epsilon \right] =1.
		\end{equation*}
	\end{assumption}
	
	\begin{assumption}\label{assump_hierarchical:smoothness_lambda}\normalfont(Smoothness of $l_g$). There is an open ball $B$ containing $\lambda^*_{\theta^*}$ such that  $\ell_g(\theta, \lambda)$ is three times continuously differentiable with respect to $\lambda \in B$ for almost all $\theta \in \Theta$. Furthermore, there exist measurable functions $G_1, G_{2}$ and  $G_{3}$ such that for $\lambda \in B$
		\begin{align*}
		\Big|\frac{\partial \ell_g(\theta, \lambda)}{\partial \lambda_j}  \Big|\leq G_{1}(\theta) \qquad &\E_{q*} \left\{G_1(\Theta^*)^4\right\} < \infty,\\
		\Big|\frac{\partial^2 \ell_g(\theta,\lambda)}{\partial \lambda_j \partial \lambda_l}  \Big|\leq G_{2}(\theta)\qquad &  \E_{q*} \left\{G_{2}(\Theta^*)^4 \right\}< \infty,\\
		\Big|\frac{\partial^3 \ell_g(\theta, \lambda)}{\partial \lambda_j \partial \lambda_l \partial \lambda_m}  \Big|\leq G_{3}(\theta) \qquad &  \E_{q*} \left\{G_{3}(\Theta^*)\right\} < \infty.
		\end{align*}
	\end{assumption}
	
	\begin{assumption}\label{assump_hierarchical:smoothness_lambda_wrt_theta}\normalfont(Smoothness of $\ell_g$ with respect to $\theta$). For any $\lambda \in \Lambda$, the function $\ell_g(\theta, \lambda)$ is  measurable as a function of $\theta$, and three times continuously differentiable with respect to $\theta$. Besides, $\ell_g(\theta, \lambda)$ is jointly bounded as a function of $\lambda\in \Lambda, \theta \in \Theta$. Furthermore, there exist $ L_1, L_2, L_3 >0$ such that for any $\lambda \in \Lambda$ and any $\theta_1, \theta_2 \in \Theta$ 
		\begin{align*}
		\begin{split}
		|\ell_g( \theta_1, \lambda) - \ell_g(\theta_2, \lambda)  | & \leq L_1 \|\theta_1 - \theta_2\|,\\
		\|\nabla_{\lambda}\ell_g(\theta_1, \lambda) - \nabla_{\lambda}\ell_g(\theta_2, \lambda)  \| & \leq L_2 \|\theta_1 - \theta_2\|,\\
		\|D^2_{\lambda}\ell_g(\theta_1, \lambda) - D^2_{\lambda}\ell_g(\theta_2, \lambda) \| & \leq L_3 \|\theta_1 - \theta_2\|.
		\end{split}
		\end{align*}
	\end{assumption}
	
	\begin{assumption}\label{assump_hierarchical:smoothness_theta}\normalfont(Smoothness of $\ell$). The function  $\ell(x, \theta)$ is three times continuously differentiable with respect to $\theta \in \Theta$ almost surely on $x$.  
		We assume that there exist measurable functions $H_1, H_{2}$ and  $H_{3}$ such that for $\theta \in \Theta$ 
		\begin{align*}
		\Big|\frac{\partial \ell(x, \theta)}{\partial \theta_j}  \Big|\leq H_{1}(x) \quad &\E_{Q^*}\left[ \max_{1\leq k \leq K_n} \E_{p_k*}\left\{ H_1(X)^{4}\right\} \right] < \infty,\\
		\Big|\frac{\partial^2 \ell(x, \theta)}{\partial \theta_j \partial \theta_l}  \Big|\leq H_{2}(x)\quad &  \E_{Q^*}\left[ \max_{1\leq k \leq K_n} \E_{p_k*}\left\{ H_{2}(X)^{4}\right\} \right]< \infty,\\
		\Big|\frac{\partial^3 \ell(x, \theta)}{\partial \theta_j \partial \theta_l \partial \theta_m}  \Big|\leq H_{3}(x) \quad &  \E_{Q^*}\left[ \max_{1\leq k \leq K_n} \E_{p_k*}\left\{ H_{3}(X)\right\} \right] < \infty.
		\end{align*}
	\end{assumption}
	
	\begin{assumption}\label{assump_hierarchical:positive_definite_lambda}
		\normalfont (Positive definiteness of information matrices for $\lambda$). The matrices $I_g(\lambda)$ and $J_g(\lambda)$ defined in \eqref{eq:hierarchical_info_matrices} are positive definite for $\lambda \in B$ with all elements finite.
	\end{assumption}
	
	\begin{assumption}\label{assump_hierarchical:positive_definite_theta}
		\normalfont (Positive definiteness of information matrices for $\theta$ and linear independence of partial derivatives). The matrices $I(\theta)$ and $J(\theta)$ defined in \eqref{eq:I_J_matrices} are positive definite for $\theta \in \Theta$ with all elements finite. Furthermore, $\det J(\theta)$ and $\det I(\theta)$  are bounded away from $0$ on any compact subset of $\Theta$. Besides, the first and second partial derivatives
		$$
		\left\{\frac{\partial \ell(x, \theta)}{\partial \theta_p}, \frac{\partial^2 \ell(x, \theta)}{\partial \theta_r \partial \theta_s} \right\}, \quad 1 \leq p \leq d, 1 \leq r \leq s \leq d
		$$
		are linearly independent as functions of $x$ at any $\theta\in \Theta$.
	\end{assumption}
	
	\begin{assumption}\label{assump_hierarchical:prior_smoothness}\normalfont {(Smoothness of log prior density).} The function $\log p(\lambda)$ is  measurable, upper-bounded on $\lambda$, and three times continuously differentiable at $\lambda^*$.
	\end{assumption}

The strategy of the proof is as follows. We first prove consistency of $\tilde{\lambda}$, and then its asymptotic normality. The key element in establishing consistency of $\tilde{\lambda}$ is obtaining the upper bound on the rate of convergence of   $\tilde{\theta}_k$ to $\theta_k^*$. We obtain this upper bound in Lemma \ref{lemma:auxiliary_hierarchical}, which relies largely on the multivariate  Berry--Esseen theorem with explicit constants, proved by \cite{raivc2019multivariate}.
\begin{lemma}\label{lemma:auxiliary_hierarchical}
	Let $\epsilon_n$ be a deterministic  sequence depending on $n$ such that 
	$\epsilon_n = o(1)$ and 
	\begin{align*}
	\epsilon_n n^{1/2}/(\log n)^{1/2} \to \infty
	\end{align*}  as $n \to \infty.$
	Then under Assumptions \ref{assump_hierarchical:data}--\ref{assump_hierarchical:prior_smoothness} we have
	\begin{equation*}
	\sup_{1 \leq k \leq K_n} P^{HB, pp} \left(\|\tilde{\theta}_k - \theta_k^*\| > \epsilon_n \big|{x}_{1:K_n, 1:n_k}\right)  = O \left(n^{-1/2}\right).
	\end{equation*}
	where $n = \min_{1 \leq k \leq K_n} n_k$.
\end{lemma}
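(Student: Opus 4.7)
The plan is to split $\tilde{\theta}_k - \theta_k^* = (\tilde{\theta}_k - \hat{\theta}_k) + (\hat{\theta}_k - \theta_k^*)$ and to bound each piece at scale $\epsilon_n/2$. The first piece depends only on the observations, while the second is the bootstrap fluctuation conditional on the data. I would work on a data-event of probability tending to $1$ on which $\sup_{k}\|\hat{\theta}_k - \theta_k^*\| \leq \epsilon_n/2$, and on that event control the bootstrap-side via a Berry--Esseen comparison to a Gaussian tail, so that
\begin{equation*}
\{\|\tilde{\theta}_k - \theta_k^*\| > \epsilon_n\} \subseteq \{\|\tilde{\theta}_k - \hat{\theta}_k\| > \epsilon_n/2\} \cup \{\|\hat{\theta}_k - \theta_k^*\| > \epsilon_n/2\}.
\end{equation*}

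For the data-side piece, the standard maximum-likelihood Taylor expansion under Assumptions \ref{assump_hierarchical:smoothness_theta} and \ref{assump_hierarchical:positive_definite_theta} gives $\hat{\theta}_k - \theta_k^* = J(\theta_k^*)^{-1}\left\{n_k^{-1}\sum_{i=1}^{n_k} \nabla_{\theta}\ell(x_{k,i},\theta_k^*)\right\} + O_P(n_k^{-1})$, and the moment hypotheses in Assumption \ref{assump_hierarchical:smoothness_theta} yield a Bernstein-type concentration inequality $P_k^*(\|\hat{\theta}_k - \theta_k^*\| > \epsilon_n/2) \leq \exp(-c_1 n_k \epsilon_n^2)$ with $c_1 > 0$ independent of $k$. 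Since $\epsilon_n n^{1/2}/(\log n)^{1/2} \to \infty$ and $K_n = o(n^{1/2}/\log n)$, a union bound over $k = 1,\ldots, K_n$ gives $P^*(\sup_k \|\hat{\theta}_k - \theta_k^*\| > \epsilon_n/2) \leq K_n \exp(-c_1 n \epsilon_n^2) = o(1)$, so the data-event holds with probability tending to $1$.

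For the bootstrap-side piece I would reuse the linearisation carried out inside the proof of Theorem~\ref{thm:edgeworth_expansion_penalisation}: conditionally on $\bar{\lambda}^{(j)}$ and on the data, equation \eqref{eq:Taylor_first_order-approximation} gives
\begin{equation*}
n_k^{1/2}(\tilde{\theta}_k - \hat{\theta}_k) = J_n^{-1} \cdot \frac{1}{n_k^{1/2}} \sum_{i=1}^{n_k} (w_i - 1) \nabla_{\theta}\ell(x_{k,i},\hat{\theta}_k) + R_{k,n},
\end{equation*}
where the joint boundedness and Lipschitz hypothesis on $\log g(\theta|\lambda)$ in Assumption \ref{assump_hierarchical:smoothness_lambda_wrt_theta} makes the $\bar{\lambda}$-dependent prior contribution in the score an $O(n_k^{-1/2})$ perturbation uniformly in $\bar{\lambda}$, absorbed into $R_{k,n}$. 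The leading term is a sum of independent mean-zero random vectors in the weights $w_i$ whose fourth-moment is controlled via Assumption \ref{assump_hierarchical:smoothness_theta}, so the multivariate Berry--Esseen theorem of \cite{raivc2019multivariate} with explicit constants shows that its distribution is within $O(n_k^{-1/2})$ in Kolmogorov distance over convex sets of a Gaussian with covariance $J_n^{-1} I_n J_n^{-1}$. Combining with the Gaussian tail bound $\mathrm{pr}(\|Z\| > c_2 n_k^{1/2}\epsilon_n) \leq \exp(-c_3 n_k \epsilon_n^2) = o(n^{-1/2})$ yields $P^{HB,pp}(\|\tilde{\theta}_k - \hat{\theta}_k\| > \epsilon_n/2 \mid x) = O(n_k^{-1/2}) = O(n^{-1/2})$ uniformly in $k$.

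The main obstacle is securing Berry--Esseen and concentration constants that are uniform in $k$ while simultaneously absorbing the randomness of $\bar{\lambda}^{(j)}$ sitting inside the penalisation term defining $\tilde{\theta}_k$. Uniformity in $k$ is delivered by Assumption \ref{assump_hierarchical:smoothness_theta}, whose moment bounds are phrased through $\E_{Q^*}[\max_k \E_{p_k^*}(\cdot)]$ and hence hold simultaneously for all $k$ on a data-set of probability tending to $1$; uniformity in $\bar{\lambda}$ follows from the global boundedness and Lipschitz conditions in Assumption \ref{assump_hierarchical:smoothness_lambda_wrt_theta}, which ensure that the prior score is dominated by the likelihood at the $n_k^{1/2}$ scale regardless of the realised $\bar{\lambda}$. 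Assembling the data-side and bootstrap-side bounds through the inclusion above and taking the supremum over $k$ then gives the claimed $O(n^{-1/2})$ rate.
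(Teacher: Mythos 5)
Your overall skeleton matches the paper's: split $\tilde{\theta}_k-\theta_k^*$ at scale $\epsilon_n/2$ into the bootstrap fluctuation $\tilde{\theta}_k-\hat{\theta}_k$ and the data-side deviation $\hat{\theta}_k-\theta_k^*$, control the bootstrap side by the multivariate Berry--Esseen theorem of \cite{raivc2019multivariate} applied to the weighted score plus a Gaussian tail bound of order $\exp(-c\,n\epsilon_n^2)=o(n^{-1/2})$, and get uniformity in $k$ from the $\E_{Q^*}[\max_k \E_{p_k^*}(\cdot)]$ moment conditions. However, there is a genuine gap on the data side: you claim a Bernstein-type inequality $P_k^*(\|\hat{\theta}_k-\theta_k^*\|>\epsilon_n/2)\leq \exp(-c_1 n_k\epsilon_n^2)$ "from the moment hypotheses in Assumption \ref{assump_hierarchical:smoothness_theta}", but that assumption only provides finite fourth moments of the envelope functions $H_1,H_2$ (and a first moment for $H_3$); exponential concentration of the score sum, and hence of the MLE, does not follow from fourth moments alone --- one would need sub-exponential or bounded-increment conditions that are not assumed. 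The paper closes this exactly by treating $\hat{\theta}_k-\theta_k^*$ with the same Berry--Esseen-plus-Gaussian-tail argument as the bootstrap piece, which yields an $O(n^{-1/2})$ bound per $k$ with constants independent of $k$; note also that the lemma is a supremum over $k$ of individual (conditional) probabilities, so no union bound over the $K_n$ groups is needed, and the $O(n^{-1/2})$ rate of the data-side term is precisely what the stated conclusion requires --- your exponential bound, even if available, would be stronger than necessary, while your ``good event of probability tending to one'' reading weakens the conclusion relative to how the lemma is used.

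Two further points where your sketch is looser than the paper's argument. First, on the bootstrap side you invoke the first-order approximation \eqref{eq:Taylor_first_order-approximation} and declare the $\bar{\lambda}$-dependent prior score and the remainder $R_{k,n}$ ``absorbed''; the paper instead uses an exact mean-value expansion $0=\nabla_\theta \tilde L_n(\hat\theta)+D^2_\theta \tilde L_n(\check\theta)(\tilde\theta-\hat\theta)$, so that the identity fed into Berry--Esseen has no uncontrolled remainder, and the prior-score term is not discarded but appears as the (deterministic, $O(n^{-1/2})$) mean of the comparison Gaussian $Z_{w,n}$, whose norm is then handled inside the tail bound. Second, this route requires showing that $\tilde J_{w,n}(\check\theta)^{-1}$ and $I_n$ stay bounded with probability at least $1-O(1/n)$ (the paper does this by Chebyshev using Assumptions \ref{assump_hierarchical:smoothness_theta} and \ref{assump_hierarchical:positive_definite_theta}); without some such step the Gaussian tail constants $c_5$ in your bound are not justified uniformly in $k$ and in the realisation of the weights and of $\bar\lambda$.
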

\begin{proof}
	We fix $k \leq K_n$. In the remaining part of the proof for simplicity of notation we write $x = {x}_{1:K_n, 1:n_k}$ and we drop the subscript $k$. We have
	\begin{equation*}
	P^{HB, pp}\left(\|\tilde{\theta} -  \theta^*\| >\epsilon_n \big|x\right)  \leq P^{HB, pp}\left(\|\tilde{\theta} -  \hat{\theta}\| > \epsilon_n/2 \big|x\right) + \text{pr} \left( \|\hat{\theta} -  \theta^*\| > \epsilon_n/2\right).
	\end{equation*}
	
	We first deal with the term	$P^{HB, pp} \left(\|\tilde{\theta} - \hat{\theta}\| > \epsilon_n \big|x \right)$.  To this end we expand $n^{1/2}\left(\tilde{\theta} -  \hat{\theta} \right)$ as follows. We define
	\begin{equation*}
	\tilde{L}_n(\theta)  = \sum_{j=1}^n w_j \ell(x_i, \theta) + w_0^T \log g(\theta|\bar{\lambda}),
	\end{equation*}
	so that
	\begin{equation*}
	0 = \nabla_{\theta} \tilde{L}_n(\tilde{\theta})  = \nabla_{\theta} \tilde{L}_n(\hat{\theta}) + D^2_{\theta}\tilde{L}_n(\check{\theta})(\tilde{\theta} - \hat{\theta})  
	\end{equation*}
	for some $\check{\theta}$  between $\hat{\theta}$ and $\tilde{\theta}$. Consequently
	\begin{equation}\label{eq:grad_L_n_hierarchical}
	I_n^{-1/2}\tilde{J}_{w,n}(\check\theta) n^{1/2}(\tilde{\theta} - \hat{\theta}) =  	I_n^{-1/2}\frac{\sum_{j=1}^n (w_j-1) \nabla_{\theta}\ell(x_i, \hat\theta) + w_0 \cdot \nabla_{\theta}\log g(\hat\theta|\bar{\lambda})}{n^{1/2}},
	\end{equation}
	where $\tilde{J}_{w,n}(\check\theta) = -n^{-1}D^2_{\theta}\tilde{L}_n(\check\theta)$, and for $I_n$ we follow the standard notation used elsewhere in the paper. Let $Z_{w,n}$ denote a random variable following the normal distribution 
	$$
	N \left\{\tilde{J}_{w,n}(\check{\theta})^{-1} \frac{ w_0 \cdot  \nabla_{\theta}\log g(\hat\theta|\bar{\lambda})}{n^{1/2}}, \tilde{J}_{w,n}(\check{\theta})^{-1} I_{n} \tilde{J}_{w,n}(\check{\theta})^{-1} \right\}.
	$$
	Observe that 
	\begin{align*}
	P^{HB,pp}\left(\|\tilde{\theta} -  \hat{\theta}\| > \epsilon_n/2 \big| x \right) 
	= & P^{HB,pp}\left(n^{1/2}\|\tilde{\theta} -  \hat{\theta}\| > \epsilon_n n^{1/2}/2 \big|x\right) \\ 
	\leq & \sup_{A \subseteq \Theta} \Big|  P^{HB,pp} \left\{n^{1/2}\left(\tilde{\theta} -  \hat{\theta}\right)\in A \big|x\right\}  - \text{pr} \left(Z_{w,n} \in A\right)\Big| \\
	& \qquad + 2\text{pr} \left(\|Z_{w,n}\| > \epsilon_n n^{1/2}/2 \right),
	\end{align*}
	where the supremum in the above equation is taken over all measurable convex sets $A$.
	Let $Z$ follow the standard $d$-variate normal distribution and note that by \eqref{eq:grad_L_n_hierarchical} we have 
	\begin{align*}
	& \sup_{A \subseteq \Theta} \Big|  P^{HB,pp} \left\{n^{1/2}\left(\tilde{\theta} -  \hat{\theta} \right)\in A \Big|x\right\}  - \text{pr} \left(Z_{w,n} \in A \right)\Big| \\
	= & \sup_{A \subseteq \Theta} \Bigg|  P^{HB,pp} \left\{I_n^{-1/2}\tilde{J}_{w,n}(\check\theta) n^{1/2}\left(\tilde{\theta} -  \hat{\theta} \right)\in A \Big|x \right\} \\
	& \qquad \qquad - \text{pr} \left\{Z + I_n^{-1/2} \frac{ w_0 \cdot  \nabla_{\theta}\log g(\hat\theta|\bar{\lambda})}{n^{1/2}} \in A   \right\}\Bigg| \\
	= & \sup_{A \subseteq \Theta} \Bigg|  P^{HB,pp} \left\{I_n^{-1/2}\frac{\sum_{j=1}^n (w_j-1) \nabla_{\theta}\ell(x_i, \hat\theta) + w_0 \cdot  \nabla_{\theta}\log g(\hat\theta|\bar{\lambda})}{n^{1/2}}\in A \Big|x \right\}  \\
	& \qquad \qquad - \text{pr} \left\{Z + I_n^{-1/2} \frac{w_0 \cdot  \nabla_{\theta}\log g(\hat\theta|\bar{\lambda})}{n^{1/2}} \in A   \right\} \Bigg|\\
	= & \sup_{A \subseteq \Theta} \left|  P^{HB,pp} \left\{I_n^{-1/2}\frac{\sum_{j=1}^n (w_j-1) \nabla_{\theta}\ell(x_i, \hat\theta)}{n^{1/2}}\in A  \Big|x\right\}  - \text{pr} \left(Z \in A \right)\right|
	\end{align*}
	To obtain the upper bound on the last expression above, we now apply Theorem 1.1 of \cite{raivc2019multivariate} to the expression
	\begin{equation*}
	I_n^{-1/2}\frac{\sum_{j=1}^n (w_j-1) \nabla_{\theta}\ell(x_i, \hat\theta)}{n^{1/2}}.
	\end{equation*}
	We get that for a constant $c_0 = 42d^{1/4} + 16$ and for all measurable convex sets $A \subseteq \Theta$ 
	\begin{align*}
	& \sup_{A \subseteq \Theta} \Big|  P^{HB,pp} \left\{n^{1/2}\left(\tilde{\theta} -  \hat{\theta} \right)\in A\right\}  - \text{pr} \left(Z_{w,n} \in A \right)\Big|  \\
	\leq& \frac{c_0}{n^{1/2}} \frac{1}{n}\sum_{j=1}^n \E_{w_{1:n}} \left\{\|(w_j-1) I_n^{-1/2}\nabla_{\theta}\ell(x_i, \hat\theta)\|^3\right\}\\
	= & \frac{2c_0}{n^{1/2}} \frac{1}{n}\sum_{j=1}^n \| I_n^{-1/2}\nabla_{\theta}\ell(x_i, \hat\theta)\|^3.
	\end{align*}
	Note that by Assumption \ref{assump_hierarchical:smoothness_theta} we have
	$$n^{-1}\sum_{j=1}^n \| I_n^{-1/2}\nabla_{\theta}\ell(x_i, \hat\theta)\|^3\leq \frac{ \|I_n^{-1/2}\|}{n} \sum_{i=1}^n \|H_3(x_i)\|,
	$$
	which is bounded by a constant that does not depend on $\theta^*$.
	
	To bound $\text{pr} \left(Z_{w,n} > \epsilon_n n^{1/2}/2 \right)$, note that on a set such that 
	\begin{equation}\label{eq:J_w_n_event_of_interest}
	\|\tilde{J}_{w,n}^{-1}(\check{\theta}) I_{n} \tilde{J}_{w,n}^{-1}(\check{\theta})\| \leq c_1, \quad 
	\left\|\tilde{J}_{w,n}(\check{\theta})^{-1} \frac{ w_0 \cdot  \nabla_{\theta}\log g(\hat\theta|\bar{\lambda})}{n^{1/2}}\right\| \leq c_2
	\end{equation}
	we have 
	\begin{align*}
	\text{pr} \left(\|Z_{w,n} \|> \epsilon_n n_k^{1/2}/2 \right) \leq  d \left\{1-\Phi_1 \left(c_5 \epsilon_n n^{1/2}\right)\right\},
	\end{align*}
	for some constant $c_5 >0$,  where $\Phi_1$ denotes a cumulative distribution function of a univariate standard normal distribution. We use the fact that 
	\begin{equation}\label{eq:exponential_tail_property}
	1-\Phi_1(x) \leq \frac{1}{x (2\pi)^{1/2}} \exp \left(-x^2/2 \right), \quad x \in \R
	\end{equation}
	so that
	\begin{align}\label{eq:hierarchical_tile_theta_part_2}
	1-\Phi_1\left(c_5 \epsilon_n n^{1/2} \right) & \leq   n^{-1/2}\frac{\exp \left(-c_5^2 \epsilon_n^2 n/2 - \log \epsilon_n \right)}{ c_5  (2\pi)^{1/2}}.
	\end{align}
	By the assumption that $\epsilon_n n^{1/2} \to \infty$ as $n \to \infty$, we have $\epsilon_n > n^{-1/2}$ for large enough $n$. Thus as $n$ goes to infinity
	$$
	c_5^2 \epsilon_n^2 n/2 + \log \epsilon_n > c_5^2 \epsilon_n^2 n/2 - (\log n)/2 =  \log n/2 \left\{c_5^2 \epsilon_n^2 n/\log n - 1  \right\} \to \infty,
	$$
	which implies that the right hand side of \eqref{eq:hierarchical_tile_theta_part_2} is $o\left(n^{-1/2}\right)$. 
	
	It remains to show that probability of event \eqref{eq:J_w_n_event_of_interest} goes to 1 as $n \to \infty$. Let $C$ be a compact subset of~$\Theta$. Let $ J_{n}(\theta)  =  -n^{-1}\sum_{j=1}^n  D_{\theta}^2 \ell(x_j,\theta)$. We start by observing that for some $c_3, c_4 >0$ we have
	\begin{equation}\label{eq:I_n_J_n_bound}
	Q^* \left(\|I_n\| < c_3, \sup_{\check{\theta} \in C}\|J_n(\check\theta)^{-1}\| < c_3 \right) \leq \frac{c_4}{n},
	\end{equation}
	which follows from Assumptions \ref{assump_hierarchical:smoothness_theta} and \ref{assump_hierarchical:positive_definite_theta}, and the Chebyshev's inequality. We apply once again the 
	the Chebyshev's inequality to show that for some $c_5, c_6 >0$
	$$
	E_{Q^*} \left\{P^{HB,pp} \left(\sup_{\check{\theta} \in C}\|\tilde{J}_{w,n}(\check\theta)^{-1}\| < c_5 \Big| x \right) \right\} \leq \frac{c_6}{n}.
	$$
	This completes the proof of the claim for $ P^{HB,pp}\left(n^{1/2}\|\tilde{\theta} -  \hat{\theta}\| > \epsilon_n n^{1/2}/2 \big|x\right)$.
	
	Using similar reasoning we now find an upper bound for $\text{pr} \left( \|\hat{\theta}_k -  \theta_k^*\| > \epsilon_n/2\right)$. To this end, similarly to the previous case, we define 
	$L_n(\theta)  = \sum_{j=1}^n \ell(x_i, \theta)$ and write
	\begin{equation*}
	0 = \nabla_{\theta} L_n(\hat{\theta})  = \nabla_{\theta} L_n(\theta^*) + D^2_{\theta}L_n(\check{\theta})(\hat{\theta} - \theta^*)  
	\end{equation*}
	for some $\check{\theta}$ between $\hat{\theta}$ and $\theta^*$. Therefore 
	$$
	I(\theta^*)^{-1/2}\tilde{J}_{n}(\check\theta) n^{1/2}(\hat{\theta} - \theta^*) =  I(\theta^*)^{-1/2}n^{-1/2}	\sum_{j=1}^n \nabla_{\theta}\ell(x_i, \theta^*),
	$$
	where
	$\tilde{J}_{n}(\check\theta) = n^{-1}D^2_{\theta}L_n(\check{\theta})$. For $Z_n$ following the normal distribution $N \left\{0, \tilde{J}_{n}(\check{\theta})^{-1} I(\theta^*) \tilde{J}_{n}(\check{\theta})^{-1} \right\}$ we have
	\begin{align*}
	\text{pr}\left(\|\theta^* -  \hat{\theta}\| > \epsilon_n/2  \right) 
	= & \text{pr}\left(n^{1/2}\|\hat{\theta} - \theta^*\| > \epsilon_n n^{1/2}/2 \right) \\ 
	\leq & \sup_{A \subseteq \Theta} \Big|  \text{pr} \left\{n^{1/2}\left(\hat{\theta} - \theta^*\right)\in A \right\}  - \text{pr} \left(Z_{n} \in A\right)\Big| \\
	& \qquad + 2\text{pr} \left(\|Z_{n}\| > \epsilon_n n^{1/2}/2 \right).
	\end{align*}
	Let $Z'$ follow the $d$-variate standard normal distribution. Then for measurable convex sets $A \subseteq \Theta$ we have
	\begin{align*}
	& \sup_{A \subseteq \Theta} \Big|  \text{pr} \left\{n^{1/2}\left(\hat{\theta} - \theta^*\right)\in A \right\}  - \text{pr} \left(Z_{n} \in A\right)\Big| \\
	= & \sup_{A \subseteq \Theta} \Big|  \text{pr} \left\{I(\theta^*)^{-1/2} \tilde{J}_{n}(\check{\theta})n^{1/2}\left(\hat{\theta} - \theta^*\right)\in A \right\}  - \text{pr} \left(I(\theta^*)^{-1/2} \tilde{J}_{n}(\check{\theta}) Z_{n} \in A\right)\Big| \\
	= & \sup_{A \subseteq \Theta} \left|  \text{pr} \left\{I(\theta^*)^{-1/2}n^{-1/2}	\sum_{j=1}^n \nabla_{\theta}\ell(x_i, \theta^*)\in A \right\}  - \text{pr} \left(Z' \in A\right)\right|\\
	\leq &  \frac{2c_0}{n^{1/2}} \frac{1}{n}\sum_{j=1}^n \E_{p_k^*} \left\{\| I(\theta^*)^{-1/2}\nabla_{\theta}\ell(x_i, \theta^*)\|^3 \right\} \\
	\leq& \frac{2c_0}{n^{1/2}}  \E_{p_k^*}\left\{\| I(\theta^*)^{-1/2} \nabla_{\theta} \ell(X, \theta^*)\|^3 \right\}, 
	\end{align*}
	where the second last equality is again obtained by applying Theorem 1.1 of \cite{raivc2019multivariate}.  We observe that $\E_{p_k^*}\left\{\| I(\theta^*)^{-1/2} \nabla_{\theta} \ell(X, \theta^*)\|^3 \right\}$ is bounded by a constant independent from $\theta^*$ by Assumption~\ref{assump_hierarchical:smoothness_theta}.
	To bound  $\text{pr} \left(\|Z_{n}\| > \epsilon_n n^{1/2}/2 \right)$ by $o\left(n^{-1/2}\right)$, we again use property \eqref{eq:exponential_tail_property} of the tail of the normal distribution, and equation \eqref{eq:I_n_J_n_bound}.
	
	Finally, since the constants appearing in the proof do not depend on $k$, we get that 
	\begin{equation*}
	\sup_{1 \leq k \leq K_n} P^{HB, pp} \left(\|\tilde{\theta}_k - \theta_k^*\| > \epsilon_n \big|{x}_{1:K_n, 1:n_k}\right)  = O \left(n^{-1/2}\right),
	\end{equation*}
	as required.
\end{proof}

\subsubsection*{Proof of Theorem \ref{thm:hierarchical_large_K_result}}
		\begin{proof}
		Let $\epsilon_n$ be a deterministic sequence depending on $n$ such that $\epsilon_n n^{1/2}/(\log n)^{1/2}\to \infty$ as $n \to \infty$ and  $\epsilon_n K_n^{1/2} = o(1)$, which is possible by assumption  that $K_n = O\left(n^{1/2}/\log n\right)$.  Consider  a set $\Omega_n$ such that on $\Omega_n$ we have $\sup_{1 \leq k \leq K_n}\|\tilde{\theta}_k -  \theta_k^*\| \leq \epsilon_n$. By Lemma \ref{lemma:auxiliary_hierarchical} we get that 
		\begin{equation*}
		P^{HB,pp} \left( \Omega_n\right) \leq K_n \sup_{1 \leq k \leq K_n} P^{HB,pp} \left( \|\tilde{\theta}_k -  \theta_k^*\| \leq \epsilon_n\right) \leq  K_n O \left(n_k^{-1/2}\right)= o(1).
		\end{equation*}
		
		We fix $\delta>0$  and choose $n$ so large that $L_1\epsilon_n < \epsilon/3$, where $\delta$ and $\epsilon$ are like in Assumption \ref{assump_hierarchical:identifiability_lambda}, and write 
		\begin{align}
		&\frac{1}{K_n} \sum_{k=1}^{K_n} \left\{v_k \ell_g (\theta_k^*,\tilde{\lambda}) - v_k\ell_g (\theta_k^*, \lambda^*) \right\} \nonumber \nonumber \\
		= & \frac{1}{K_n} \sum_{k=1}^{K_n} v_k\left\{ \ell_g ( \theta_k^*, \tilde{\lambda}) - \ell_g ( \tilde{\theta}_k,\tilde{\lambda}) \right\} + \frac{1}{K_n} \sum_{k=1}^{K_n}v_k \left\{ \ell_g ( \tilde{\theta}_k,\tilde{\lambda}) - \ell_g ( \tilde{\theta}_k, \lambda^*)\right\} \label{eq:hierarchical_proof_three_expressions_1} \\
		& \qquad +  \frac{1}{K_n} \sum_{k=1}^{K_n} v_k\left\{ \ell_g ( \tilde{\theta}_k, \lambda^*) - \ell_g ( \theta_k^*, \lambda^*) \right\} \label{eq:hierarchical_proof_three_expressions_2}.
		\end{align}
		By Assumption \ref{assump_hierarchical:smoothness_lambda_wrt_theta} on $\Omega_n$ we have
		\begin{equation*}
		\frac{1}{K_n} \sum_{k=1}^{K_n} \left\{ \ell_g ( \theta_k^*, \tilde{\lambda}) - \ell_g ( \tilde{\theta}_k,\tilde{\lambda}) \right\}> -\epsilon/3, \quad \frac{1}{K_n} \sum_{k=1}^{K_n} \left\{ \ell_g ( \tilde{\theta}_k, \lambda^*) - \ell_g ( \theta_k^*, \lambda^*) \right\} > -\epsilon/3.
		\end{equation*}
		Thus by Theorem 13 of Appendix A   of \cite{newton1991weighted} the first and the last expression in equations \eqref{eq:hierarchical_proof_three_expressions_1} -- \eqref{eq:hierarchical_proof_three_expressions_2} are greater than $-\epsilon/3$ with probability going to 1. As for the second expression, by definition of $\tilde{\lambda}$
		$$
		\frac{1}{K_n} \sum_{k=1}^{K_n}v_k \left\{ \ell_g ( \tilde{\theta}_k,\tilde{\lambda}) - \ell_g ( \tilde{\theta}_k, \lambda^*)\right\} + \frac{1}{K_n} \left\{w_g^T \log p(\tilde{\lambda}) -  w_g^T \log p(\lambda^*)\right\} \geq 0.
		$$
		Thus, by Assumption \ref{assump_hierarchical:prior_smoothness} for large enough $K_n$ we have
		$$
		\frac{1}{K_n} \sum_{k=1}^{K_n}v_k \left\{ \ell_g ( \tilde{\theta}_k,\tilde{\lambda}) - \ell_g ( \tilde{\theta}_k, \lambda^*)\right\} \geq -\epsilon/3.
		$$
		To summarize, the above reasoning and Assumption \ref{assump_hierarchical:identifiability_lambda} yield $\|\tilde{\lambda} - \lambda^*\| < \delta$ with probability going to 1, which completes the proof of consistency. Additionally, by Assumption \ref{assump_hierarchical:identifiability_lambda} we also have $\| \lambda^*- \hat{\lambda}_{\theta^*}\|= o_P(1)$, so in particular $\|\tilde{\lambda} - \hat{\lambda}_{\theta^*}\| = o_P(1)$. This property is useful in the proof of asymptotic normality, which we present below.
		
		We define $L_{K_n}(\lambda) = \sum_{k=1}^{K_n} v_k \ell_g(\tilde\theta_k, \lambda) + w_g^T \log p (\lambda)$. We can write $L_{K_n}(\lambda)$ as 
		\begin{align*}
		L_{K_n}(\lambda) = \sum_{k=1}^{K_n} v_k \ell_g(\theta_k^*, \lambda)+ w_g^T \log p (\lambda) + h_{K_n}(\lambda),
		\end{align*} 
		where 
		\begin{equation*}
		h_{K_n}(\lambda) = \sum_{k=1}^{K_n} v_k \left\{\ell_g(\tilde\theta_k, \lambda)- \ell_g(\theta_k^*, \lambda) \right\}.
		\end{equation*}
		We get the following expansion
		\begin{align*}
		0 = \nabla_{\lambda} L_{K_n}(\tilde\lambda)  = \nabla_{\lambda} L_{K_n}(\hat\lambda_{\theta^*}) + D^2_{\lambda} L_{K_n}(\check\lambda)\left( \tilde\lambda - \hat\lambda_{\theta^*}\right),
		\end{align*}
		for some $\check\lambda$ between $\tilde{\lambda}$ and $\hat\lambda_{\theta^*}$,  which gives
		\begin{equation}\label{eq:hierarchical_result_of_interest}
		K_n^{1/2}\left( \tilde\lambda - \hat\lambda_{\theta^*}\right)  = \left\{-\frac{1}{K_n}D^2_{\lambda} L_{K_n}(\check\lambda) \right\}^{-1} \frac{1}{K_n^{1/2}}\nabla_{\lambda} L_{K_n}(\hat\lambda_{\theta^*}) .
		\end{equation}
		Since $\sum_{k=1}^{K_n}  \nabla_{\lambda} \ell_g(\theta_k^*, \hat\lambda_{\theta^*}) = 0$, we have
		\begin{align*}
		\frac{\nabla_{\lambda} L_{K_n}(\hat\lambda_{\theta^*})}{K_n^{1/2}} = \frac{1}{K_n^{1/2}}\sum_{k=1}^{K_n} (v_k -1) \nabla_{\lambda} \ell_g(\theta_k^*, \hat\lambda_{\theta^*})+ \frac{1}{K_n^{1/2}}\left\{w_g \cdot  \nabla_{\lambda} \log p (\hat\lambda_{\theta^*}) + \nabla_{\lambda} h_{K_n}(\hat\lambda_{\theta^*}) \right\}.
		\end{align*}
		What is more, 
		\begin{align}\label{eq:hierarchical_second_derivative_equation}
		-\frac{D^2_{\lambda} L_{K_n}(\check\lambda)}{K_n}  =  -  \frac{1}{K_n}\sum_{k=1}^{K_n} v_k D^2_{\lambda} \ell_g(\theta_k^*, \check\lambda) - \frac{1}{K_n} \left[ D^2_{\lambda}\left\{ w_g^T  \log p (\check\lambda)\right\} +  D^2_{\lambda}  h_{K_n}(\check\lambda) \right].
		\end{align}
		By Assumptions \ref{assump_hierarchical:smoothness_lambda_wrt_theta} and \ref{assump_hierarchical:prior_smoothness} we have on $\Omega_n$
		\begin{equation}\label{eq:hierarchical_first_derivative_property}
		\frac{1}{K_n^{1/2}}\left\{w_g^T  \nabla_{\lambda} \log p (\hat\lambda_{\theta^*}) + \nabla_{\lambda} h_{K_n}(\hat\lambda_{\theta^*})\right\} = O_P\left(K_n^{-1/2}\right) + O_P\left(K_n^{1/2} \epsilon_n\right) =o_P(1),
		\end{equation}
		where the last equality holds since we assumed that $K_n \epsilon_n = o(1)$, and the probability is considered with respect to $v_{1:K_n}$. We similarly have in \eqref{eq:hierarchical_second_derivative_equation}
		\begin{equation}\label{eq:hierarchical_second_derivative_property}
		\frac{1}{K_n} \left[ D^2_{\lambda}\left\{ w_g^T  \log p (\check\lambda)\right\} +  D^2_{\lambda}  h_{K_n}(\check\lambda) \right] = o_P(1).
		\end{equation}
		By consistency of $\tilde{\lambda}$ and Theorem 13 of Appendix A  of \cite{newton1991weighted} we get that 
		$$- \frac{1}{K_n}\sum_{k=1}^{K_n} v_k D^2_{\lambda} \ell_g(\theta_k^*, \check\lambda)$$
		converges to $J_g(\lambda^*)$ in probability. Therefore by \eqref{eq:hierarchical_second_derivative_property} 
		$- K_n^{-1}D^2_{\lambda} L_{K_n}(\check\lambda)$ also converges to $J_g(\lambda^*)$ in probability. Finally we use the central limit theorem, equation \eqref{eq:hierarchical_first_derivative_property} and Slutsky's theorem to show that \eqref{eq:hierarchical_result_of_interest} converges in distribution to 
		$
		N\left\{0, J_g(\lambda^*)^{-1} I_g(\lambda^*) J_g(\lambda^*)^{-1} \right\}
		$ as required.
	\end{proof}

	\section{Illustrations -- further details}\label{supplementary:illustrations}
	
	\subsection{Additional information for Section~\ref{section:toy_examples}}\label{supplementary:toy_examples}
	Throughout the paper the Kolmogorov--Smirnov dissimilarity is computed using the \texttt{KS.diss} function from the \texttt{provenance} package \citep{vermeesch2016r} in \texttt{R}, while the Bhattacharyya distance is computed using the \texttt{bhattacharyya.dist} function from the \texttt{fpc} package \citep{hennig2015package}. All our computations for the BayesBag algorithm are based on 50 bootstrapped datasets.
		\begin{figure}[!ht]
		\centering
		\includegraphics[width = 0.8\textwidth]{./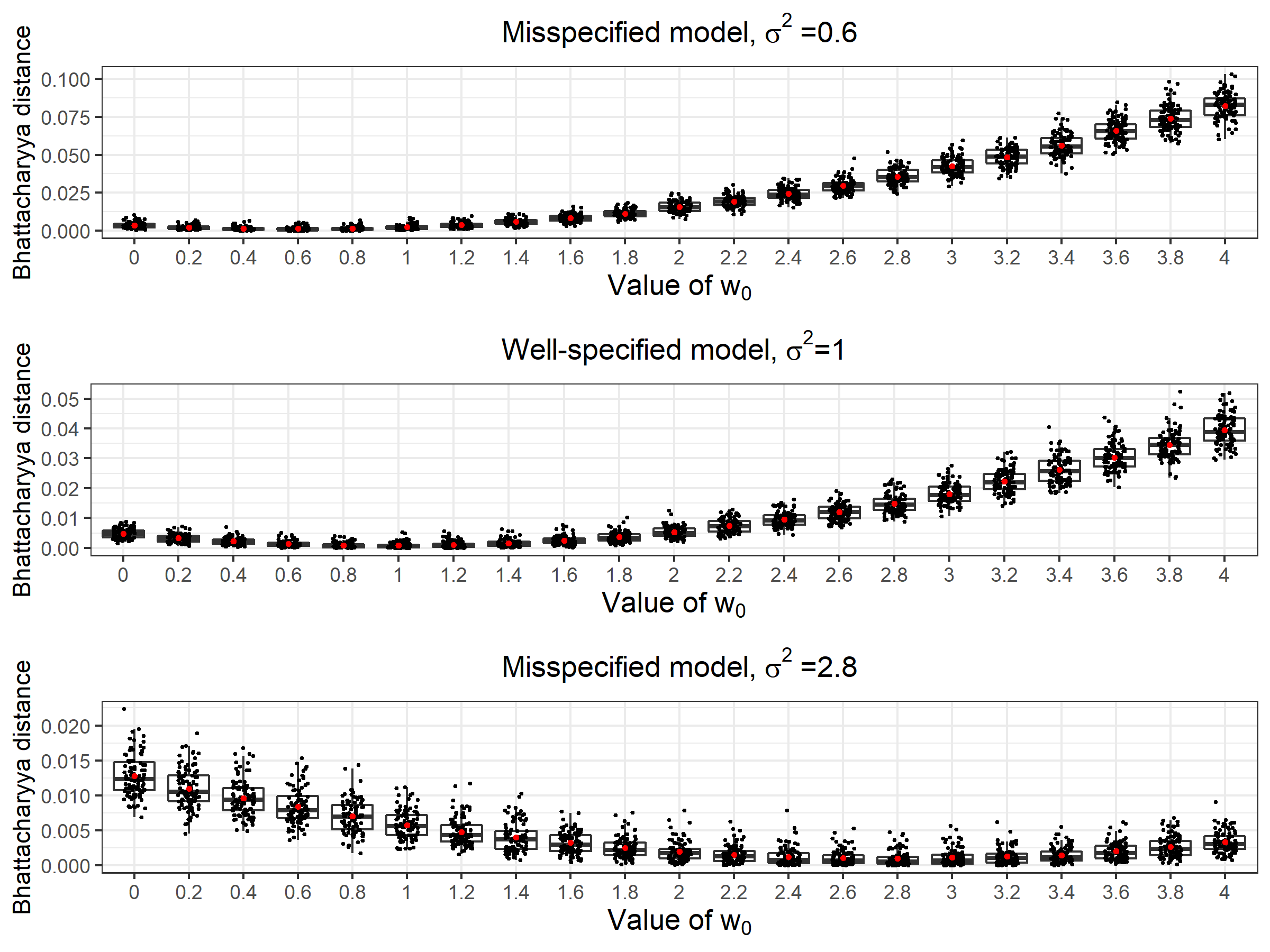}
		\caption{\small Bhattacharryya distance between samples drawn from Posterior Bootstrap for model \eqref{model:toy_normal_univariate}, and samples from the Bayesian posterior for the correct model \eqref{model:toy_normal_univariate_true}. Three scenarios are considered, $\sigma^2 = 0.6$ (top panel), $\sigma^2 =1$ (middle panel), and $\sigma^2 = 2.8$ (bottom panel). Posterior Bootstrap simulations were performed using Algorithm~\ref{alg:pb_general} for a range of values of $w_0$. Every experiment was repeated on 100 simulated datasets with $N = 2000$ samples generated from Algorithm~\ref{alg:pb_general} for each $w_0$. }\label{fig:normal_univariate_Bhat_w0_boxplot}
	\end{figure}

In Figures~\ref{fig:normal_univariate_Bhat_w0_boxplot} -- \ref{fig:normal_univariate_w0_boxplot_comparison} we present additional results for model \eqref{model:toy_normal_univariate}.
 Figure~\ref{fig:normal_univariate_Bhat_w0_boxplot} confirms, using another measure of distance, our conclusion drawn from Figure~\ref{fig:univariate_normal_KS_w_0_boxplot} that it is optimal to set $w_0 = \sigma^2$, which coincides with formula \eqref{eq:setting_w_0}. In Figures \ref{fig:normal_univariate_density_plots} and \ref{fig:normal_univariate_w0_boxplot_comparison} we present also comparisons with BayesBag. For each experiment the size of the bootstrapped datasets was set according to the guidelines for the Gaussian location model provided in Section 2.2.1 of \cite{huggins2019using}. According to results presented in Figure \ref{fig:normal_univariate_density_plots}, both Posterior Bootstrap and BayesBag capture well the uncertainty around the parameter, and under misspecification perform better than standard Bayesian inference. Figure \ref{fig:normal_univariate_w0_boxplot_comparison} shows that our method outperforms BayesBag, which in turn performs better than Weighted Likelihood Bootstrap.  We do not make comparisons with power posteriors, as we did for model \eqref{model:toy_normal_multivariate}. The reason for this is that in this example using power posteriors with parameter $\eta$ set to $\eta^*$ proposed by \cite{lyddon2019general} is equivalent with drawing samples from the Bayesian posterior for the correct model.

	\begin{figure}[!ht]
		\centering
		\includegraphics[width = 0.8\textwidth]{./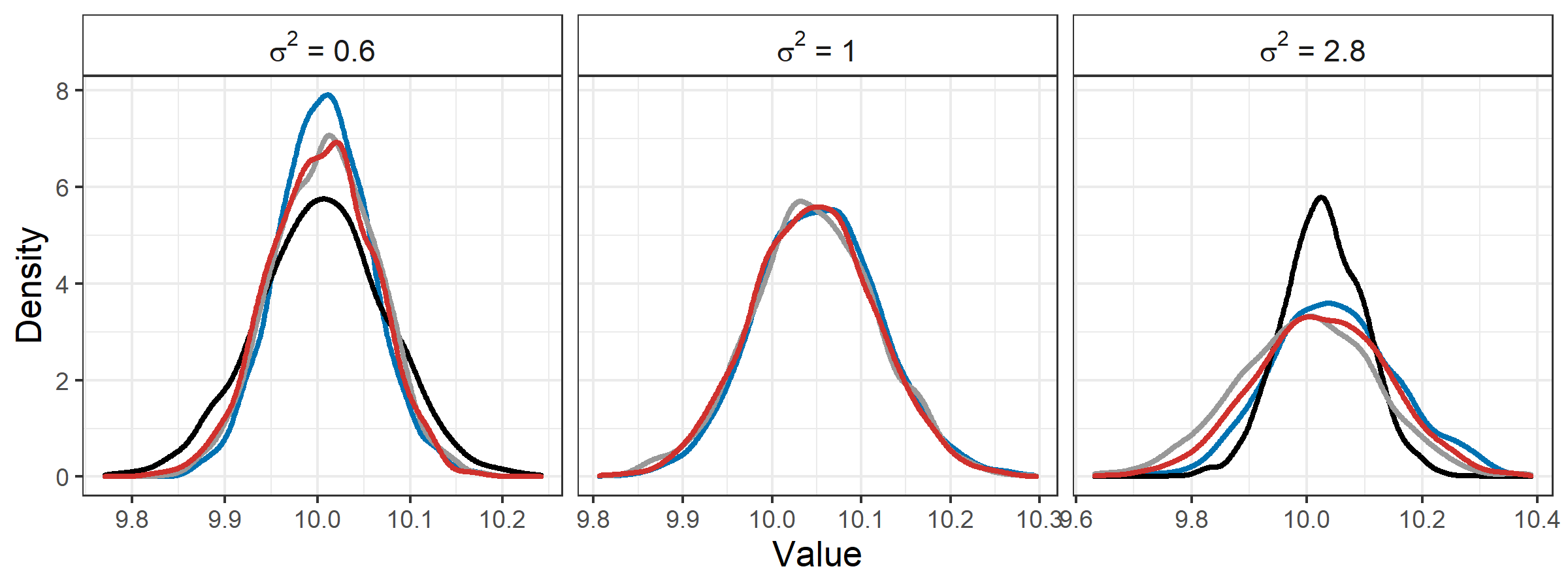}
		\caption{\small  Kernel density estimates of posterior distribution for $\theta$ based on a single synthetic dataset, under standard Bayesian inference for the misspecified model \eqref{model:toy_normal_normal_univariate} (black), Algorithm \ref{alg:pb_general} with $w_0= w_0^*$ (red), BayesBag (dark blue), and standard Bayesian inference for the correct model \eqref{model:toy_normal_normal_univariate_true} (grey). We consider three scenarios $\sigma^2 = 0.6$ (left panel), $\sigma^2 = 1$ (middle panel), and $\sigma^2 = 2.8$ (right panel). }\label{fig:normal_univariate_density_plots}
	\end{figure}

	\begin{figure}[!ht]
		\centering
		\includegraphics[width = 1\textwidth]{./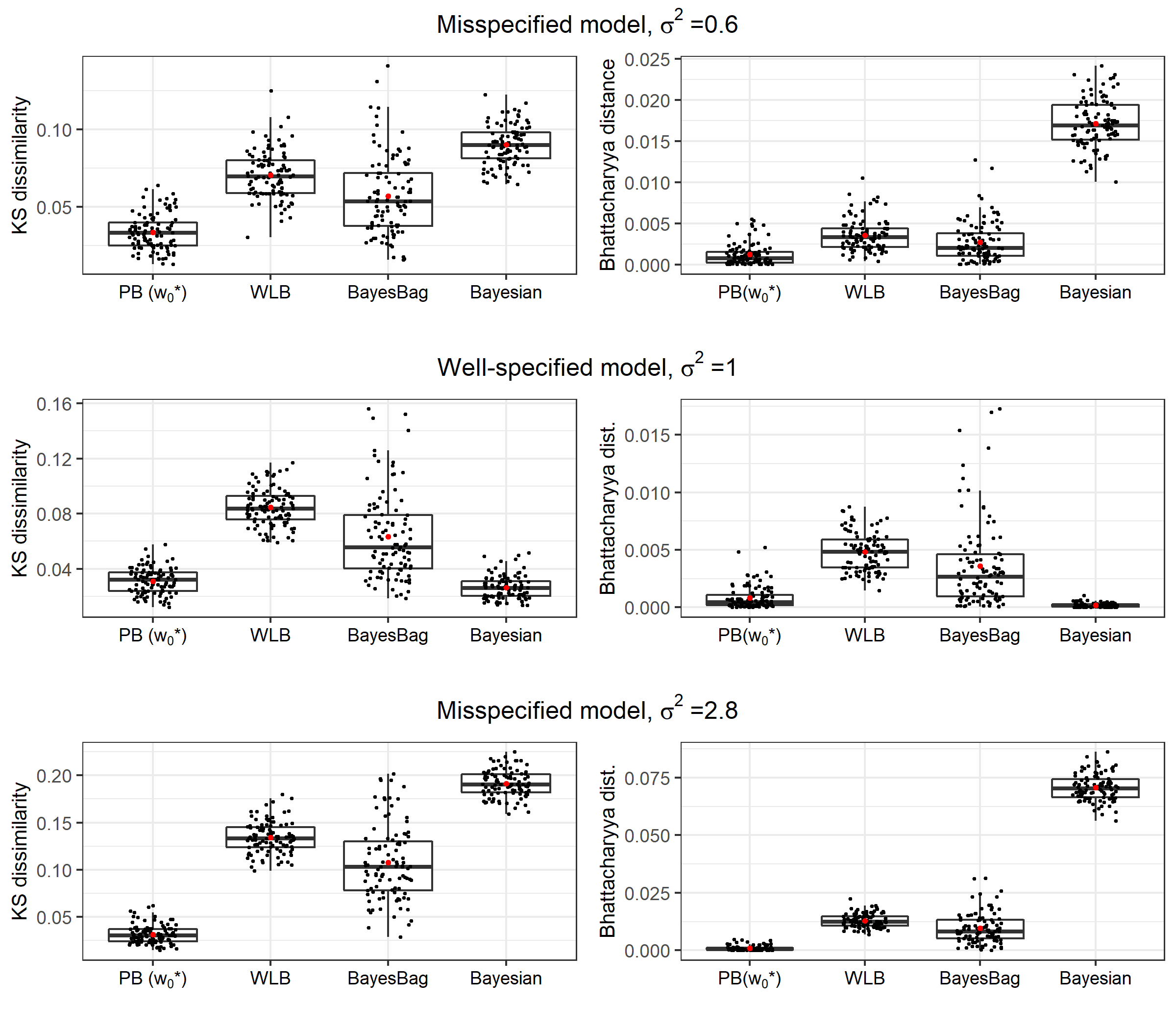}
		\caption{Kolmogorov--Smirnov dissimilarity (left panels) and Bhattacharryya distance (right panels) between samples from the Bayesian posterior for the correct model \eqref{model:toy_normal_univariate_true} and samples obtained using the following four methods:  (from left to right) Algorithm \ref{alg:pb_general} with $w_0 = w_0^*$, denoted by PB($w_0^*$), Weighted Likelihood Bootstrap (WLB), BayesBag and standard Bayesian posterior for model \eqref{model:toy_normal_normal_univariate}. Three scenarios are considered, $\sigma^2 = 0.6$ (top panels), $\sigma^2 =1$ (middle panels), and $\sigma^2 = 2.8$ (bottom panels). Each experiment was repeated on 100 simulated datasets for $N = 2000$.}\label{fig:normal_univariate_w0_boxplot_comparison}
	\end{figure}
	
	\FloatBarrier
	\subsection{Toy example of a hierarchical model}\label{supplementary:toy_example_hierarchical}
	We consider the following hierarchical model with parameters $\lambda$, $\theta_1, \theta_2, \theta_3$:
	\begin{gather}\label{model:toy_gamma_poisson}
	\begin{aligned}
	x_{k,1}, \ldots, x_{k, n_k} & \sim \text{Poisson}(\theta_k), &k=&1, 2, 3, \\ 
	\theta_1, \ldots, \theta_3 & \sim \text{Gamma}(\alpha, \lambda),  & k=&1, 2, 3 \\
	\lambda & \sim \text{Gamma}(\alpha_0, \beta_0). & \\
	\end{aligned}
	\end{gather}
 We consider two scenarios. In the first one data $x_{k,1}, \ldots, x_{k, n_k}$  is generated from the $\text{Poisson}(\theta_k)$ distribution with $\theta_1 =1$, $\theta_2 = 2$ and $\theta_3 =4$, so we have a well-specified model. In the second scenario the correct model is 
	\begin{gather}\label{model:toy_gamma_poisson_correct}
	\begin{aligned}
x_{k,1}, \ldots, x_{k, n_k} & \sim \text{NB}(\theta_k, 0.7), &k&=1, 2, 3, \\ 
\theta_1, \ldots, \theta_3 & \sim \text{Gamma}(\alpha, \lambda),  & k&=1, 2, 3 \\
\lambda & \sim \text{Gamma}(\alpha_0, \beta_0). & 
\end{aligned}
\end{gather}
where $\text{NB}(\theta_k, \omega)$ denotes the negative binomial distribution with mean $\theta_k$ and size $\omega$. Recall that variance of the $\text{NB}(\theta, \omega)$ distribution is $\theta + \theta^2/\omega$, so the negative binomial is often used for modelling overdispersed Poisson data. In the second scenario we generate $x_{k,1}, \ldots, x_{k, n_k}$ from $\text{NB}(\theta_k, 0.7)$, for  $\theta_1 =1$, $\theta_2 = 2$ and $\theta_3 =4$. In both scenarios we set $\alpha_0 = 9, \beta_0 = 3$ and $\alpha =2$, and the number of data points in each class is $n_1 = n_2 = n_3 = 100$. In this misspecified scenario we compare results obtained with different versions of Posterior Bootstrap performed on model \eqref{model:toy_gamma_poisson} with the Bayesian posterior for model \eqref{model:toy_gamma_poisson_correct}.
	
	\begin{figure}[!ht]
	\centering
	\includegraphics[width = 0.8\textwidth]{./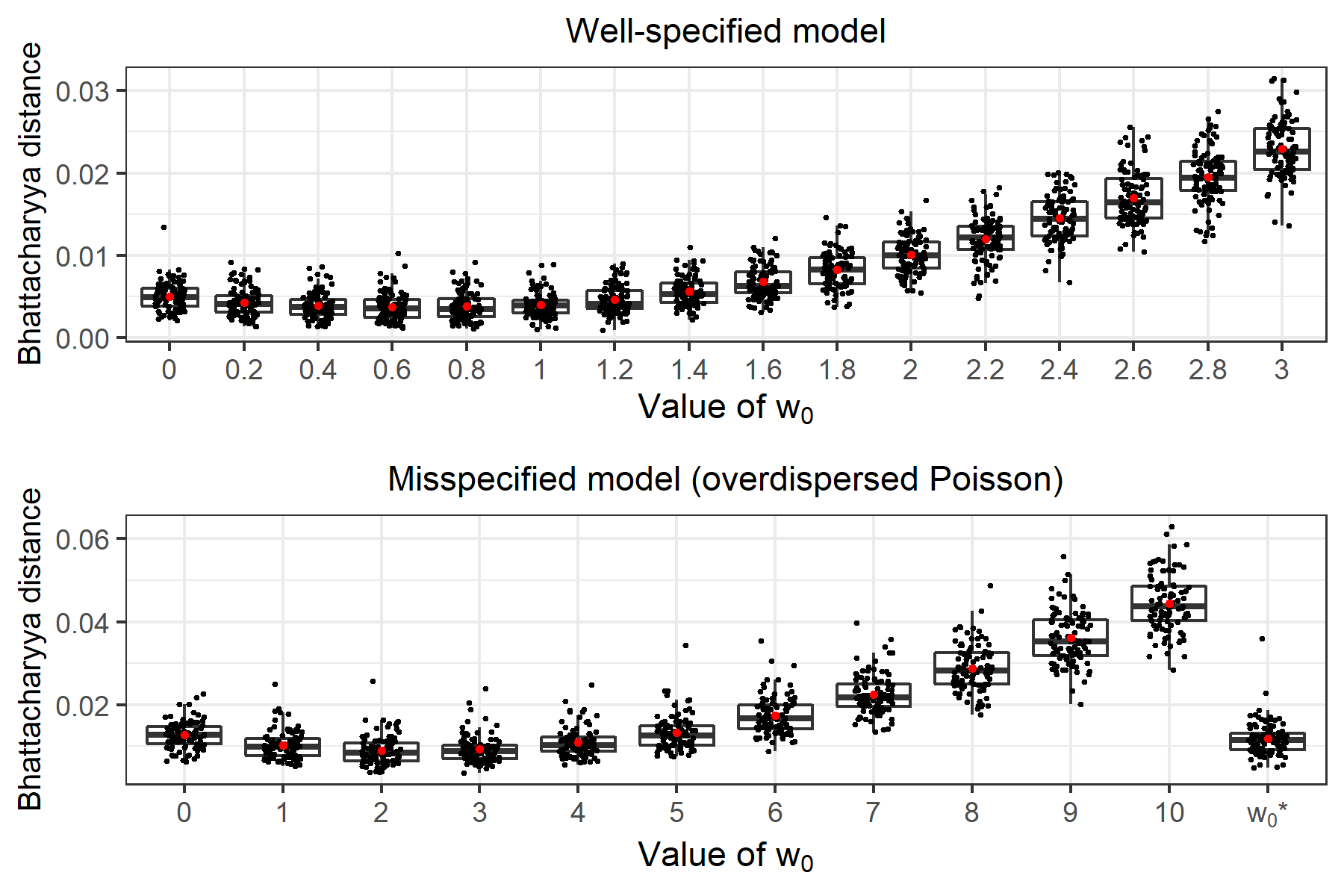}
	\caption{Bhattacharryya distance between samples from the Bayesian posterior for the correct model  and samples obtained using Algorithm \ref{alg:pb_hierarchical_gibbs} for a range of values of $w_0$. We set the same value of $w_0$ for each class, that is, we use $w_{0,[1]} = w_{0,[2]} =w_{0,[3]} = w_0$, for $w_0$ displayed on the $x$-axis of the plot.
		We consider two scenarios: well-specified (top panel), where \eqref{model:toy_gamma_poisson} is the correct model, and misspecified (bottom panel), for which \eqref{model:toy_gamma_poisson_correct} is the correct model. In case of the misspecified model we also include the results for  Algorithm~\ref{alg:pb_hierarchical_gibbs}  with $w_0^* = \left(w_{0,[1]}^*, w_{0,[2]}^*, w_{0,[3]}^*\right)$.
		Each experiment was repeated on 100 simulated datasets for $N = 2000$.}\label{fig:toy_Poisson_Bhat_w0_boxplot}
\end{figure}

We tested Algorithm \ref{alg:pb_hierarchical_gibbs} on this example for a range of values of $w_0$ and the results of these experiments are presented in Figure \ref{fig:toy_Poisson_Bhat_w0_boxplot}. As expected, in the well-specified case it is optimal to set $w_0$ to 1 for each class, whereas in the overdispersed Poisson case setting $w_0$ according to \eqref{eq:setting_w_0} for each class yields results that are closest to the posterior under the correct model.

For a fixed value of $\lambda$ model \eqref{model:toy_gamma_poisson} is conjugate with the effective sample size of the prior equal to $\lambda$. This motivates testing Posterior Bootstrap with pseudo-samples within Algorithm \ref{alg:pb_hierarchical_gibbs}, as mentioned in Remark \ref{remark:pseudosamples}. We expect to obtain best results setting parameter $c$ according to \eqref{eq:setting_c_conjugate_model} as
\begin{align}
c = (c_1, c_2, c_3) =\bar{\lambda} \left(w_{0,[1]}^*, w_{0,[2]}^*, w_{0,[3]}^* \right),
\end{align}
which is confirmed by the results of our experiments shown in Figure \ref{fig:toy_Poisson_Bhat_c_boxplot}.  The pseudo-samples approach appears to have a slight advantage over the penalization-based approach on this example, as shown in Figure \ref{fig:toy_Poisson_comparison_boxplot}.  In particular, both methods perform well when it comes to inference on $\lambda$, as shown in Figure \ref{fig:hierarchical_poisson_density_plot}. Both Posterior Bootstrap approaches give significantly better results than standard Bayesian inference in the misspecified case, as shown in the right panel of Figure \ref{fig:toy_Poisson_comparison_boxplot}.

	\begin{figure}
		\centering
		\includegraphics[width = 0.8\textwidth]{./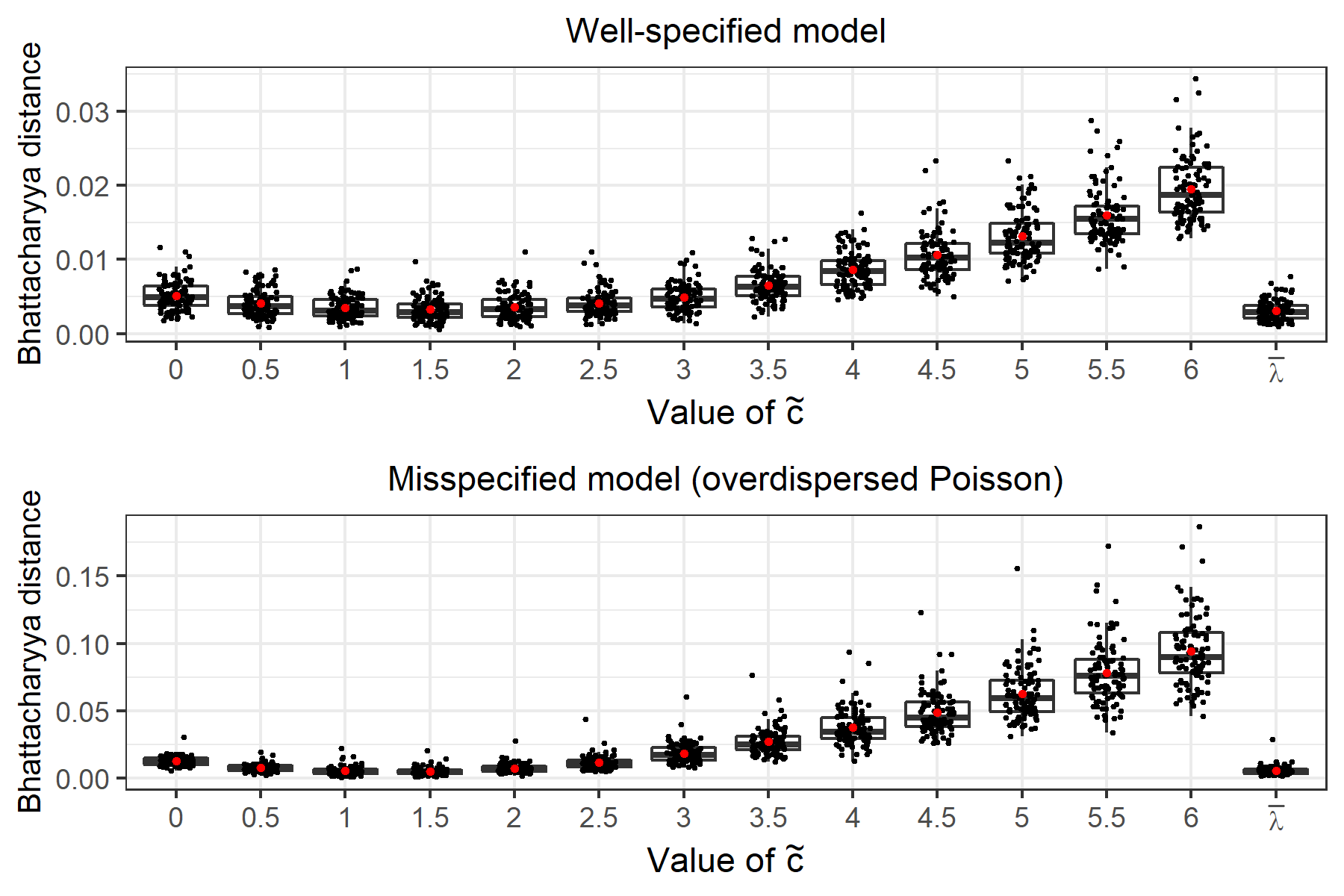}
		\caption{ Bhattacharryya distance between samples from the Bayesian posterior for the correct model  and samples obtained using Posterior Bootstrap with pseudo-samples for a range of values of $\tilde{c}$. We use the following notation: parameter $c$ in Algorithm \ref{alg:pb_pseudosamples_prior} is set to 
			$
			c = (c_1,c_2,c_3) = \tilde{c} \left(w_{0,[1]}^*, w_{0,[2]}^*, w_{0,[3]}^* \right).
			$
			We also present results for $\tilde{c} = \bar{\lambda}$, which is equivalent to setting $c$ according to \eqref{eq:setting_c_conjugate_model}.
			We consider two scenarios: well-specified (top panel), where \eqref{model:toy_gamma_poisson} is the correct model, and misspecified (bottom panel), for which \eqref{model:toy_gamma_poisson_correct} is the correct model. Each experiment was repeated on 100 simulated datasets for $N = 2000$.}\label{fig:toy_Poisson_Bhat_c_boxplot}
	\end{figure}
	\begin{figure}[!ht]
	\centering
	\includegraphics[width = 1\textwidth]{./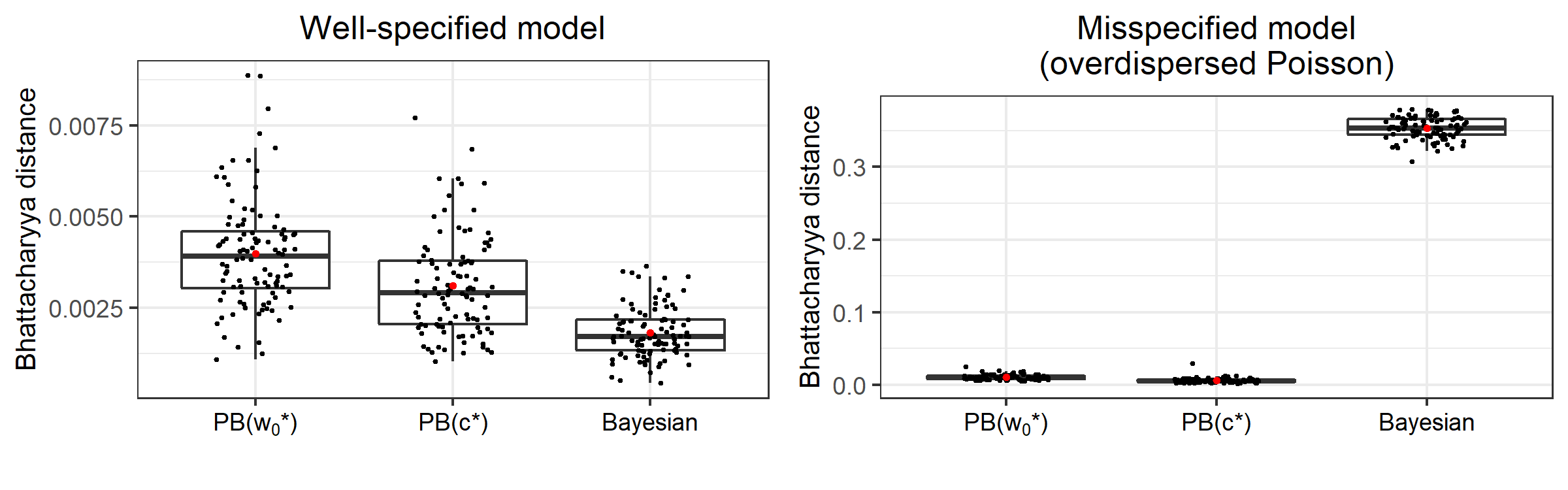}
	\caption{Bhattacharryya distance between samples from the Bayesian posterior for the correct model  and samples obtained using four different methods: (from left to right): Algorithm \ref{alg:pb_hierarchical_gibbs} with $w_0$ set according to \eqref{eq:setting_w_0} for each class, Posterior Bootstrap with pseudo-samples with $c$ set according to \eqref{eq:setting_c_conjugate_model}, and standard Bayesian posterior for model \eqref{model:toy_gamma_poisson}.
		We consider two scenarios: well-specified (left panel), where \eqref{model:toy_gamma_poisson} is the correct model, and misspecified (right panel), for which \eqref{model:toy_gamma_poisson_correct} is the correct model. Each experiment was repeated on 100 simulated datasets for $N = 2000$. Note the difference in the scales of the $y$-axes.}\label{fig:toy_Poisson_comparison_boxplot}
\end{figure}
	
		\begin{figure}
		\centering
		\includegraphics[width = 0.8\textwidth]{./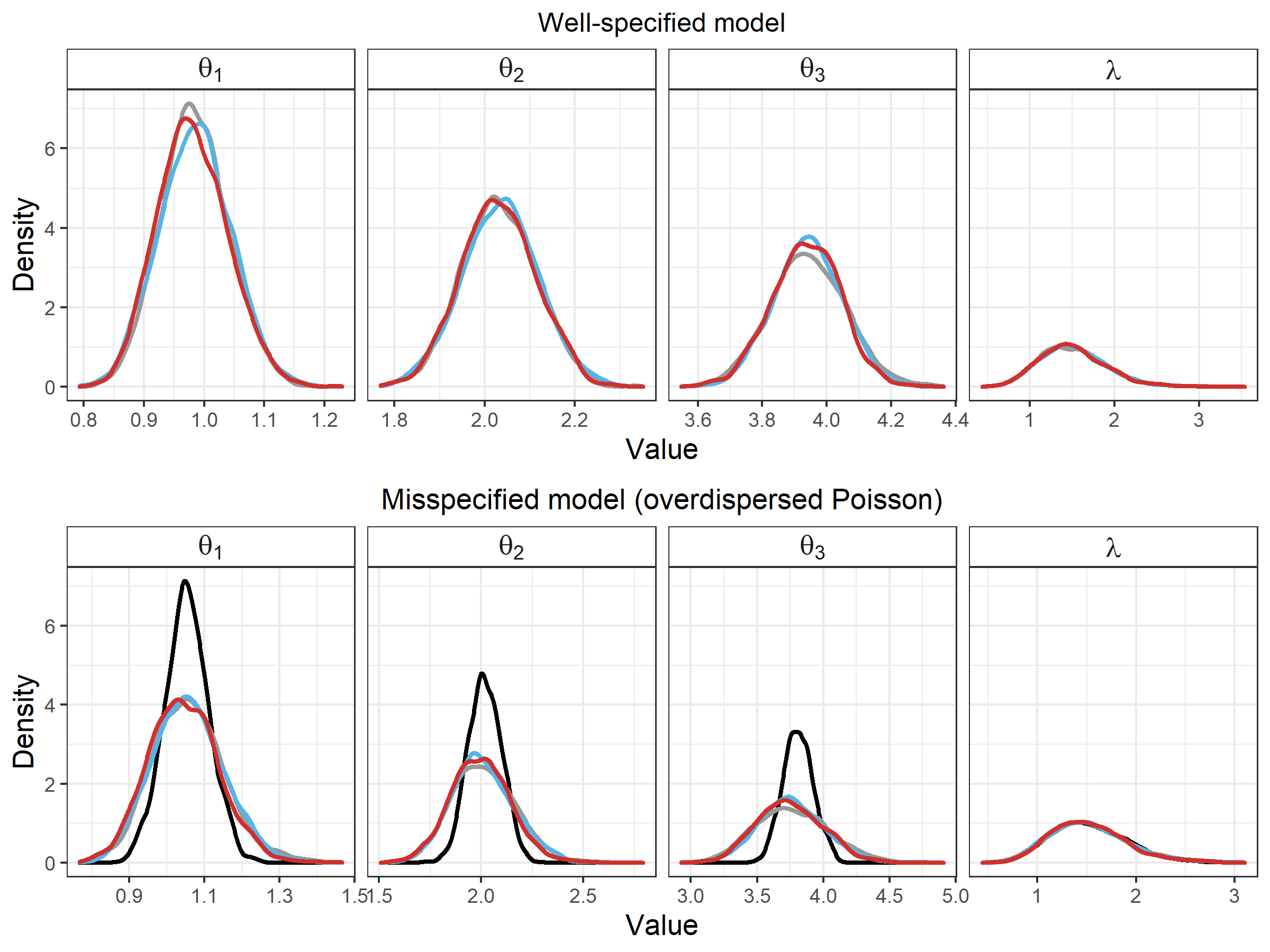}
		\caption{\small  Kernel density estimates of posterior distribution for $\theta_1, \theta_2, \theta_3$ and $\lambda$ based on a single synthetic dataset, under standard Bayesian inference for  model \eqref{model:toy_gamma_poisson} (black), Algorithm \ref{alg:pb_hierarchical_gibbs} with $w_0= w_0^*$ (red), Posterior Bootstrap with pseudo-samples with $c$ set according to \eqref{eq:setting_c_conjugate_model} (light blue). We consider two scenarios: well-specified (top panel) and misspecified (bottom panel). For the misspecified case we include also the kernel density estimate for standard Bayesian inference under the correct model \eqref{model:toy_gamma_poisson_correct} (grey).	
	}\label{fig:hierarchical_poisson_density_plot}\label{fig:toy_poisson_density_plots}
	\end{figure}

\FloatBarrier
	\subsection{Additional information on the Dirichlet allocation model}\label{supplement:dirichlet_allocation}
	We used Algorithm~\ref{alg:dirichlet} to obtain results described in Section~\ref{section:dirichlet_example}. By $g(\theta|\lambda)$ we denote here the density of the Dirichlet distribution with parameter $\lambda$. We let $f(x_{k,i}|\theta)$ denote the probability mass function of a $\text{Multinomial}(\theta, 1)$ distribution, where $x_{k,i} = p$ for $p\in \{1, \ldots, 6\}$ denotes that the reason of death for person $i$ in group $k$ was overdosing drug of type $p$, for $k = 1, \ldots, K$ and $i = 1, \ldots, n_k$.
	\begin{algorithm}[!ht]
		\caption{Posterior Bootstrap for the Dirichlet allocation model \eqref{model:Dirichlet_allocation}}\label{alg:dirichlet}
		\begin{algorithmic}[1]	
			\For {$k = 1, \ldots, K_n$} 
			\State{ Set $\hat{\theta}_k = \left(m_{k1}/n_k, \ldots, m_{k6}/n_k\right)$.}
			\EndFor
			\For {j = 1, \ldots, N} \Comment{\textbf(in parallel) }
			\State{Draw $z_1^{(j)}, \ldots, z_{K_n}^{(j)} \sim\text{Exp}(1)$.}
			\State{Set $\bar{\lambda}^{(j)}  =  \argmax_{\lambda} \left\{ \sum_{k=1}^{K_n} z_k^{(j)} \log g(\hat{\theta}_k| \lambda) + w_{g}^T \log p\left(\lambda\right)\right\}$.}
			\State{Set $c = \bar{\lambda}^{(j)}_1 + \cdots + \bar{\lambda}^{(j)}_6$.}
			\For {$k = 1, \ldots, K_n$} 
			\State{Draw $\bar{\theta}_{k,1}^{(j)}, \ldots, \bar{\theta}_{k,T}^{(j)}\sim \text{Dirichlet} (\bar{\lambda})$.}
			\State{Generate prior pseudo-samples $x_{k,n_k+i}^{(j)} \sim \text{Multinomial}\left(\bar{\theta}_{k,i}^{(j)}, 1\right)$ for $i =1, \ldots, T$.}
			\State{Draw $w_1^{(j)}, \ldots, w_{n_k}^{(j)} \sim\text{Exp}(1)$ and $w_{n_k+1}^{(j)}, \ldots, w_{n_k+T}^{(j)} \sim\text{Exp}(T/c)$.}
			\State{ Set $\tilde{\theta}_k^{(j)} = \argmax_{\theta_k} \left\{ \sum_{i=1}^{n_k} w_i^{(j)} \log f(x_{k,i}| \theta_k) + \sum_{i=n_k +1}^{n_k+T} w_i^{(j)} \log f\left(x_{k,i}^{(j)}| \theta_k\right)\right\}$.}
			\EndFor
			\State{Draw $v_1^{(j)}, \ldots, v_{K_n}^{(j)} \sim\text{Exp}(1)$.}
			\State{Set $\tilde{\lambda}^{(j)} =  \argmax_{\lambda} \left\{ \sum_{k=1}^{K_n} v_k^{(j)} \log g(\tilde{\theta}_k^{(j)}| \lambda) + w_{g}^T \log p\left(\lambda\right) \right\}$.}
			\EndFor
			\State \Return {Samples $\Big\{ \left(\tilde{\lambda}^{(j)}, \tilde{\theta}_1^{(j)}, \ldots, \tilde{\theta}_K^{(j)}\right) \Big\}_{j=1}^N$. }
		\end{algorithmic}
	\end{algorithm}
	
	We additionally tested Algorithm~\ref{alg:dirichlet} on a synthetic dataset generated from model \eqref{model:Dirichlet_allocation}with $\lambda = (12,12,12,10,10,10)$, $K=300$ and $n_k = 1000$ for $k = 1, \ldots,K$.  Figure~\ref{fig:synthetic_dirichlet_density_plot} shows that Bayesian inference and Algorithm~\ref{alg:dirichlet} return very similar results, which suggests the difference between the two methods in Figure \ref{fig:opioid_density_plot} is caused by model misspecification. In the same figure we plotted kernel density estimates of $\bar{\lambda}$. We notice that despite the fact that asymptotically $\bar{\lambda}$ and $\tilde{\lambda}$ have the same distributions, for this example $\tilde{\lambda}$ performs much better. We also observe that, as expected, using more concentrated priors causes a shift towards 0 in posterior distributions.
	
		\begin{figure}[!ht]
		\centering
		\includegraphics[width = 1\textwidth]{./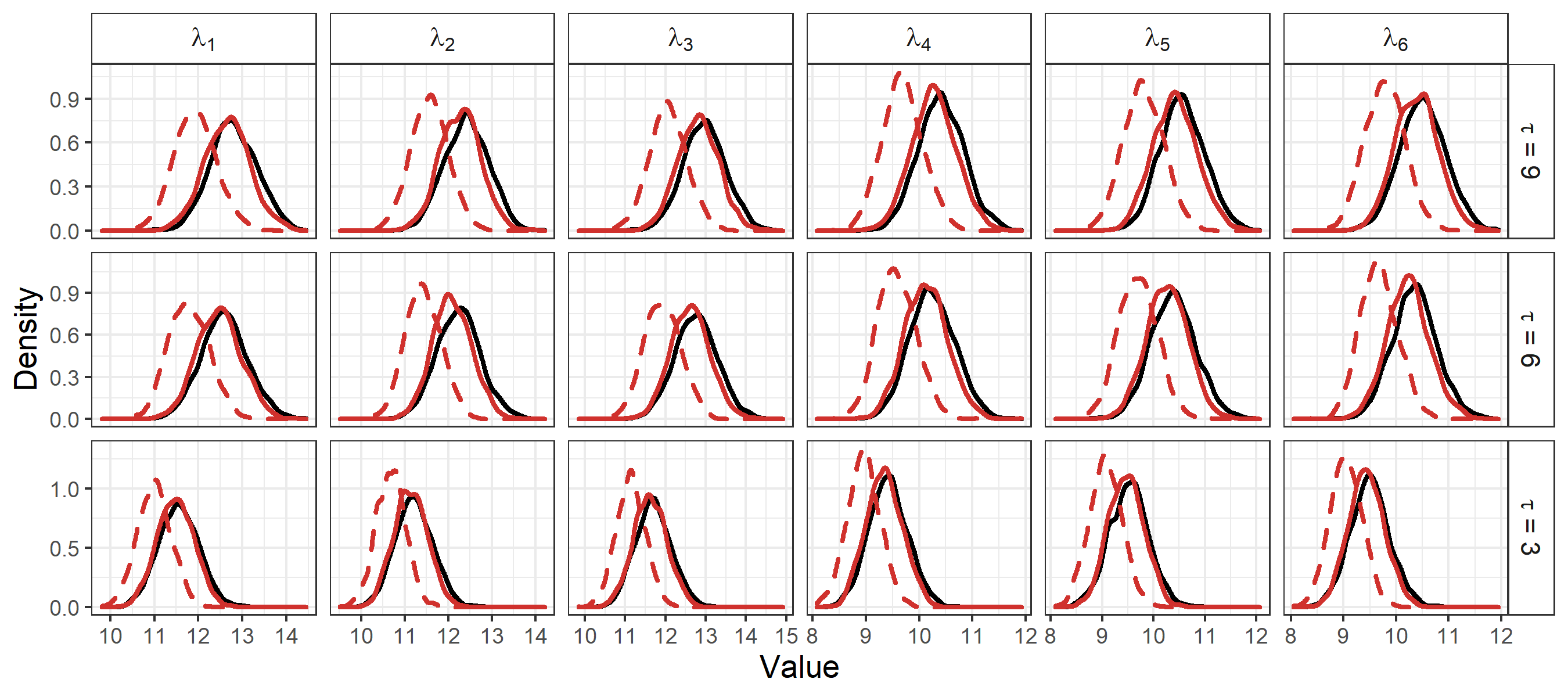}
		\caption{\small Kernel density estimates of posterior distribution on concentration parameter $\lambda$ for a synthetic dataset generated from model \eqref{model:Dirichlet_allocation} with $\lambda = (12,12,12,10,10,10)$,  $K=300$ and $n_k = 1000$ for $k = 1, \ldots,K$. We present results for Bayesian inference (black), $\tilde{\lambda}$ drawn from Algorithm \ref{alg:dirichlet} (red solid line) and $\bar{\lambda}$ drawn from Algorithm \ref{alg:dirichlet} (red dashed line).}\label{fig:synthetic_dirichlet_density_plot}
	\end{figure}

	\FloatBarrier
	\bibliographystyle{plainnat}
	\bibliography{../references_pb}
\end{document}